 \newcommand{\bs}{\bigskip}
 \newcommand{\ms}{\medskip}
 \newcommand{\n}{\noindent}
 \newcommand{\s}{\smallskip}
 \newcommand{\hs}[1]{\hspace*{ #1 mm}}
 \newcommand{\vs}[1]{\vspace*{ #1 mm}}
 \newcommand{\setempty}{\mathrm{\O}}
 \newcommand{\real}{\mathbb{R}}
 \newcommand{\nat}{\mathbb{N}}
 \newcommand{\integer}{\mathbb{Z}}
 \newcommand{\complex}{\mathbb{C}}
 \newcommand{\prob}{{\mathrm{Prob}}}
 \newcommand{\ie}{\textrm{i.e.},\hspace*{2mm}}
 \newcommand{\eg}{\textrm{e.g.},\hspace*{2mm}}
 \newcommand{\HH}{{\cal H}}
 \newcommand{\YY}{{\cal Y}}
 \newcommand{\p}{\mathrm{P}}
 \newcommand{\bpp}{\mathrm{BPP}}
 \newcommand{\bqp}{\mathrm{BQP}}
 \newcommand{\onebqlin}{1\mbox{-}\mathrm{BQLIN}}
 \newcommand{\onebplin}{1\mbox{-}\mathrm{BPLIN}}
 \newcommand{\onedlin}{1\mbox{-}\mathrm{DLIN}}
 \newcommand{\oneqfa}{1\mathrm{QFA}}
 \newcommand{\onerfa}{1\mathrm{RFA}}
 \newcommand{\oneqfastar}{1\mathrm{QFA}^{\!\diamond}\!}
 \newcommand{\reg}{\mathrm{REG}}
 \newcommand{\cfl}{\mathrm{CFL}}
 \newcommand{\dcfl}{\mathrm{DCFL}}
 \newcommand{\matrices}[4]{\left( \begin{array}{cc} #1 & #2 \\%
      #3 & #4   \end{array}\right)}
 \newcommand{\ninematrices}[9]{\left( \begin{array}{ccc}
      #1 & #2 & #3 \\%
      #4 & #5 & #6 \\%
      #7 & #8 & #9    \end{array}\right)}
 \newcommand{\smallcomb}[2]{\left(\:\begin{subarray}{c} #1 \\%
      #2 \end{subarray} \right)}
\theoremstyle{plain}
 \newtheorem{theorem}{Theorem}[section]
 \newtheorem{lemma}[theorem]{Lemma}
 \newtheorem{proposition}[theorem]{Proposition}
 \newtheorem{corollary}[theorem]{Corollary}
 \newtheorem{claim}{Claim}
 \newenvironment{proof}{\par \noindent
            {\bf Proof. \hs{2}}}{\hfill$\Box$ \vspace*{3mm}}
 \newenvironment{proofof}[1]{\vspace*{5mm} \par \noindent
         {\bf Proof of #1.\hs{2}}}{\hfill$\Box$ \vspace*{3mm}}
 \newcommand{\qed}{\hfill$\Box$  \vspace*{3mm}}
 \newcommand{\ceilings}[1]{\lceil #1 \rceil}
 \newcommand{\floors}[1]{\lfloor #1 \rfloor}
 \newcommand{\qubit}[1]{| #1 \rangle}
 \newcommand{\bra}[1]{\langle #1 |}
 \newcommand{\ket}[1]{| #1 \rangle}
 \newcommand{\braket}[2]{\langle #1 | #2 \rangle}
\newcommand{\ignore}[1]{}
\newcommand{\track}[2]{[\:\begin{subarray}{c} #1 \\%
      #2 \end{subarray} ]}
\newcommand{\cent}{{|}\!\!\mathrm{c}}
\newcommand{\dollar}{\$}
\begin{document}
\pagestyle{plain}
\setcounter{page}{1}

\begin{center}
{\Large {\bf One-Way Reversible and Quantum Finite Automata with Advice}}\footnote{An extended abstract appeared in the Proceedings of
the 6th International Conference on Language and Automata Theory and Applications (LATA 2012), March 5--9, 2012, A Coru\~{n}a, Spain,
 Lecture Notes in Computer Science, Springer-Verlag, Vol.7183, pp.526--537, 2012.  This work was partly supported by the Mazda Foundation and the Japanese Ministry of Education, Science, Sports, and Culture.} \bs\ms\\

{\sc Tomoyuki Yamakami}\footnote{Affiliation: Department of
Information Science, University of Fukui, 3-9-1 Bunkyo, Fukui 910-8507,  Japan} \bs\\
\end{center}

\begin{quote}
\n{\bf Abstract:}
We examine the characteristic features of reversible and quantum computations in the presence of supplementary external information, known as advice.
In particular, we present a simple, algebraic characterization of languages recognized by one-way reversible finite automata augmented with deterministic advice. With a further  elaborate argument, we prove
a similar but slightly weaker result for bounded-error one-way quantum finite automata with advice.
Immediate applications of those properties lead to
containments and separations among various language families when they are assisted by appropriately chosen advice. We further demonstrate the power and limitation of randomized advice and quantum advice when they are given to one-way quantum finite automata.

\s

\n{\bf Keywords:} reversible finite automaton, quantum finite automaton,
regular language, context-free language, randomized advice,
quantum advice, rewritable tape
\end{quote}

\section{Background, Motivation, and Challenge}\label{sec:introduction}

In a wide range of the past literature, various notions of supplemental external  information have been sought to empower automated computing devices and the  power and limitation of such extra information have been  studied extensively. In the early 1980s, Karp and Lipton  \cite{KL82} investigated a role of simple external information, known as {\em (deterministic) advice}, which encodes  useful  data, given in parallel with a standard input, into a single string (called an {\em advice string}) depending only on the size of the input.
Such advice has been since then widely used for polynomial-time Turing machines, particularly, in connection to non-uniform circuit families.
When {\em one-way deterministic finite automata} (or 1dfa's, in short) are  concerned,  Damm and Holzer \cite{DH95}
first studied such advice whose advice string is given ``next to'' an ordinary input string written on a single input tape. By contrast, Tadaki, Yamakami, and Li \cite{TYL10} provided 1dfa's with advice ``in dextroposition with''
an input string, simply by splitting an input tape into two tracks,  in which the upper track
carries a given input string and the lower track holds an advice string.
Using the latter model of advice, a series of recent studies \cite{TYL10,Yam08,Yam09,Yam10,Yam11} concentrating on the strengths and weaknesses of the advice have unearthed advice's delicate roles for various types of underlying one-way finite automata.
Notice that these ``advised'' automaton models have immediate connections to other important fields, including
one-way communication, random access coding, two-player zero-sum games, and pseudorandom generator.
Two central questions concerning the advice are: how can we encode necessary information into a piece of advice before a computation starts and, as a computation proceeds step by step,  how can we decode and utilize such information stored inside the advice?
Whereas  there is rich literature on the power and limitation of advice  for a model of polynomial-time quantum Turing machine (see, for instance, \cite{Aar05,NY04b,Raz09}), disappointingly, except for the aforementioned studies, little has been known to date for the roles of advice when it is given to finite automata.
To promote our understandings of the advice, we intend to expand a scope of our study from 1dfa's to one-way reversible and quantum finite automata.

{}From theoretical as well as practical interests, we wish to examine two machine models realizing reversible and quantum computations, known as (deterministic) reversible finite automata and quantum finite  automata.
Since our objective is to analyze the roles of various forms of advice, we
want to choose simpler models for reversible and quantum computations in order to make our analysis easier.
Of various types of such automata, we intend to initiate our study by limiting our focal point within one of the simplest automaton models:  {\em one-way (deterministic) reversible finite automata} (or 1rfa's, in short) and
{\em one-way measure-many quantum finite automata}
(or 1qfa's, thereafter).
Although these particular models are known to be strictly weaker in computational power than even regular languages, they still embody an essence of reversible and quantum mechanical computations for which the advice can play a significantly important role.
Our 1qfa scans each cell of a read-only input tape by moving a single tape head only in one direction (without stopping) and performs a {\em (projective) measurement} immediately after every head move, until the tape head eventually scans the right endmarker.
{}From a theoretical perspective, the 1qfa's having more than $7/9$ success probability are essentially as powerful as 1rfa's \cite{AF98}, and therefore 1rfa's are important part of 1qfa's.
As this fact indicates, for bounded-error 1qfa's, it is not always possible to make a sufficient amplification of success probability. This is merely one of many intriguing features that make an analysis of the 1qfa's distinct from that of polynomial-time quantum Turing machines, and it is such remarkable features that have kept stimulating our research since their
introduction in late 1990s.
Let us recall some of the numerous unconventional features that
have been revealed in an early period of intensive study of the 1qfa's.
As Ambainis and Freivalds \cite{AF98} demonstrated, certain quantum finite automata can be built more state-efficiently than deterministic finite automata.  However, as Kondacs and Watrous \cite{KW97} proved, not all  regular languages are  recognized  with bounded-error probability by 1qfa's.  Moreover,
by Brodsky and Pippenger \cite{BP02}, no bounded-error 1qfa recognizes  languages accepted by minimal finite automata that lack a so-called {\em partial order condition}. The latter two facts suggest that
the language-recognition power
of 1qfa's is hampered by their own inability to generate useful quantum states from input information alone.

We wish to understand how advice can change the nature of 1rfa's and 1qfa's.
For a bounded-error 1qfa, for instance, an immediate advantage of taking  advice is the elimination of the {\em both} endmarkers placed on the 1qfa's  read-only input tape.
Beyond such a clear advantage, however, there are numerous challenges lying in the study of the roles of the advice.
To analyze the behaviors of ``advised'' 1qfa's as well as ``advised''
1rfa's, we must face those challenges and eventually overcome them.
Generally speaking, the presence of advice tends to make an analysis of underlying computations quite difficult and it often demands quite different kinds of proof techniques. As a quick example, a standard {\em pumping lemma}---a typical proof technique that showcases the non-regularity of a given language---is not quite serviceable to advised computations; therefore, we have already developed other useful tools (e.g.,  a swapping lemma \cite{Yam08}) for them.
In similar light, certain advised 1qfa's fail to meet the aforementioned partial order condition (Lemma \ref{partial-order-cond}) and, unfortunately, this fact makes a proof technique of Kondacs and Watrous \cite{KW97} inapplicable to, for example,  a class separation between advised regular languages and languages accepted by bounded-error advised 1qfa's.

To overcome foreseen difficulties in out study, our first task must be to lay out a necessary ground work in order to (1) capture fundamental features of those automata when advice is given to boost their language-recognition power and (2) develop methodology necessary to lead to  collapses and separations of advised language families.
It is the difficulties surrounding the advice for 1qfa's that motivate us to seek different kinds of proof techniques.


In Sections \ref{sec:QFA-theorem} and \ref{sec:N-S-condition-1RFA}, we will prove two main theorems. In the first main theorem (Theorem \ref{oneqfa-character}), with an elaborate argument using a new metric vector space called $\YY_{\HH}$, we will show a machine-independent, algebraic necessary condition for languages to be recognized by bounded-error 1qfa's that take appropriate deterministic advice.  In the second theorem (Theorem \ref{onerfa-characterization}) for 1rfa's augmented with deterministic advice, we will give a completely machine-independent, algebraic necessary and sufficient condition. These two conditions exhibit certain behavioral characteristics of 1rfa's and 1qfa's when appropriate advice is prepared. Our proof techniques for 1qfa's, for instance,  are quite different from the previous work \cite{AF98,ANTV02,BP02,KW97,MC00}.
Applying those theorems further, we can prove several class separations among advised language families.
These separations indicate, to some extent, inherent strengths and weaknesses of reversible and quantum computations even in the presence of advice.

Another important revelation throughout our study  in the field of reversible and quantum computation is the excessive power of {\em randomized advice} over deterministic advice. In randomized advice \cite{Yam10}, advice strings of a fixed length are generated at random according to a pre-determined  probability distribution so that a finite automaton looks like ``probabilistically''  processing those generated advice strings together with a standard input.
{\em Quantum advice} further extends randomized advice; however,
our current model of 1qfa with ``read-only'' advice strings inherently has a structural limitation, which prevents quantum advice from being more resourceful than randomized advice.
Hence, we will engage in another challenging task of seeking a ``simple''  modulation of the existing
1qfa's in order to  utilize  more effectively quantum information stored in quantum advice. We will discuss in Section \ref{sec:rewritable-1qfa} how to remedy the deficiency of the current 1qfa model and which direct implications such a remedy leads to. Similar treatments were already made for various types of one-way quantum finite automata in, e.g., \cite{Pas00,YFS+12}. The model of 1qfa itself has been also extended in various directions, including {\em interactive proof systems}  \cite{NY04a,NY09,NY14,Yam14}.


\begin{figure}[t]
\begin{center}
\includegraphics*[width=8.0cm]{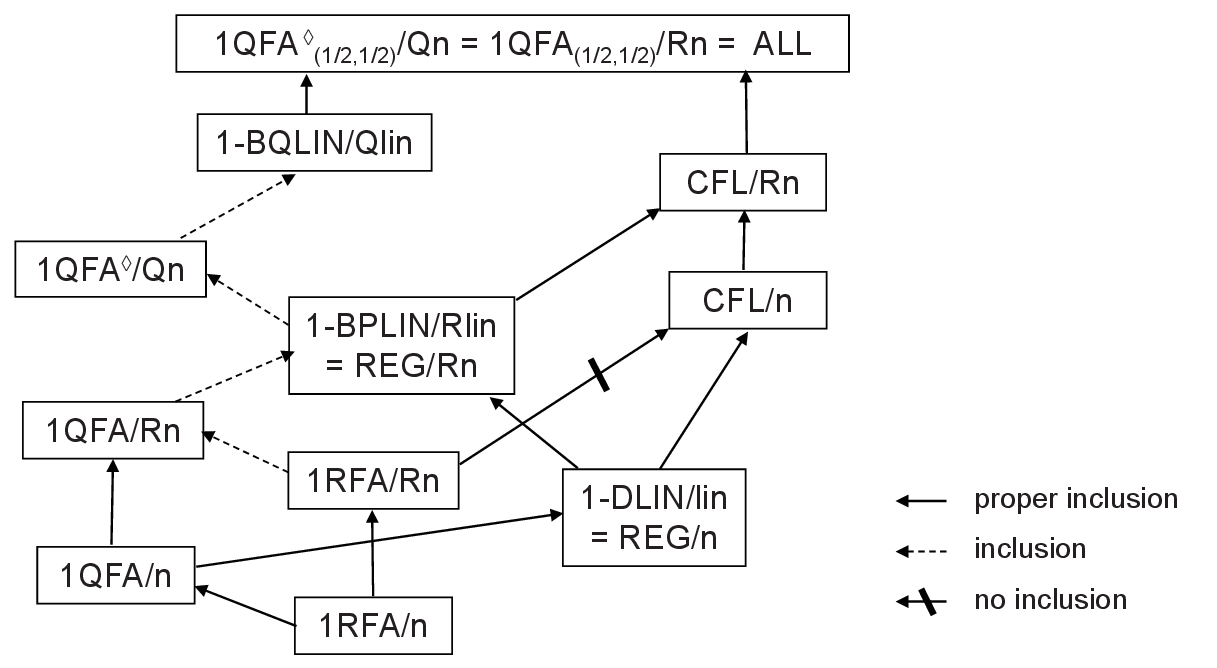}
\caption{{\small A hierarchy of advised language families. All containments and separations associated with quantum finite automata, reversible automata, and quantum Turing machines are newly proven in this paper. All dotted arrows indicate that the associated containments are not known to be proper.}}\label{fig:hierarchy}
\end{center}
\end{figure}


\paragraph{A Quick Overview of Relations among Advised Language Families}
As summarized in Fig.~\ref{fig:hierarchy}, we obtain containments and separations of new advised language families in direct comparison with   existing classical advised language families.
Our main theorems are particularly focused on two language families:
the family $\onerfa$ of all languages accepted by 1rfa's and
the family $\oneqfa$ of languages recognized by 1qfa's with bounded-error probability.
Associated with these language families, we will introduce their corresponding advised language families\footnote{To clarify the types of advice, we generally use the following specific suffixes. The suffixes ``$/n$'' and ``$/Rn$'' respectively indicate the use of deterministic advice and randomized advice of input size, whereas  ``$/lin$'' and ``$/Rlin$'' respectively indicate the use of deterministic advice and randomized advice of linear size. Similarly, ``$/Qn$'' and ``$/Qlin$'' indicate the use of quantum advice of input size and of linear size.}: $\onerfa/n$, $\onerfa/Rn$, $\oneqfa/n$,  $\oneqfa/Rn$, and $\oneqfastar/Qn$, except that  $\oneqfastar/Qn$ uses a slightly relaxed 1qfa model\footnote{Such a relaxation does not affect classical advice families. For example, $\reg^{\diamond}\!/n=\reg/n$ holds.} discussed earlier.
In Fig.~\ref{fig:hierarchy},
``$\mathrm{ALL}$'' indicates the collection of {\em all} languages.
Language families  $\cfl$ (context-free) and $\reg$ (regular) are respectively based on classical one-way finite automata with stacks and with no stacks. Moreover, language families $\onedlin$ (deterministic), $\onebplin$ (bounded-error probabilistic), and $\onebqlin$ (bounded-error quantum) \cite{TYL10}, which are  viewed respectively as ``scaled-down'' versions of the well-known complexity classes $\p$, $\bpp$, and $\bqp$, are based on the models of one-tape  one-head two-way off-line Turing machines running in ``linear time,'' in the sense of a so-called {\em  strong definition} of running time (see \cite{Mic91,TYL10}).
Supplementing various types of advice to those families introduces the following advised language families: $\reg/n$ \cite{TYL10}, $\cfl/n$ \cite{Yam08}, $\reg/Rn$ \cite{Yam10}, $\cfl/Rn$ \cite{Yam10}, $\onedlin/lin$ \cite{TYL10}, $\onebplin/Rlin$  \cite{Yam10}, and $\onebqlin/Qlin$. The interested reader may refer to \cite{TYL10,Yam09,Yam11} for other advice language families not listed in Fig.~\ref{fig:hierarchy}.

\section{Basic Terminology}\label{sec:basics}

Let $\real$ (resp., $\complex$) denote the set of all {\em real numbers} (resp., all {\em complex numbers}). Moreover, we write $\nat$ for the set of all {\em natural numbers} (\ie nonnegative integers) and set $\nat^{+}=\nat-\{0\}$. Given any two integers $m,n$ with $m\leq n$, the notation $[m,n]_{\integer}$ expresses the integer interval $\{m,m+1,m+2,\ldots,n\}$. In particular, we set $[n]$ to be $[1,n]_{\integer}$ if $n\geq1$.
An {\em alphabet} is a nonempty finite set of ``symbols'' (or ``letters''). For any alphabet $\Sigma$, a {\em string} (or a word) over $\Sigma$ is a finite sequence of symbols in $\Sigma$ and $|x|$ denotes the {\em length} of string $x$ (\ie the total number of occurrences of symbols in $x$). Let $\Sigma^*$ be composed of all strings over $\Sigma$. The {\em empty string} is always
denoted by $\lambda$.
For any string $x$ and any number $i\in[0,|x|]_{\integer}$, $Pref_{i}(x)$ expresses a unique string $u$ satisfying both $x=uv$ and $|u|=i$ for a certain string $v$; moreover, we set $Pref_{i}(x)=x$ for any index $i>|x|$.


For convenience, we abbreviate as 1dfa (resp., 1npda) a {\em one-way deterministic finite automaton} (resp., {\em one-way nondeterministic pushdown automaton}).
For ease of our later analysis, we ``explicitly'' assume, unless otherwise stated, that (1) every finite automaton is equipped with a single read-only input tape on which
each input string is initially surrounded by two endmarkers (the left endmarker $\cent$ and the right endmarker $\dollar$), (2)  every finite automaton has a single tape head that is initially situated at the left endmarker, and (3) every finite automaton moves its tape head rightward {\em without stopping}  until the automaton finally enters any ``halting'' inner state.
For a later reference, we formally define a 1dfa as a sextuple $M = (Q,\Sigma,\delta,q_0,Q_{acc},Q_{rej})$, where $Q$ is a finite set of {\em inner states}, $\Sigma$ is an input alphabet, $\delta:Q\times\check{\Sigma}\rightarrow Q$ is a transition function,\footnote{In this deterministic case, it may be more appropriate to define $\delta$ as a map from $(Q-Q_{halt})\times \check{\Sigma}$ to $Q$. We take the current definition because we can extend it to the quantum case in Section \ref{sec:QFA/n}.}
$q_0$ ($\in Q$) is a unique initial state, $Q_{acc}$ ($\subseteq Q$) is a set of accepting states, and $Q_{rej}$ ($\subseteq Q-Q_{acc}$) is a set of rejecting states, where $\check{\Sigma}$ denotes the set  $\Sigma\cup\{\cent,\dollar\}$ of {\em tape symbols}. For convenience, we also set $Q_{halt} = Q_{acc}\cup Q_{rej}$ and $Q_{non} = Q - Q_{halt}$.  Inner states in $Q_{halt}$ (resp., $Q_{non}$) are generally called {\em halting} (resp., {\em non-halting}) {\em states} and, whenever $M$ enters any halting state, it must {\em halt} immediately.
An {\em extended transition function} induced from $\delta$ is defined as $\hat{\delta}(q,\lambda) =q$ and $\hat{\delta}(q,x\sigma) =\delta(\hat{\delta}(q,x),\sigma)$ for any $x\in(\check{\Sigma})^*$ and $\sigma\in\check{\Sigma}$.

To introduce a notion of {\em (deterministic) advice} that is fed to finite automata beside input strings,  we adopt the ``track'' notation from  \cite{TYL10}. For two symbols $\sigma\in\Sigma$ and $\tau\in\Gamma$, where $\Sigma$ and $\Gamma$ are two alphabets, the notation $\track{\sigma}{\tau}$ expresses a new symbol made up of $\sigma$ and $\tau$. Graphically, this new symbol is written on a single input tape cell, which is split into two tracks  whose upper track contains $\sigma$ and lower track contains
$\tau$. Since the symbol $\track{\sigma}{\tau}$ is in one tape cell, a tape
head scans the two track symbols $\sigma$ and $\tau$ simultaneously.
When two strings $x$ and $y$ are of the same length $n$, the notation $\track{x}{y}$ denotes a concatenated string $\track{x_1}{y_1}\track{x_2}{y_2}\cdots\track{x_n}{y_n}$, provided that $x=x_1x_2\cdots x_n\in\Sigma^n$ and $y=y_1y_2\cdots y_n\in\Gamma^n$. In particular, when $n=0$, we conveniently identify $\track{\lambda}{\lambda}$
with $\lambda$.
Using this track notation, we define $\Sigma_{\Gamma}$ to be  a new alphabet $\{\track{\sigma}{\tau}\mid \sigma\in\Sigma,\tau\in\Gamma\}$ induced from the two alphabets $\Sigma$ and $\Gamma$.
An {\em advice function} $h$ is a function mapping $\nat$ to $\Gamma^*$, where $\Gamma$ is particularly called an {\em advice alphabet}, but $h$ is
not required to be ``computable.'' Such a function is further called {\em length-preserving} whenever $|h(n)|=n$ holds for every length $n\in\nat$.
The advised language family
$\reg/n$ of Tadaki, Yamakami, and Lin \cite{TYL10} is the family of all languages $L$ over certain alphabets $\Sigma$ satisfying the following condition: there exist an advice alphabet $\Gamma$, a 1dfa $M$ (whose input alphabet is $\Sigma_{\Gamma}$), and a length-preserving advice function $h:\nat\rightarrow\Gamma^*$ such that, for every string $x\in\Sigma^*$, $x\in L$ iff $M$ accepts the input $\track{x}{h(|x|)}$. Similarly,  $\cfl/n$ is defined in \cite{Yam08} using 1npda's in place of 1dfa's.

\section{Properties of Advice for Quantum Computation}\label{sec:QFA/n}

Since its introduction by Karp and Lipton \cite{KL82}, the usefulness of advice has been revealed for various models of underlying computations.
Following this line of study,
we are now focused on a simple and concise model of
{\em one-way measure-many quantum finite automata} (hereafter abbreviated as 1qfa's), each of which permits only one-way head moves and performs a (projective) measurement at every step to see if the machine enters any halting states.
We will discuss characteristic features of 1qfa's that are assisted by powerful pieces of deterministic advice and by examining how the 1qfa's process the advice with bounded-error probability.

\subsection{The Metric Vector Space $\YY_{\HH}$}\label{sec:metric-space}

To describe precisely the {\em time-evolution} of a 1qfa,
it is quite helpful to consider a new vector space $\YY_{\HH} = \HH \times \real\times\real$ induced from a target Hilbert space $\HH$.
Let  $\psi=(\ket{\phi},\gamma_1,\gamma_2)$ and $\psi'=(\ket{\phi'},\gamma'_1,\gamma'_2)$ be any two elements in $\YY_{\HH}$ and let $c$ be any {\em scalar} in the field $\real$. Now, we define the {\em scalar multiplication} $c\cdot \psi$ with respect to $\real$ as $(c\qubit{\phi},c\gamma_1,c\gamma_2)$. For convenience, we write $-\psi$ instead of $(-1)\cdot \psi$. Moreover, we define the {\em (vector) addition} $\psi+\psi'$ as $(\ket{\phi}+\ket{\phi'},\gamma_1+ \gamma'_1,\gamma_2+ \gamma'_2)$ and the {\em (vector) subtraction} $\psi-\psi'$ as $\psi+ (-\psi')$. Those operators naturally make $\YY_{\HH}$ a vector space and we then call all elements in $\YY_{\HH}$ {\em vectors}.

To further make $\YY_{\HH}$ a metric space, we first introduce an appropriate norm, which will induce a metric.
Our {\em norm}\footnote{Our definition of ``norm'' is quite different in its current  form from the norm defined in \cite{KW97,Gru00}.} of $\psi=(\ket{\phi},\gamma_1,\gamma_2)$ is denoted by $\|\psi\|$ and defined as
\begin{equation*}
\|\psi\| = \|(\ket{\phi},\gamma_1,\gamma_2)\| = \sqrt{ \|\ket{\phi}\|^2+ |\gamma_1|^2 + |\gamma_2|^2 }.
\end{equation*}
Using this norm, we define the {\em metric} (or the {\em distance function}) $d:\YY_{\HH}\times\YY_{\HH}\to\real$ as  $d(\psi,\psi') = \|\psi - \psi'\|$. As shown in Lemma \ref{metric-space-prop}, the pair $(\YY_{\HH},\|\cdot \|)$ forms a normed vector space, and thus $(\YY_{\HH},d)$ forms a metric space. For brevity, we drop ``$d$'' and simply call $\YY_{\HH}$ the metric space. To improve readability, we place  the proof of the lemma in Appendix.

\begin{lemma}\label{metric-space-prop}
Let $\psi,\psi',\psi''$ be any vectors in the metric vector space $\YY_{\HH}$.
\begin{enumerate}\vs{-1}
  \setlength{\topsep}{0mm}%
  \setlength{\itemsep}{0mm}
  \setlength{\parskip}{0cm}%

\item\label{triangle-inequality}
$\| \psi+\psi'\|  \leq \| \psi\|  + \| \psi'\|$.

\item\label{norm-triangle-prop}
$\|\psi-\psi''\| \leq \|\psi-\psi'\| + \|\psi'-\psi''\|$.
\end{enumerate}
\end{lemma}

\subsection{Basic Properties of 1QFA/{n}}\label{sec:basic-property-QFA/n}

Formally,  a 1qfa $M$ is a sextuple $(Q,\Sigma,\{U_{\sigma}\}_{\sigma\in\check{\Sigma}},q_0,Q_{acc},Q_{rej})$, where $\check{\Sigma} = \Sigma\cup\{\cent,\dollar\}$, whose {\em time-evolution operator} $U_{\sigma}$ is a unitary operator acting on the Hilbert space $E_{Q}= span\{\qubit{q}\mid q\in Q\}$ of dimension $|Q|$.
The series $\{U_{\sigma}\}_{\sigma\in\check{\Sigma}}$ describes the {\em time evolution} of $M$ on any input.  Let $P_{acc}$, $P_{rej}$, and $P_{non}$ be respectively {\em projections} of $E_{Q}$ onto three subspaces $E_{acc}= span\{\qubit{q}\mid q\in Q_{acc}\}$, $E_{rej} = span\{\qubit{q}\mid q\in Q_{rej}\}$, and $E_{non} = span\{\qubit{q}\mid q\in Q_{non}\}$.
Associated with a symbol $\sigma\in\check{\Sigma}$, we define a {\em transition operator} $T_{\sigma}$ as $T_{\sigma} = P_{non}U_{\sigma}$.
For each fixed string  $x=\sigma_1\sigma_2\cdots \sigma_n$ in $\check{\Sigma}^{*}$ of length $n$, we write $T_{x}$ for $T_{\sigma_n}T_{\sigma_{n-1}}\cdots T_{\sigma_2}T_{\sigma_1}$.

Let us consider a metric vector space $\YY_{E_Q}$ induced from Section \ref{sec:metric-space} by setting $\HH=E_{Q}$.
The aforementioned transition operator $T_{\sigma}$ is expanded into another operator $\hat{T}_{\sigma}:\YY_{E_Q}\to \YY_{E_Q}$ as follows.
First, we define the {\em sign function} $sgn:\real\to\{+1,-1\}$ as $sgn(\gamma)=+1$ if $\gamma\geq0$ and $sgn(\gamma)=-1$ otherwise. With this sign function, define
\begin{equation*}
\hat{T}_{\sigma}(\qubit{\phi},\gamma_1,\gamma_2)
= \left( T_{\sigma} \qubit{\phi}, sgn(\gamma_1) \sqrt{\gamma_1^2
+ \|  P_{acc}U_{\sigma}\qubit{\phi} \| ^2},
 sgn(\gamma_2) \sqrt{\gamma_2^2
+ \|  P_{rej}U_{\sigma}\qubit{\phi}\| ^2} \right).
\end{equation*}
Similarly to the definition of $T_x$, we further define $\hat{T}_{x}$ to be  the functional composition $\hat{T}_{\sigma_n}\hat{T}_{\sigma_{n-1}}\cdots \hat{T}_{\sigma_1}$. Notice that this extended operator $\hat{T}_{x}$ is no longer a linear operator; however, it satisfies useful properties listed in Lemma \ref{norm-property},  which will play a key role in the  proof of Theorem \ref{oneqfa-character}.
For convenience and clarity, we denote by $\YY_{[0,1]}$ the subspace of $\YY_{E_Q}$ consisting only of elements $(\ket{\phi},\gamma_1,\gamma_2)$ satisfying $0\leq \gamma_1,\gamma_2\leq 1$.


\begin{lemma}\label{norm-property}
Let $x\in\check{\Sigma}^*$ be any string and let $\psi=(\ket{\phi},\gamma_1,\gamma_2)$ and $\psi'=(\ket{\phi'},\gamma'_1,\gamma'_2)$ be two elements in $\YY_{[0,1]}$ satisfying that $\hat{T}_{x}\psi, \hat{T}_{x}\psi'\in \YY_{[0,1]}$ and $\|\ket{\phi}\|, \|\ket{\phi'}\|\leq 1$. Each of the following statements holds.
\begin{enumerate}\vs{-1}
  \setlength{\topsep}{0mm}%
  \setlength{\itemsep}{0mm}
  \setlength{\parskip}{0cm}%

\item\label{diff-estimate}
 $\| \qubit{\phi} - \qubit{\phi'}\| ^2 - \|  T_{x}(\qubit{\phi}-\qubit{\phi'})\| ^2 \leq 2 [ (\| \qubit{\phi}\| ^2 - \|  T_{x}\qubit{\phi}\| ^2) + (\| \qubit{\phi'}\| ^2 - \|  T_{x}\qubit{\phi'}\| ^2) ]$.

\item\label{upper-bound}
$\| \hat{T}_{x}\psi - \hat{T}_{x}\psi'\|  \leq \| \psi - \psi'\|$.

\item\label{lower-bound}
 $(\|\ket{\phi}\|^2-\|T_{x}\ket{\phi}\|^2)+ (\|\ket{\phi'}\|^2-\|T_{x}\ket{\phi'}\|^2)  + 4  \sqrt{ \left( \|\ket{\phi}\|^2-\|T_{x}\ket{\phi}\|^2 \right) + \left( \|\ket{\phi'}\|^2-\|T_{x}\ket{\phi'}\|^2 \right) }
 \geq  \| \psi - \psi'\| ^2 - \| \hat{T}_{x}\psi - \hat{T}_{x}\psi'\| ^2$.
\end{enumerate}
\end{lemma}

The proof of Lemma \ref{norm-property} is postponed until Appendix.


Each length-$n$ input string $x$ given to the 1qfa $M$ is expressed on the machine's input tape in the form $\cent x\dollar=\sigma_0\sigma_1\cdots\sigma_{n+1}$, including the two endmarkers $\cent$ and $\dollar$; in particular,  $\sigma_0=\cent$, $\sigma_{n+1}=\dollar$, and $x\in\Sigma^n$. The {\em acceptance probability} of $M$ on the input $x$ at step $i+1$ ($0\leq i\leq n+1$), denoted by  $p_{acc}(x,i+1)$, is  $\|  P_{acc}U_{\sigma_i}\qubit{\phi_{i}} \| ^2$, where  $\qubit{\phi_0} = \qubit{q_0}$  and $\qubit{\phi_{j+1}} = T_{\sigma_j}\qubit{\phi_{j}}$ for any index $j\in[0,n+1]_{\integer}$, and the {\em acceptance probability}  $p_{acc}(x)$ of $M$ on $x$  is $\sum_{i=1}^{n+2}p_{acc}(x,i)$.
Likewise, we define the {\em rejection probabilities} $p_{rej}(x,i+1)$ and $p_{rej}(x)$ using $P_{rej}$ instead of $P_{acc}$ in the above definition.
A computation of the 1qfa $M$ terminates after scanning the right endmarker. Notice that $\ket{\phi_j}$ could be $0$ prior to the $n+2$nd step.  In the end of the computation of $M$ on $x$, $M$ produces a vector  $\hat{T}_{\cent x\dollar}(\qubit{q_0},0,0) = (\qubit{\phi_{n+2}},\sqrt{p_{acc}(x)},\sqrt{p_{rej}(x)})$ in the metric space $\YY_{E_Q}$.
Conventionally, we say that $M$ {\em accepts} (resp., {\em rejects}) $x$ with probability $p_{acc}(x)$ (resp., $p_{rej}(x)$).

Regarding language recognition, we say that a language $L$ is {\em recognized} by $M$ (or $M$ {\em recognizes} $L$) with error probability at most  $\varepsilon$ if (i) for every string $x\in L$, $M$ accepts $x$ with probability at least $1-\varepsilon$ and (ii) for every string $x\in\Sigma^*-L$, $M$ rejects $x$ with probability at least $1-\varepsilon$. By viewing $M$ as a machine outputting two values, $0$ (rejection) and $1$ (acceptance), Conditions (i) and (ii) can be rephrased succinctly as follows: for every string $x\in\Sigma^*$, $M$ on the input $x$ {\em outputs} $L(x)$ with probability at least $1-\varepsilon$,
where
we set $L(x)=1$ for any $x\in L$ and $L(x)=0$ for any $x\in\Sigma^*-L$.
The notation $\oneqfa$ expresses the family of all languages recognized by 1qfa's with {\em bounded-error probability} (i.e.,  the error probability is upper-bounded by an absolute constant $\varepsilon$
in the real interval $[0,1/2)$).
For a later use, we also introduce another notation $\oneqfa_{(a(n),b(n))}$
for any two functions $a(n)$ and $b(n)$ mapping $\nat$ to $[0,1]$ as  the collection of all languages $L$ for which there exists a 1qfa $M$ satisfying: for every length $n\in\nat$ and every input $x\in\Sigma^n$, if $x\in L$ then $M$ accepts $x$ with probability {\em more than} $a(n)$, and if $x\not\in L$ then $M$ rejects $x$ with probability {\em more than} $b(n)$.


Naturally, we can supply deterministic advice to 1qfa's.
By analogy with $\reg/n$ and $\cfl/n$, the notation $\oneqfa/n$ refers to
the collection of all languages $L$ over alphabets $\Sigma$ that satisfy the following condition: there exist  an advice alphabet $\Gamma$, a 1qfa $M = (Q,\Sigma_{\Gamma}, \{U_{\sigma}\}_{\sigma\in\check{\Sigma}_{\Gamma}}, q_0,Q_{acc},Q_{rej})$, an error bound $\varepsilon\in[0,1/2)$, and an advice function
$h:\nat\rightarrow\Gamma^*$  such that (i) $|h(n)|=n$ for each length $n\in\nat$ (\ie $h$ is length-preserving) and (ii) for every $x\in\Sigma^*$, $M$ on input $\track{x}{h(|x|)}$ outputs $L(x)$ with probability at least $1-\varepsilon$ (abbreviated as $\prob_{M}[M(\track{x}{h(|x|)}) = L(x)]\geq 1-\varepsilon$, with  $M(\track{x}{h(|x|)})$ being treated as a random variable). The proof of the containment $\oneqfa\subseteq \reg$ given in \cite{KW97} can be carried over to assert that $\oneqfa/n\subseteq \reg/n$.

An immediate benefit of supplementing 1qfa's with appropriately chosen advice
is the elimination of the two endmarkers on their input tapes. Earlier, Brodsky and Pippenger \cite{BP02} demonstrated how to eliminate the left endmarker $\cent$ from 1qfa's input tapes. The use of advice further enables us to eliminate the right endmarker $\dollar$ as well. Intuitively, this elimination is done by marking the end of an input string by a piece of advice.

\begin{lemma}\label{endmarker}{\rm [endmarker lemma]}\hs{1}
Given any language $L$, $L$ is in $\oneqfa/n$ iff there exist a 1qfa $M$, a constant $\varepsilon\in[0,1/2)$, an advice alphabet $\Gamma$, and a length-preserving  advice function $h$ that satisfy the following conditions:
(i) $M$'s input tape contains no endmarker, (ii) $M$'s tape head starts at the leftmost input symbol, (iii) after $M$'s tape head reads the rightmost input symbol, $M$ stops operating, and (iv) for any nonempty string $x\in\Sigma^*$, $M$  on input $\track{x}{h(|x|)}$  outputs $L(x)$ with probability at least $1-\varepsilon$.
\end{lemma}

\begin{proof}
Let $L$ be any language over alphabet $\Sigma$.

(Only If--part)
Assume that $L$ is in $\oneqfa/n$.
Associated with this language $L$, we prepare a length-preserving advice function $h:\nat\rightarrow\Gamma^*$ for a certain advice alphabet
$\Gamma$ and an appropriate 1qfa  $M=(Q,\Sigma_{\Gamma},\{U_{\sigma}\}_{\sigma\in\check{\Sigma}_{\Gamma}}, q_0,Q_{acc},Q_{rej})$ using two endmarkers. Moreover, we assume that,  on any input of the form $\track{x}{h(|x|)}$
with $x\in \Sigma^*$, $M$ outputs $L(x)$ with success probability at least $1-\epsilon$, where $\epsilon$ is a certain constant in $[0,1/2)$.
For appropriate constants $k_0,k_1,k_2\in\nat^{+}$, assume also that $Q_{non} = \{q_{i}\mid 1\leq i\leq k_0\}$, $Q_{acc} = \{q_{k_0+i}\mid 1\leq i\leq k_1\}$, and $Q_{rej} = \{q_{k_0+k_1+i}\mid 1\leq i\leq k_2\}$.
We therefore obtain $Q=Q_{non}\cup Q_{acc}\cup Q_{rej}$ and we write $k$ for $|Q|$.
Following an argument of Brodsky and Pippenger \cite{BP02}, we can eliminate the left endmarker $\cent$, and hereafter we assume that $M$'s input tape has no $\cent$ for simplicity.

In the following manner, we will modify $M$ and $h$ to obtain the desired $M'$ and $h'$ that satisfy the lemma. Let us assume that $h$ has the form $h(n) = \tau_1\tau_2\cdots \tau_{n-1}\tau_n$, where $\tau_1,\tau_2,\ldots,\tau_n\in\Gamma$.   A new advice function $h'$ is defined to satisfy $h'(n) = \tau_1\cdots \tau_{n-1}\tau'_n$, where the last symbol $\tau'_n$ is $\track{\tau_n}{\dollar}$, indicating the end of any input string  of length $n$.
To describe a new 1qfa $M'$ equipped with no endmarker, we need to embed each operator $U_{\sigma}$ into a slightly larger space, say, $E_{Q'}$. For this purpose, we first define  $Q'_{acc} = \{q_{k+i} \mid 1\leq i\leq k_1\}$ and $Q'_{rej} = \{q_{k+k_1+i}\mid 1\leq i\leq k_2\}$, and we then set $Q' = Q\cup Q'_{acc}\cup Q'_{rej}$.
To describe new operators $U'_{\sigma}$, we use a special unitary matrix $S$, which is called ``sweeping matrix'' in
\cite{BP02}, defined as
\[
{\small S = \ninematrices{I_{non}}{O}{O}{O}{O}{I_{halt}}{O}{I_{halt}}{O}},
\]
where $I_{non}$ (resp., $I_{halt}$) is the {\em identity matrix} of size $k_0$
(resp., $k_1+k_2$). This matrix $S$ swaps ``old'' halting states of $M$ with ``new'' non-halting states so that, after an application of unitary matrix $U_{\track{\sigma}{\tau}}$, we can deter the effect of an application of  the measurement operator $P_{non}$ that normally comes immediately after $U_{\track{\sigma}{\tau}}$. Using this operator $S$, we further define
\[
{\small U'_{\track{\sigma}{\tau}} = S \matrices{U_{\track{\sigma}{\tau}}}{O}{O}{I_{halt}}}
\;\;\;\text{and}\;\;\;
{\small U'_{\track{\sigma}{\tau'}} = S \matrices{U_{\dollar}}{O}{O}{I_{halt}} \matrices{U_{\track{\sigma}{\tau}}}{O}{O}{I_{halt}}},
\]
where $\tau' = \track{\tau}{\dollar}$. The measurement operator $P_{acc}$ is also expanded naturally to the space $E_{Q'}$, and it is succinctly denoted by $P'_{acc}$. It is not difficult to show that the operator $P'_{acc}U'_{\track{\sigma}{\tau'}}$ produces a similar effect as the operator $P_{acc}U_{\dollar}P_{non}U_{\track{\sigma}{\tau}}$ does. Therefore, using
the advice  function $h'$, $M'$ accepts any given input $x$ with the same probability as $M$ does on $x$ with the advice function $h$.

(If--part) Take $\Gamma,\varepsilon,h,M$ given in the lemma. Since $M$'s input tape uses no endmarker, let $M$ be of the form $(Q,\Sigma_{\Gamma}, \{U_{\sigma}\}_{\sigma\in \Sigma_{\Gamma}}, q_0, Q_{acc},Q_{rej})$ and assume that, for all {\em nonempty} inputs $x$, $M$ on input $\track{x}{h(|x|)}$ correctly outputs $L(x)$ with probability at least $1-\varepsilon$. From this $M$, we want to construct another 1qfa $N$ equipped with two endmarkers so that $N$ recognizes $L$ with error probability at most $\varepsilon$. Here, we consider only the case where $\lambda\in L$, because the other case can be similarly handled.
For convenience, let $\ket{\psi_0} = \ket{q_0}$ and let $\ket{\phi_{i+1}}$ express a quantum state in $E_{Q}$ generated by $M$ after step $i$.

Choose a fresh inner state $q_f$ and set $Q'_{acc}= Q_{acc}\cup \{q_f\}$ and $Q'_{rej}= Q_{rej}$. Moreover, define $Q'= Q\cup\{q_f\}$.
We then expand the scope of each unitary operator $U_{\sigma}$ to $E_{Q'}$ as follows.
Taking any symbol $\sigma\in \Sigma_{\Gamma}$, we set $U'_{\sigma}\ket{q_f}$ to be $\ket{q_f}$ and, for each $q\in Q$, we further define $U'_{\sigma}\ket{q}$ to be $U_{\sigma}\ket{q}$. Next, we want to define two new operators $U'_{\cent}$ and $U'_{\dollar}$ associated with the two endmarkers. Let $U'_{\cent}\ket{q} = \ket{q}$ for any $q\in Q'$. Furthermore, let $U'_{\dollar}\ket{q_0} = \ket{q_f}$, $U'_{\dollar}\ket{q_f} = \ket{q_0}$, and $U'_{\dollar}\ket{q} = \ket{q}$ for any other $q$ in $Q'$.
Let $x$ be any nonempty input.
When $N$ applies $P_{acc}U'_{\dollar}$ to a quantum state  $P_{non}\ket{\phi_{n+1}}$ of $M$, any occurrence of $\ket{q_0}$ in $P_{non}\ket{\phi_{n+1}}$ (if $q_0\notin Q_{halt}$) changes to $\ket{q_f}$ and this change contributes to an increase of the acceptance probability of $M$.  Recall that,  while reading $x$, $M$ achieves the acceptance or rejection probability of at least $1-\varepsilon$.
This fact implies that
$N$ cannot sway the decision of acceptance or rejection made by $M$ and that the error probability of $N$ is not higher than $M$'s.
In contrast, when $x=\lambda$, since  $U'_{\dollar}U'_{\cent}\ket{q_0} =\ket{q_f}$, $N$ accepts $x$ with certainty. Therefore, $N$ recognizes $L$ with error probability at most $\varepsilon$.
\end{proof}


For  our analyses of languages in $\oneqfa/n$, not all well-known properties proven for $\oneqfa$ turn out to be as useful as we have hoped them to be.
One of such properties is a criterion, known as a {\em partial order condition}\footnote{A language satisfies the partial order condition exactly when its minimal 1dfa contains no two inner states $q_1,q_2\in Q$ such that (i) there is a string $z$ for which $\hat{\delta}(q_1,z)\in Q_{acc}$ and $\hat{\delta}(q_2,z)\not\in Q_{acc}$ or vice versa, and (ii) there are two nonempty strings $x$ and $y$ for which $\hat{\delta}(q_1,x)=\hat{\delta}(q_2,x)=q_2$ and $\hat{\delta}(q_2,y)=q_1$.}
of Brodsky and Pippenger \cite{BP02}.  Earlier, Kondacs and Watrous \cite{KW97} proved  that $\reg\nsubseteq\oneqfa$ by considering a padded language $L_a = \{wa\mid w\in\Sigma^*\}$ over a binary alphabet $\Sigma=\{a,b\}$.
Brodsky and Pippenger \cite{BP02} then pointed out that this result follows from a more general fact in which every language in $\oneqfa$ satisfies the  partial order condition but $L_a$ does not.
Unlike $\oneqfa$, the advised family $\oneqfa/n$ violates this criterion because the above language $L_a$ falls into $\oneqfa/n$.
This fact is a typical example that makes an analysis of $\oneqfa/n$ look quite different from an analysis of $\oneqfa$.

\begin{lemma}\label{partial-order-cond}
The advised language family $\mathrm{1QFA}/n$ does not satisfy the criterion of the partial order condition.
\end{lemma}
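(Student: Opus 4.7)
The plan is to exhibit a single language that lives in $\oneqfa/n$ yet violates the partial order condition, taking as our candidate the Kondacs--Watrous separation language $L_a = \{wa \mid w \in \{a,b\}^*\}$. By the remarks immediately preceding the lemma (following \cite{KW97,BP02}), $L_a$ already fails the partial order condition: its minimal 1dfa has states $p_0,p_1$ (initial $p_0$, accepting $p_1$) with $\delta(p_0,a)=\delta(p_1,a)=p_1$ and $\delta(p_1,b)=p_0$, so choosing $q_1=p_0$, $q_2=p_1$, $x=a$, $y=b$, and $z=\lambda$ verifies the forbidden configuration. Hence it suffices to prove $L_a \in \oneqfa/n$.

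To place $L_a$ in $\oneqfa/n$, I would design advice that pinpoints the final input cell and then let the 1qfa perform a trivial read-and-halt step there. Concretely, take $\Gamma=\{0,1\}$ and advice function $h(n)=0^{n-1}1$ for $n\geq 1$ (and $h(0)=\lambda$), so the unique $1$ in $h(|x|)$ sits directly above the last input symbol. The 1qfa $M=(Q,\Sigma_{\Gamma},\{U_\sigma\},q_0,\{q_a\},\{q_r\})$ uses states $Q=\{q_0,q_a,q_r\}$. Define $U_{\cent}$, $U_{\dollar}$, $U_{\track{a}{0}}$, $U_{\track{b}{0}}$ to be the identity on $E_Q$, so while the lower track holds $0$ the machine stays in $\ket{q_0}$ and every measurement keeps it in $E_{non}$. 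Define $U_{\track{a}{1}}$ to be the permutation that swaps $\ket{q_0}\leftrightarrow\ket{q_a}$ and fixes $\ket{q_r}$, and $U_{\track{b}{1}}$ the permutation swapping $\ket{q_0}\leftrightarrow\ket{q_r}$ and fixing $\ket{q_a}$; both are unitary. When the head reaches the unique cell $\track{x_n}{1}$, the state $\ket{q_0}$ is mapped to $\ket{q_a}$ if $x_n=a$ and to $\ket{q_r}$ if $x_n=b$, and the ensuing projective measurement halts the computation with the correct outcome deterministically.

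Verifying acceptance is then essentially free: on input $\cent\track{x}{h(|x|)}\dollar$ the sequence of transition operators keeps $M$ in $\ket{q_0}$ through the first $n-1$ input cells, the $n$th step produces the correct halting state, and the remaining $\dollar$ step is never executed because $M$ has already halted. Thus $\prob_{M}[M(\track{x}{h(|x|)})=L_a(x)]=1$ for every $x\in\{a,b\}^+$, which witnesses $L_a\in\oneqfa/n$ (even with zero error). Combined with the first paragraph, this shows that $\oneqfa/n$ contains a language violating the partial order condition, completing the proof.

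The only subtlety is checking that all of $U_{\cdot}$ really are unitaries on $E_Q$ and that no halting occurs before step $n$; both follow immediately from the identity/permutation choices above, so I do not expect a genuine obstacle. If one wished to emphasize the role of advice even more transparently, one could invoke Lemma \ref{endmarker} to drop $\cent$ and $\dollar$ from the construction, but it is not strictly needed here.
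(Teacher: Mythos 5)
Your proposal is correct and follows essentially the same route as the paper: the same witness language $L_a=\{wa\mid w\in\{a,b\}^*\}$, the same advice $h(n)=0^{n-1}1$, and the same three-state 1qfa whose operators are the identity on $\track{\cdot}{0}$ and permutations swapping $q_0$ with the accepting (resp.\ rejecting) state on $\track{a}{1}$ (resp.\ $\track{b}{1}$). The only cosmetic difference is that you keep the endmarkers with identity operators instead of invoking the endmarker lemma, which changes nothing of substance.
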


\begin{proof}
Let $\Sigma=\{a,b\}$ and consider the aforementioned language $L_{a}= \{wa\mid w\in\Sigma^*\}$. We aim at proving that this language belongs to $\oneqfa/n$ by constructing an appropriate 1qfa $M$ and a certain length-preserving advice function $h$.
Since $L_a$ does not satisfy the partial order condition, the lemma immediately follows.

It suffices, by Lemma \ref{endmarker}, to build an advised 1qfa without any endmarker.
Our advice alphabet $\Gamma$ is $\{0,1\}$, and the desired 1qfa $M$ is defined as $(Q,\Sigma_{\Gamma},\{U_{\sigma}\}_{\sigma\in\check{\Sigma}_{\Gamma}}, q_0,Q_{acc},Q_{rej})$, where  $Q=\{ q_0,q_1,q_2\}$, $Q_{acc} = \{q_1\}$, and $Q_{rej}=\{q_2\}$.  Time-evolution operators of $M$ are
$U_{\track{e}{0}} = I$ (identity) for each symbol $e\in\Sigma$ and
\[
\hs{-5}
{\small U_{\track{a}{1}} = \ninematrices{0}{1}{0}{1}{0}{0}{0}{0}{1}}
\;\;\text{and}\;\;
{\small U_{\track{b}{1}} =  \ninematrices{0}{0}{1}{0}{1}{0}{1}{0}{0}}.
\]
Finally, for any $n\geq1$,  we set an advice function $h$ to be $h(n)=0^{n-1}1$, which gives a cue to our 1qfa $M$ to check whether the last input symbol equals $a$.
An initial configuration of $M$ is $\qubit{\psi_0} = (1,0,0)^{T}$, indicating  that $\qubit{q_0}$ has amplitude $1$.

A direct calculation shows that  $U_{\track{w}{0^{n-1}}\track{a}{1}}\qubit{q_0} = \qubit{q_1}$ and $U_{\track{w}{0^{n-1}}\track{b}{1}}\qubit{q_0} = \qubit{q_2}$.  Since $q_1\in Q_{acc}$ and $q_2\in Q_{rej}$,  $M$ should recognize $L_{a}$ with certainty, leading to the desired conclusion that  $L_a$ belongs
to $\oneqfa/n$.
\end{proof}

\subsection{A Necessary Condition for 1QFA/{n}}\label{sec:QFA-theorem}

A quick way to understand a source of the power of advised 1qfa's  may be to find a machine-independent, algebraic characterization of languages in $\oneqfa/n$. Such a characterization for other machine models has already turned out to be a useful tool in studying the computational complexity of languages (\eg \cite{Yam10}).  What we plan to prove here is a slightly weaker result: a machine-independent, algebraic {\em necessary} condition for those languages that properly fall into $\oneqfa/n$.

Let us give a precise description of our first main theorem, Theorem \ref{oneqfa-character}.
Following a standard convention, for any given partial order $\leq$ defined on a given finite set, we always use the notation $x=y$ exactly when both $x\leq y$ and $y\leq x$ hold; moreover, we write $x<y$ in the case where both $x\leq y$ and $x\neq y$ hold.
With respect to $<$, a sequence $(s_1,s_2,\ldots,s_m)$ of length $m$ ($m\geq1$) is called a {\em strictly descending chain} if $s_{i+1}< s_{i}$ holds for any index $i\in[m-1]$.
For our convenience, we call a reflexive, symmetric, binary relation a {\em closeness relation}. Given any closeness relation $\cong_{S}$, an {\em $\cong_{S}$-discrepancy set} is a set $S$ satisfying that, for any two elements $x,y\in S$, if $x$ and $y$ are ``different'' elements,
then $x\not\cong_{S} y$.

\begin{theorem}\label{oneqfa-character}
Let $S$ be any language over alphabet $\Sigma$ and let $\Delta = \{(x,n)\in\Sigma^*\times\nat \mid |x|\leq n\}$. If $S$ belongs to $\oneqfa/n$, then there exist two constants $c,d\in\nat^{+}$, an equivalence relation $\equiv_{S}$ over $\Delta$, a partial order $\leq_{S} $ over $\Delta$, and a closeness  relation $\cong_{S}$ over $\Delta$ that satisfy the seven conditions listed below. In the list, we assume that $(x,n),(y,n)\in\Delta$, $z\in\Sigma^*$, and $\sigma\in\Sigma$ with $|x|=|y|$.
\begin{enumerate}\vs{-1}
  \setlength{\topsep}{0mm}%
  \setlength{\itemsep}{0mm}
  \setlength{\parskip}{0cm}%

\item\label{item:equiv-class} The cardinality of the set $\Delta/\!\equiv_{S}$ of equivalence classes is at most $d$.

\item\label{item:cond-to-equiv} If $(x,n)\cong_{S} (y,n)$, then $(x,n)\equiv_{S} (y,n)$.

\item\label{item:less-than} If $|x\sigma|\leq n$, then $(x\sigma,n)\leq_{S}  (x,n)$ and, if $|x|=n>0$, then $(x,n)<_{S}(\lambda,n)$.

\item\label{item:reverse} When $(x,n)=_{S}(xz,n)$ and  $(y,n)=_{S}(yz,n)$ with $|xz|\leq n$, $(xz,n)\cong_{S} (yz,n)$ implies   $(x,n)\equiv_{S}(y,n)$.

\item\label{item:equiv-S} $(x,n)\equiv_{S}(y,n)$ iff $S(xz)=S(yz)$ for all strings $z\in\Sigma^*$ with $|xz|=n$.

\item\label{item:chain} Any strictly descending chain (with respect to  $<_{S}$) in $\Delta$ has length at most $c$.

\item\label{item:discrepancy} Any $\cong_{S}$-discrepancy subset of $\Delta$ has cardinality at most $d$.
\end{enumerate}
\end{theorem}

The meanings of the above three relations $\simeq$, $\leq_{S}$, and $\equiv_{S}$ will be clarified in the following proof of Theorem \ref{oneqfa-character}.
Since our proof of the theorem
heavily relies on Lemma  \ref{norm-property}, the proof requires  only
basic properties of the norm in the metric vector space $\YY_{E_Q}$
discussed in Section \ref{sec:basic-property-QFA/n}.

\begin{proofof}{Theorem \ref{oneqfa-character}}
Let $\Sigma$ be any alphabet, let $\Delta = \{(x,n)\mid x\in\Sigma^*,n\in\nat, |x|\leq n\}$, and let $S$ be any language  in $\oneqfa/n$ over $\Sigma$. For this language $S$, by Lemma \ref{endmarker}, take an advice alphabet $\Gamma$, an error bound $\varepsilon\in[0,1/2)$, a 1qfa $M = (Q,\Sigma_{\Gamma}, \{U_{\sigma\in\Sigma_{\Gamma}}, q_0,Q_{acc},Q_{rej})$, and a length-preserving advice function $h:\nat\rightarrow\Gamma^*$ satisfying that (i) $M$'s inpout tape uses no endmarker and (ii) $\prob_{M}[M(\track{x}{h(|x|)})=S(x)]\geq 1-\varepsilon$ for every nonempty string $x\in\Sigma^*$.
Without loss of generality, we hereafter assume that $\varepsilon>0$.

Recalling the notation $\Sigma_{\Gamma}$ for $\{\track{\sigma}{\tau}\mid \sigma\in \Sigma, \tau\in\Gamma\}$, we set $e=|\Sigma_{\Gamma}|$. For simplicity, write $\psi_{0}$ for the triplet $(\qubit{q_0},0,0)$ in the metric vector space
$\YY_{E_Q}$ ($ = span\{E_{Q}\}\times\real\times\real$) of dimension $|E_Q|+2$. 
For technicality, we set $\hat{T}_{\track{\lambda}{\lambda}}=I$ so that, when $x=w=\lambda$, $\hat{T}_{\track{x}{w}}\psi_0$ coincides with $\psi_0$.
Given any element $(x,n)\in\Delta$ and its associated string $w=Pref_{|x|}(h(n))$,  we assume that $\hat{T}_{\track{x}{w}}\psi_0$ has the form $(\qubit{\phi_x},\gamma_{x,1},\gamma_{x,2})$.

As the first stage, we intend to define a closeness relation $\cong_{S}$ on $\Delta$. For our  purpose, we set $\varepsilon^*= \sqrt{\varepsilon(1-\varepsilon)}$ and choose a constant $\mu$ satisfying $0< \mu < 2(1-2\varepsilon^*)/9$. Notice that $2(1-2\varepsilon^*)/9<\varepsilon$. Since $0\leq \varepsilon^*<1/2$, $\mu< 2/9$ follows. Given two elements $(x,n),(y,m)\in\Delta$, we write $(x,n)\cong_{S} (y,m)$ exactly when $\| \hat{T}_{\track{x}{w}}\psi_0 - \hat{T}_{\track{y}{v}}\psi_0 \| ^2 < \mu$ holds, where $w=Pref_{|x|}(h(n))$ and  $v=Pref_{|y|}(h(m))$.
To see that Condition \ref{item:discrepancy} is satisfied, let us consider an arbitrary $\cong_{S}$-discrepancy subset $G$ of $\Delta$.
For any two distinct elements $(x,n),(y,m)\in G$, it must hold that
$\| \hat{T}_{\track{x}{w}}\psi_0 -  \hat{T}_{\track{y}{v}}\psi_0 \| ^2 \geq \mu$ . Since $\mu$ is a positive constant, $G$ must be a finite set. More precisely, let $d= 2(|E_{Q}|+2)^2/\mu$. This value $d$ upper-bounds the cardinality $|G|$ of $G$.

\begin{claim}\label{cardinality-G}
The cardinality $|G|$ is upper-bounded by $d$, independent of the choice of $G$.
\end{claim}

By Claim \ref{cardinality-G},    Condition \ref{item:discrepancy} is immediately met.

\begin{proofof}{Claim \ref{cardinality-G}}
For brevity, we set $k=|E_{Q}|$.
Consider the set $V(G)=\{\hat{T}_{\track{x}{w}}\psi_0 \mid (x,n)\in G, w=Pref_{|x|}(h(n))\}$. Since $|V(G)|=|G|$, it suffices to show the inequality  $|V(G)|\leq 2(k+2)^2/\mu$, which directly implies the claim.

Let $(x,n),(y,m)\in G$, $w=Pref_{|x|}(h(n))$, and $v=Pref_{|y|}(h(m))$.
Assume that $\hat{T}_{\track{x}{w}}\psi_0$ and $\hat{T}_{\track{y}{v}}\psi_0$ respectively have the form  $(\ket{\phi_{x}}, \gamma_{x,1},\gamma_{x,2})$ and
 $(\ket{\phi_{y}},\gamma_{y,1},\gamma_{y,2})$ with      $\|\ket{\phi_x}\|,\|\ket{\phi_y}\|,\gamma_{x,j},\gamma_{y,j}\in[0,1]$.
Moreover, let $\ket{\phi_{x}} = (\xi_{x,1},\xi_{x,2},\ldots,\xi_{x,k})^{T}$ and $\ket{\phi_y} = (\xi_{y,1},\xi_{y,2},\ldots,\xi_{y,k})^{T}$.  By the definition of our norm, $\| \hat{T}_{\track{x}{w}}\psi_0 - \hat{T}_{\track{y}{v}}\psi_0 \| ^2$ equals $\sum_{i=1}^{k}|\xi_{x,i}-\xi_{y,i}|^2 + \sum_{j=1}^{2}|\gamma_{x,j}-\gamma_{y,j}|^2$.
Note that this value is at least $\mu$ if $\hat{T}_{\track{x}{w}}\psi_0$ and $\hat{T}_{\track{y}{v}}\psi_0$ are distinct vectors in $V(G)$. It therefore follows that either (i) there exists an index $i\in[k]$ satisfying (*) $\mu/(k+2)\leq |\xi_{x,i} -\xi_{y,i}|^2\leq 2$ or (ii) there exists an index $j\in[2]$ satisfying (**) $\mu/(k+2)\leq |\gamma_{x,j} -\gamma_{y,j}|^2\leq 1$.
From this fact, we can conclude that, for each fixed index $i\in[k]$ (resp., $j\in[2]$), there are only at most $2(k+2)/\mu$ (resp., $(k+2)/\mu$) distinct elements $\xi_{x,i}$ (resp., $\gamma_{x,j}$) satisfying Condition (*) (resp., Condition (**)). Since the cardinality $|V(G)|$ is upper-bounded by the total number of those elements, it follows that
\[
|V(G)|\leq k\cdot \frac{2(k+2)}{\mu} + 2\cdot \frac{k+2}{\mu} \leq (k+2)\cdot \frac{2(k+2)}{\mu} = \frac{2(k+2)^2}{\mu}=d.
\]
Obviously, the value $d$ is irrelevant to the choice of $G$.
\end{proofof}

As the second stage, we aim at defining a relation $\equiv_{S}$ to satisfy Condition \ref{item:equiv-S}. For the time being, however, we define $\equiv_{S}$ as a subset of $\bigcup_{n\in\nat}(\Delta_n\times\Delta_n)$, where $\Delta_n$ denotes the set $\{(x,n)\in \Delta \mid |x|\leq n\}$; later, we will expand it to $\Delta\times\Delta$, as required by the lemma. For any two elements $(x,n),(y,n)\in\Delta_n$, we write $(x,n)\equiv_{S}(y,n)$ whenever $S(xz)=S(yz)$ holds for all strings $z$ satisfying  $|xz|=n$. {}From this definition, it is not difficult to show that $\equiv_{S}$ satisfies the properties of reflexivity, symmetry, and transitivity; thus, $\equiv_{S}$ is indeed an equivalence relation. This shows Condition \ref{item:equiv-S}.

To show Condition \ref{item:cond-to-equiv} for $\cong_{S}$ and $\equiv_{S}$, we start with the following statement.

\begin{claim}\label{T-diff-bound}
For any two elements $(x,n),(y,n)\in\Delta$ with $|x|=|y|$, if  $\| \hat{T}_{\track{x}{w}}\psi_0 - \hat{T}_{\track{y}{w}}\psi_0 \| ^2 < 2(1-2\varepsilon^*)$, then $(x,n)\equiv_{S}(y,n)$ holds.
\end{claim}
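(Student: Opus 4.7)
The plan is to argue by contrapositive. Write $\psi_x = \hat{T}_{\cent\track{x}{w}}\psi_0 = (\ket{\phi_x},\gamma_{x,1},\gamma_{x,2})$ and $\psi_y = \hat{T}_{\cent\track{y}{w}}\psi_0 = (\ket{\phi_y},\gamma_{y,1},\gamma_{y,2})$, as in the surrounding proof. Suppose for contradiction that $\|\psi_x-\psi_y\|^2 < 1-2\varepsilon$ yet $(x,n)\not\equiv_S (y,n)$. By the preliminary definition of $\equiv_S$ on $\Delta_n$, this failure furnishes a witness $z\in\Sigma^*$ with $|xz|=n$ and $S(xz)\neq S(yz)$; without loss of generality assume $S(xz)=1$ and $S(yz)=0$. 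Decompose $h(n)=wv$ with $|v|=|z|$, so that $M$'s full advised inputs are $\cent\track{xz}{wv}\dollar$ and $\cent\track{yz}{wv}\dollar$, respectively.

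The next step is to convert the semantic disagreement into a quantitative lower bound on a norm in $\SSS$. Because $M$ recognizes $S$ with error at most $\varepsilon$ under advice $h$, the hypothesis $S(xz)=1$ forces $p_{acc}(xz)\geq 1-\varepsilon$, while $S(yz)=0$ forces $p_{rej}(yz)\geq 1-\varepsilon$ and hence $p_{acc}(yz)\leq \varepsilon$. The end-of-run vectors $\hat{T}_{\cent\track{xz}{wv}\dollar}\psi_0$ and $\hat{T}_{\cent\track{yz}{wv}\dollar}\psi_0$ carry $p_{acc}(xz)$ and $p_{acc}(yz)$ as their second coordinates, so directly from the definition of $\|\cdot\|$ on $\SSS$,
\[
\|\hat{T}_{\cent\track{xz}{wv}\dollar}\psi_0 - \hat{T}_{\cent\track{yz}{wv}\dollar}\psi_0\|^2 \;\geq\; |p_{acc}(xz) - p_{acc}(yz)| \;\geq\; 1-2\varepsilon.
\]
Factoring $\hat{T}_{\cent\track{xz}{wv}\dollar} = \hat{T}_{\track{z}{v}\dollar}\circ\hat{T}_{\cent\track{x}{w}}$ rewrites the left-hand side as $\|\hat{T}_{\track{z}{v}\dollar}\psi_x - \hat{T}_{\track{z}{v}\dollar}\psi_y\|^2$. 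I would then invoke the upper-bound assertion (item 3) of Lemma \ref{norm-property} to transfer this lower bound back to $\|\psi_x-\psi_y\|^2$, directly contradicting the standing hypothesis.

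The hard part will be matching the constant in that transfer. A bare call to Lemma \ref{norm-property}(3) only yields $\|\hat{T}_{\track{z}{v}\dollar}\psi_x - \hat{T}_{\track{z}{v}\dollar}\psi_y\|^2 \leq 2\|\psi_x-\psi_y\|^2$, giving the weaker estimate $\|\psi_x-\psi_y\|^2 \geq (1-2\varepsilon)/2$ rather than the needed $\geq 1-2\varepsilon$. To sharpen the constant I would exploit the structural features that come for free from the fact that $\psi_x,\psi_y$ arise from an actual computation: the quantum coordinates $\ket{\phi_x},\ket{\phi_y}$ live in $E_{non}$, and the accumulators $\gamma_{x,i},\gamma_{y,i}$ are non-negative. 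Pairing item 3 with item 1 of Lemma \ref{norm-property}, which controls the ``halting mass'' $\|\ket{\phi_x}\|^2 - \|T_{\track{z}{v}\dollar}\ket{\phi_x}\|^2$ (and its $y$-analogue), a careful accounting of how this mass is distributed between the acceptance and rejection coordinates during the application of $\hat{T}_{\track{z}{v}\dollar}$ should yield the effective constant-$1$ Lipschitz estimate needed to close the argument.
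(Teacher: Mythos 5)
Your overall strategy is the paper's strategy --- pick a witness $z$ with $S(xz)\neq S(yz)$, lower-bound the norm distance of the end-of-run vectors in $\SSS$ by the probability gap, factor $\hat{T}_{\cent\track{xz}{wv}\dollar}=\hat{T}_{\track{z}{v}\dollar}\circ\hat{T}_{\cent\track{x}{w}}$, and pull the bound back through Lemma \ref{norm-property}(3). But you have correctly located, and not closed, the one genuine gap: the factor of $2$. Your proposed repair does not work. The quantity you would need to control is the additive error $\|\ket{\phi_x}-\ket{\phi_y}\|^2-\|T_{\track{z}{v}\dollar}(\ket{\phi_x}-\ket{\phi_y})\|^2$, and the only handle Lemma \ref{norm-property}(1) gives on it is the \emph{individual} halting masses $\|\ket{\phi_x}\|^2-\|T_{\track{z}{v}\dollar}\ket{\phi_x}\|^2$ and its $y$-analogue. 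Those are not small here --- on the contrary, after the endmarker is read essentially all remaining amplitude has halted (e.g.\ $p_{acc}(xz)\geq 1-\varepsilon$ forces the surviving non-halting mass to be at most $\varepsilon$), so the right-hand side of Lemma \ref{norm-property}(1) is of order $1$ and the "careful accounting" yields nothing better than the factor-$2$ estimate you already have.

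The fix is much simpler and is where the paper's argument actually gets its constant: use \emph{both} accumulator coordinates of the norm on $\SSS$, not just the acceptance one. Since $S(xz)=1$ and $S(yz)=0$ give $p_{acc}(xz)\geq 1-\varepsilon$, $p_{acc}(yz)\leq\varepsilon$ \emph{and} $p_{rej}(yz)\geq 1-\varepsilon$, $p_{rej}(xz)\leq\varepsilon$, the definition of $\|\cdot\|$ on $\SSS$ yields
\[
\bigl\|\hat{T}_{\cent\track{xz}{wv}\dollar}\psi_0-\hat{T}_{\cent\track{yz}{wv}\dollar}\psi_0\bigr\|^2
\;\geq\; |p_{acc}(xz)-p_{acc}(yz)|+|p_{rej}(xz)-p_{rej}(yz)|\;\geq\;2(1-2\varepsilon),
\]
and this extra factor of $2$ exactly cancels the $(\sqrt{2})^2$ from Lemma \ref{norm-property}(3), giving $\|\hat{T}_{\cent\track{x}{w}}\psi_0-\hat{T}_{\cent\track{y}{w}}\psi_0\|^2\geq 1-2\varepsilon$ and the desired contradiction (this is the paper's Claims \ref{T-acc-rej} and \ref{distance-S}). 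With that one-line change your write-up becomes a complete proof; as it stands, the argument only establishes the claim with $1-2\varepsilon$ replaced by $(1-2\varepsilon)/2$ in the hypothesis.
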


Condition \ref{item:cond-to-equiv} follows directly from Claim \ref{T-diff-bound} as follows. Assume that $(x,n)\cong_{S} (y,n)$. From this assumption, it follows that $\| \hat{T}_{\track{x}{w}}\psi_0 - \hat{T}_{\track{y}{w}}\psi_0 \| ^2 < \mu < 2(1-2\varepsilon^*)/9 <2(1-2\varepsilon^*)$. Applying Claim \ref{T-diff-bound}, we then obtain $(x,n)\equiv_{S}(y,n)$, as requested.

In order to prove Claim \ref{T-diff-bound}, we need to prove two key claims, Claims \ref{T-acc-rej} and \ref{distance-S}.

\begin{claim}\label{T-acc-rej}
For any two elements $(x,n),(y,n)\in\Delta$ and any string $z\in\Sigma^*$ with $|x|=|y|$ and $|xz|=n$, it holds that
$
\| \hat{T}_{\track{x}{w}}\psi_{0} - \hat{T}_{\track{y}{w}}\psi_{0} \| ^2
\geq (\sqrt{p_{acc}(xz)} - \sqrt{p_{acc}(yz)})^2 + (\sqrt{p_{rej}(xz)} - \sqrt{p_{rej}(yz)})^2.
$
\end{claim}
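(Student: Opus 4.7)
\begin{proofof}{Claim \ref{T-acc-rej} (proposal)}
The plan is to reduce the inequality to a direct application of Lemma \ref{norm-property}(\ref{upper-bound}) by exploiting the compositional structure of $\hat{T}$ and the very definition of the norm on $\SSS$. Write $w = Pref_{|x|}(h(n))$ and let $w''$ denote the suffix of $h(n)$ of length $|z|$, so that $ww''=h(n)$. Set
\[
\psi = \hat{T}_{\cent\track{x}{w}}\psi_{0}, \qquad \psi' = \hat{T}_{\cent\track{y}{w}}\psi_{0},\qquad v = \track{z}{w''}\dollar.
\]
From the iterative definition of $\hat{T}$ we have $\hat{T}_{uu'}=\hat{T}_{u'}\hat{T}_{u}$ for any concatenation, so $\hat{T}_{v}\psi = \hat{T}_{\cent\track{xz}{h(n)}\dollar}\psi_{0}$ and similarly for $y$. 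By the definition of the action of $\hat{T}$ on $\SSS$, the second and third coordinates of these two elements are exactly $(p_{acc}(xz),p_{rej}(xz))$ and $(p_{acc}(yz),p_{rej}(yz))$, respectively.

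Next, I would unpack the norm on $\SSS$. Writing $\hat{T}_{v}\psi = (\qubit{\Phi_{x}},p_{acc}(xz),p_{rej}(xz))$ and $\hat{T}_{v}\psi' = (\qubit{\Phi_{y}},p_{acc}(yz),p_{rej}(yz))$, the prescribed norm yields
\[
\bigl\|\hat{T}_{v}\psi - \hat{T}_{v}\psi'\bigr\|^{2} = \bigl\|\qubit{\Phi_{x}}-\qubit{\Phi_{y}}\bigr\|^{2} + |p_{acc}(xz)-p_{acc}(yz)| + |p_{rej}(xz)-p_{rej}(yz)|.
\]
Since the first summand is nonnegative, dropping it gives the lower bound
\[
\bigl\|\hat{T}_{v}\psi - \hat{T}_{v}\psi'\bigr\|^{2} \geq |p_{acc}(xz)-p_{acc}(yz)| + |p_{rej}(xz)-p_{rej}(yz)|.
\]

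Finally, I would invoke Lemma \ref{norm-property}(\ref{upper-bound}) with the string $v$ and the inputs $\psi,\psi'$ to obtain $\bigl\|\hat{T}_{v}\psi - \hat{T}_{v}\psi'\bigr\| \leq \sqrt{2}\,\|\psi-\psi'\|$, hence $\bigl\|\hat{T}_{v}\psi - \hat{T}_{v}\psi'\bigr\|^{2} \leq 2\|\psi-\psi'\|^{2}$. Chaining the two displayed inequalities yields exactly the claim.

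There is no serious obstacle here; the whole content is that Lemma \ref{norm-property}(\ref{upper-bound}) transports a bound on the initial difference of triples to a bound on the post-measurement accumulated probabilities, and the tailor-made norm of $\SSS$ packages $|p_{acc}(xz)-p_{acc}(yz)|$ and $|p_{rej}(xz)-p_{rej}(yz)|$ additively. The only mild care needed is to verify that the advice string decomposes as $h(n)=ww''$ with $|w|=|x|=|y|$ so that the same suffix operator $\hat{T}_{v}$ acts in both computations; this follows from the hypothesis $|x|=|y|$ and $|xz|=n$.
\end{proofof}
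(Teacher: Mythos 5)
Your proposal is correct and follows essentially the same route as the paper: both decompose $\hat{T}_{\cent\track{xz}{h(n)}\dollar}\psi_0$ as $\hat{T}_{\track{z}{u}\dollar}$ applied to $\hat{T}_{\cent\track{x}{w}}\psi_0$, lower-bound the resulting difference by unpacking the norm on $\SSS$ and dropping the nonnegative first term, and then upper-bound it by $2\|\psi-\psi'\|^2$ via Lemma \ref{norm-property}(\ref{upper-bound}). No gaps.
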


\begin{proof}
By a direct calculation of the norm, we obtain
\begin{eqnarray*}
\lefteqn{\left\| \hat{T}_{\track{xz}{h(n)}}\psi_{0}
- \hat{T}_{\track{yz}{h(n)}}\psi_{0} \right\| ^2} \hs{10} \\
&=& \left\|  \left(\qubit{\phi_{xz}} - \qubit{\phi_{yz}}, \sqrt{p_{acc}(xz)} - \sqrt{p_{acc}(yz)}, \sqrt{p_{rej}(xz)} - \sqrt{p_{rej}(yz)} \right) \right\| ^2 \\
&=& \|  \qubit{\phi_{xz}} - \qubit{\phi_{yz}}\| ^2 + \left( \sqrt{p_{acc}(xz)} - \sqrt{p_{acc}(yz)} \right)^2 + \left( \sqrt{p_{rej}(xz)} - \sqrt{p_{rej}(yz)} \right)^2 \\
&\geq&  \left( \sqrt{p_{acc}(xz)} - \sqrt{p_{acc}(yz)} \right)^2 + \left( \sqrt{p_{rej}(xz)} - \sqrt{p_{rej}(yz)} \right)^2.
\end{eqnarray*}
On the contrary, since $\hat{T}_{\track{xz}{wu}}\psi_{0} = \hat{T}_{\track{z}{u}}(\hat{T}_{\track{x}{w}}\psi_{0})$ and $\hat{T}_{\track{yz}{wu}}\psi_{0} = \hat{T}_{\track{z}{u}}(\hat{T}_{\track{y}{w}}\psi_{0})$, Lemma \ref{norm-property}(\ref{upper-bound}) leads to  the following inequality:
\begin{eqnarray*}
\left\| \hat{T}_{\track{xz}{h(n)}}\psi_{0} - \hat{T}_{\track{yz}{h(n)}}\psi_{0} \right\| ^2
\leq  \left\|  \hat{T}_{\track{x}{w}}\psi_{0} -
\hat{T}_{\track{y}{w}}\psi_{0} \right\| ^2.
\end{eqnarray*}
By combining the above two inequalities, the claim immediately follows.
\end{proof}

\begin{claim}\label{distance-S}
Let $(x,n),(y,n)\in\Delta$ with $|x|=|y|$. If $\| \hat{T}_{\track{x}{w}}\psi_0 - \hat{T}_{\track{y}{w}}\psi_0 \| ^2< 2(1-2\varepsilon^*)$, then $S(xz) = S(yz)$ holds for all strings $z\in\Sigma^*$ satisfying $|xz|=n$.
\end{claim}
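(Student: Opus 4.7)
My plan is a short argument by contradiction that invokes Claim \ref{T-acc-rej} together with the bounded-error promise of the underlying 1qfa $M$.

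Suppose toward a contradiction that there exists a string $z\in\Sigma^*$ with $|xz|=n$ and $S(xz)\neq S(yz)$. Since $|x|=|y|$, the common advice string $h(n)$ can be split as $h(n)=wu$ with $w=Pref_{|x|}(h(n))=Pref_{|y|}(h(n))$ and $|u|=|z|$, so $M$ processes the two inputs $\cent\track{xz}{h(n)}\dollar$ and $\cent\track{yz}{h(n)}\dollar$. Without loss of generality assume $xz\in S$ and $yz\notin S$. Because $M$ halts after scanning $\dollar$, the acceptance and rejection probabilities sum to one, so from the bounded-error hypothesis $\prob_{M}[M(\track{v}{h(|v|)})=S(v)]\geq 1-\varepsilon$ we deduce
\[
p_{acc}(xz)\geq 1-\varepsilon,\qquad p_{rej}(xz)\leq\varepsilon,\qquad p_{acc}(yz)\leq\varepsilon,\qquad p_{rej}(yz)\geq 1-\varepsilon.
\]
Hence $|p_{acc}(xz)-p_{acc}(yz)|\geq 1-2\varepsilon$ and $|p_{rej}(xz)-p_{rej}(yz)|\geq 1-2\varepsilon$.

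Next I would plug these two inequalities into Claim \ref{T-acc-rej}, which gives
\[
2\left\|\hat{T}_{\cent\track{x}{w}}\psi_{0}-\hat{T}_{\cent\track{y}{w}}\psi_{0}\right\|^{2}\;\geq\;|p_{acc}(xz)-p_{acc}(yz)|+|p_{rej}(xz)-p_{rej}(yz)|\;\geq\;2(1-2\varepsilon).
\]
Therefore $\left\|\hat{T}_{\cent\track{x}{w}}\psi_{0}-\hat{T}_{\cent\track{y}{w}}\psi_{0}\right\|^{2}\geq 1-2\varepsilon$, directly contradicting the hypothesis $\left\|\hat{T}_{\cent\track{x}{w}}\psi_{0}-\hat{T}_{\cent\track{y}{w}}\psi_{0}\right\|^{2}<1-2\varepsilon$. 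This contradiction forces $S(xz)=S(yz)$ for every admissible $z$, which is the desired conclusion.

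The whole argument is essentially a one-line application of Claim \ref{T-acc-rej}, so there is no real obstacle; the only subtlety I would double-check is the bookkeeping on the advice, namely that the $w$ appearing on both sides really is the common prefix $Pref_{|x|}(h(n))$ of the advice actually consumed when $M$ reads $\cent\track{xz}{h(n)}\dollar$ and $\cent\track{yz}{h(n)}\dollar$, so that Claim \ref{T-acc-rej} may be invoked with the identical advice prefix on both inputs.
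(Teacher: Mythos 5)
Your proposal is correct and matches the paper's own proof essentially line for line: both argue by contradiction, derive $|p_{acc}(xz)-p_{acc}(yz)|\geq 1-2\varepsilon$ and $|p_{rej}(xz)-p_{rej}(yz)|\geq 1-2\varepsilon$ from the bounded-error guarantee, and then invoke Claim \ref{T-acc-rej} to contradict the hypothesis on the norm. The only difference is cosmetic (you make explicit the step $p_{acc}+p_{rej}\leq 1$ that the paper leaves implicit).
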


\begin{proof}
Assume that $\| \hat{T}_{\track{x}{w}}\psi_0 - \hat{T}_{\track{y}{w}}\psi_0 \| ^2< 2(1-2\varepsilon^*)$. To lead to a contradiction, we further assume that a certain string $z$ satisfies both $|xz|=n$ and $S(xz) \neq S(yz)$. The latter assumption (concerning $z$) implies that either  (i) $p_{acc}(xz) \geq 1-\varepsilon$ and $P_{acc}(yz)\leq \varepsilon$, or (ii) $p_{rej}(xz) \geq 1-\varepsilon$ and $P_{rej}(yz)\leq \varepsilon$.
In either case, since $\varepsilon^*=\sqrt{\varepsilon(1-\varepsilon)}$, we conclude that
$( \sqrt{p_{acc}(xz)} - \sqrt{p_{acc}(yz)})^2 \geq (\sqrt{1-\varepsilon}-\sqrt{\varepsilon})^2 = 1-2\varepsilon^*$ and, similarly,  $(\sqrt{p_{rej}(xz)} - \sqrt{p_{rej}(yz)})^2 \geq 1-2\varepsilon^*$.
By appealing to Claim \ref{T-acc-rej}, we obtain
\begin{eqnarray*}
\left\|  \hat{T}_{\track{x}{w}}\psi_{0} -
\hat{T}_{\track{y}{w}}\psi_{0} \right\| ^2
&\geq&  \left(\sqrt{p_{acc}(xz)} - \sqrt{p_{acc}(yz)}\right)^2 + \left(\sqrt{p_{rej}(xz)} - \sqrt{p_{rej}(yz)}\right)^2  \geq 2(1-2\varepsilon^*).
\end{eqnarray*}
This contradicts our first assumption that
$\|  \hat{T}_{\track{x}{w}}\psi_{0} -
\hat{T}_{\track{y}{w}}\psi_{0} \| ^2 < 2(1-2\varepsilon^*)$.
Therefore, the equation $S(xz)=S(yz)$ should hold for any string $z$ of length $n-|x|$.
\end{proof}

Finally, Claim \ref{T-diff-bound} can be proven in the following way.
Assuming  $\| \hat{T}_{\track{x}{w}}\psi_0 - \hat{T}_{\track{y}{w}}\psi_0 \| ^2< 2(1-2\varepsilon^*)$, Claim \ref{distance-S} yields the equality $S(xz)=S(yz)$ for any string $z$ of length $n-|x|$. This obviously implies the equivalence $(x,n)\equiv_{S}(y,n)$ because of the definition of $\equiv_{S}$. Therefore, Claim \ref{T-diff-bound} should be true.

As announced earlier, we want to expand a scope of $\equiv_{S}$ from $\bigcup_{n\in\nat}(\Delta_n\times\Delta_n)$ to $\Delta\times\Delta$.  Before giving a precise definition of $\equiv_{S}$,
we briefly discuss an upper-bound of the cardinality $|\Delta_n/\!\equiv_{S}\!|$. Recall that $d= 2(|E_{Q}|+2)^2/\mu$.

\begin{claim}\label{equivalance-class}
For every length $n\in\nat$, $|\Delta_n/\!\equiv_{S}\!|\leq d$ holds.
\end{claim}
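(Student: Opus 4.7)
The plan is to leverage the uniform bound $d$ on the cardinality of $\cong$-discrepancy subsets of $\Delta$ that was established just before this claim. Fix $n \in \nat$, enumerate the equivalence classes $C_1, C_2, \ldots, C_k$ of $\Delta_n/\!\equiv_{S}$, and pick one representative $(x_i,n) \in C_i$ from each class. Setting $R = \{(x_i,n) : 1 \leq i \leq k\}$, the claim reduces to arguing that $R$ is a $\cong$-discrepancy subset of $\Delta$; once this is shown, the definition of $d$ immediately gives $k = |R| \leq d$.

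To verify that $R$ is a $\cong$-discrepancy set, I would show that any two distinct representatives $(x_i,n), (x_j,n) \in R$ satisfy $(x_i,n) \not\cong (x_j,n)$. If instead $(x_i,n) \cong (x_j,n)$ were to hold, then by the definition of $\cong$ one has $\left\|\hat{T}_{\cent\track{x_i}{w_i}}\psi_0 - \hat{T}_{\cent\track{x_j}{w_j}}\psi_0\right\|^2 < \mu < 1-2\varepsilon$, so Claim \ref{T-diff-bound} (which supplied Condition 2) yields $(x_i,n) \equiv_{S} (x_j,n)$, contradicting the fact that the representatives were chosen from distinct $\equiv_{S}$-classes. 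Thus $R$ is pairwise $\not\cong$, and the cardinality bound is immediate.

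The main obstacle I anticipate is that Claim \ref{T-diff-bound}, as it stands, assumes $|x_i| = |x_j|$, whereas $\Delta_n$ contains pairs of all lengths up to $n$ and there is no a priori reason why representatives of distinct classes must share a length. I would handle this in one of two ways. The direct route is to extend Claim \ref{T-diff-bound} to pairs of unequal length: starting from $\cong$-closeness of the quantum states after processing $x_i$ and $x_j$ (each with its own initial segment of $h(n)$), apply Lemma \ref{norm-property}(\ref{upper-bound}) to propagate that closeness through any common continuation $z$ with $|x_i z| = n$, and then combine with Claim \ref{T-acc-rej} exactly as in Claim \ref{distance-S} to force the acceptance/rejection probabilities of $M$ on $x_i z$ and $x_j z$ to agree within less than $1-2\varepsilon$; the bounded-error hypothesis then implies $S(x_i z) = S(x_j z)$, hence $(x_i,n) \equiv_{S} (x_j,n)$. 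The alternative route is to observe that within each $\equiv_S$-class of $\Delta_n$ a representative can be chosen so that all representatives have a common length, reducing the situation back to the clean $|x_i| = |x_j|$ regime covered verbatim by Claim \ref{T-diff-bound}. Either resolution closes the contradiction and yields $|\Delta_n/\!\equiv_{S}\!| \leq d$.
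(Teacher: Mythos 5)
Your core argument is the paper's own proof written in contrapositive form: representatives of distinct $\equiv_{S}$-classes are pairwise inequivalent, hence pairwise $\not\cong$ by Condition 2 (i.e., Claim \ref{T-diff-bound}), so they constitute a $\cong$-discrepancy set and Condition 7 bounds their number by $d$. The paper runs the same argument by contradiction, silently choosing all $d+1$ witnesses from $\Sigma^n$, so the equal-length hypothesis of Claim \ref{T-diff-bound} is automatically satisfied; the unequal-length worry you raise is thus a genuine imprecision in the claim's statement rather than in the argument, and the paper's implicit resolution is your second route (work length-by-length), not an extension of Claim \ref{T-diff-bound}. Be aware that your first (``direct'') route does not go through as written: if $|x_i|\neq|x_j|$ there is no single continuation $z$ with $|x_iz|=|x_jz|=n$, so you cannot propagate closeness through a common suffix via Lemma \ref{norm-property}(\ref{upper-bound}) and then invoke Claim \ref{T-acc-rej}, and indeed $\equiv_{S}$ itself (Condition 5) is only defined for equal-length pairs, so the conclusion $(x_i,n)\equiv_{S}(x_j,n)$ would not even be meaningful. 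Your second route likewise needs the (unstated) convention that each class of $\Delta_n$ lives at a single length, but with that reading it reduces exactly to the paper's proof.
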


\begin{proof}
Let us assume otherwise; namely, $|\Delta_n/\!\equiv_{S}\!| > d$ for a certain number $n\in\nat$. Clearly, $n$ must be at least $d+1$. Now, take $d+1$ different  strings $x_1,x_2,\ldots,x_{d+1}\in\Sigma^n$ so that   $(x_i,n)\not\equiv_{S}(x_j,n)$ holds for every distinct pair $i,j\in[d+1]$.
{}From Condition \ref{item:cond-to-equiv} follows the inequality $(x_i,n)\not\cong_{S}(x_j,n)$.
Next, let us consider the set $G=\{(x_i,n)\mid i\in[d+1]\}$. Obviously, $|G|=d+1$ holds. Since $G$ is a $\cong_{S}$-discrepancy subset of $\Delta_n$, Condition \ref{item:discrepancy} implies $|G|\leq d$. This is clearly a contradiction. Therefore, the claim should be  true.
\end{proof}

Since $|\Delta_n/\!\equiv_{S}\!|\leq d$ holds for each length $n\in\nat$ by Claim \ref{equivalance-class}, each set $\Delta_n/\!\equiv_{S}$ can be expressed as $\{A_{n,1},A_{n,2},\ldots,A_{n,d}\}$, provided that, in the case of  $|\Delta_n/\!\equiv_{S}\!|<d$ for a certain $n$, we automatically set $A_{n,i}=\setempty$ for any index $i$ satisfying  $|\Delta_n/\!\equiv_{S}\!|<i\leq d$.
Now, we expand $\equiv_{S}$ in the following natural way. For two arbitrary elements $(x,n)$ and $(y,m)$ in $\Delta$ with $n\neq m$, let $(x,n)\equiv_{S}(y,m)$ if there exists an index $i\in[d]$ such that $(x,n)\in A_{n,i}$ and $(y,m)\in A_{m,i}$. Note that this extended version of $\equiv_{S}$ is also an equivalence relation. {}From the above definition of $\equiv_{S}$, the set $\Delta/\!\equiv_{S}$ is obviously finite, and hence Condition \ref{item:equiv-class}  is satisfied.

As the third stage, we will define the desired partial order $\leq_{S}$ on $\Delta$. Here, we write $(x,n)\leq_{S}(y,m)$ if there exist two numbers $s,s'\in\nat$ for which (i) $0\leq s\leq s'\leq \ceilings{1/\mu^2}$, (ii) $(s-1)\mu^2<\| \qubit{\phi_x}\| ^2\leq s\mu^2$, and (iii) $(s'-1)\mu^2<\| \qubit{\phi_y}\| ^2\leq s'\mu^2$.
As remarked earlier, we write $(x,n)=_{S}(y,m)$ exactly when $(x,n)\leq_{S}(y,m)$ and $(y,m)\leq_{S}(x,n)$. In particular, when
$(x,n)=_{S}(y,m)$ holds, we obtain
$\| |\qubit{\phi_x}\| ^2-\| \qubit{\phi_y}\| ^2|<\mu^2$.
It is easy to check that $\leq_{S}$ is reflexive, antisymmetric, and transitive; thus, $\leq_{S}$ is truly a partial order.
Since $\| \qubit{\phi_{x\sigma}}\| \leq \| \qubit{\phi_x}\|$ always holds for any pair
$(x,\sigma)\in \Sigma^*\times\Sigma$, we conclude that $(x\sigma,n)\leq_{S}(x,n)$ if $|x\sigma|\leq n$.

When $|x|=n>0$, $\|\ket{\phi_x}\|^2\leq\varepsilon$ always holds because either $p_{acc}(x)\geq 1-\varepsilon$ or $P_{rej}(x)\geq 1-\varepsilon$ must hold.
From $0<\mu<2/9$ and $0<\varepsilon< 1/2$,
it follows that $1-\varepsilon> \frac{1}{2} >\frac{27}{2}\mu^2$.
Since $\|\ket{\phi_{\lambda}}\|^2= \|\ket{q_0}\|^2 = 1$, we obtain $\|\ket{\phi_{\lambda}}\|^2 - \|\ket{\phi_x}\|^2 \geq 1-\varepsilon > \frac{27}{2}\mu^2$, in other words, $(x,n)<_{S}(\lambda,n)$.
Therefore, Condition \ref{item:less-than} is met.

Regarding Condition \ref{item:chain}, we set the desired constant $c$ to be $\ceilings{1/\mu^2}+1$.
Consider any strictly descending chain in $\Delta$ with respect to $<_{S}$:
$(x_e,n_e)<_{S}(x_{e-1},n_{e-1})<_{S}\cdots <_{S}(x_1,n_1)$, where $e$ is the length of the chain. It should hold that $\|\qubit{\phi_{x_i}}\|^2 - \|\qubit{\phi_{x_{i+1}}}\|^2\geq\mu^2$ for any index $i\in[0,e-1]_{\integer}$. This implies
\[
\|\qubit{\phi_{x_1}}\|^2 - \|\qubit{\phi_{x_e}}\|^2 = \sum_{i=1}^{e-1} \left(\|\qubit{\phi_{x_i}}\|^2 - \|\qubit{\phi_{x_{i+1}}}\|^2\right) \geq (e-1)\mu^2.
\]
Since $\|\qubit{\phi_{x_1}}\|^2-\|\qubit{\phi_{x_e}}\|^2\leq 1$,  $(e-1)\mu^2 \leq 1$ immediately follows; therefore, we conclude that $e\leq 1+1/\mu^2 \leq c$.  Condition \ref{item:chain} thus follows.

The remaining conditions to verify is only Condition \ref{item:reverse}. To show this condition, firstly we will prove  Claim \ref{T-norm-diff}, which follows from
Lemma \ref{norm-property}(\ref{lower-bound}).

\begin{claim}\label{T-norm-diff}
Let $\alpha,\gamma\in(0,1]$. Let $(x,n),(y,n)\in\Delta$ and assume that $|x|=|y|$ and $|xz|\leq n$. If  $\| \hat{T}_{\track{xz}{wu}}\psi_0 - \hat{T}_{\track{yz}{wu}}\psi_0 \| ^2<\gamma$, $\| \qubit{\phi_{x}}\| ^2 - \| \qubit{\phi_{xz}}\| ^2<\alpha$, and $\| \qubit{\phi_{y}}\| ^2 - \| \qubit{\phi_{yz}}\| ^2<\alpha$, then
$\| \hat{T}_{\track{x}{w}}\psi_0 - \hat{T}_{\track{y}{w}}\psi_0 \| ^2<\gamma + 8\sqrt{\alpha}$, where $w$ and $u$ satisfy $wu=Pref_{|xz|}(h(n))$ with $|w|=|x|$ and $|u|=|z|$.
\end{claim}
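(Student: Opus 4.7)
\n\textbf{Proof Proposal.}\hs{2}
The plan is to reduce the claim to a single application each of Lemma~\ref{norm-property}(\ref{lower-bound}) and Lemma~\ref{norm-property}(\ref{diff-estimate}). Introduce the abbreviations $\psi = \hat{T}_{\cent\track{x}{w}}\psi_0 = (\qubit{\phi_x},\gamma_{x,1},\gamma_{x,2})$ and $\psi' = \hat{T}_{\cent\track{y}{w}}\psi_0 = (\qubit{\phi_y},\gamma_{y,1},\gamma_{y,2})$. By the very definition of $\hat{T}$ on strings, $\hat{T}_{\cent\track{xz}{wu}}\psi_0 = \hat{T}_{\track{z}{u}}\psi$ and likewise with $y$ in place of $x$; hence the first hypothesis of the claim reads $\|\hat{T}_{\track{z}{u}}\psi - \hat{T}_{\track{z}{u}}\psi'\|^2 < \gamma$.

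Next I would apply Lemma~\ref{norm-property}(\ref{lower-bound}) to the pair $(\psi,\psi')$ with the operator $\hat{T}_{\track{z}{u}}$. Rearranging its inequality produces
\[
\|\psi - \psi'\|^2 \;\leq\; \|\hat{T}_{\track{z}{u}}\psi - \hat{T}_{\track{z}{u}}\psi'\|^2 \;+\; 3\bigl(\|\qubit{\phi_x} - \qubit{\phi_y}\|^2 - \|T_{\track{z}{u}}(\qubit{\phi_x} - \qubit{\phi_y})\|^2\bigr).
\]
The first summand is already controlled by $\gamma$, so only the ``defect'' term involving $T_{\track{z}{u}}$ needs to be bounded.

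To handle the defect, I would exploit the fact that the first coordinate of $\hat{T}$ evolves exactly by $T$, which gives $T_{\track{z}{u}}\qubit{\phi_x} = \qubit{\phi_{xz}}$ and $T_{\track{z}{u}}\qubit{\phi_y} = \qubit{\phi_{yz}}$. Consequently, the hypotheses $\|\qubit{\phi_x}\|^2 - \|\qubit{\phi_{xz}}\|^2 < \alpha$ and $\|\qubit{\phi_y}\|^2 - \|\qubit{\phi_{yz}}\|^2 < \alpha$ match exactly the quantities $\|\qubit{\phi}\|^2 - \|T_{\track{z}{u}}\qubit{\phi}\|^2$ that appear on the right-hand side of Lemma~\ref{norm-property}(\ref{diff-estimate}). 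A single invocation of that lemma then bounds the defect by a constant multiple of $\alpha$, and plugging back into the displayed inequality yields the claimed estimate $\|\psi - \psi'\|^2 < \gamma + 9\alpha$.

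I do not expect a serious conceptual obstacle here, since the framework of Lemma~\ref{norm-property} has been tailored precisely for such chained estimates. The only genuine subtlety is bookkeeping: one must verify that the composition identity $\hat{T}_{\cent\track{xz}{wu}}\psi_0 = \hat{T}_{\track{z}{u}}\hat{T}_{\cent\track{x}{w}}\psi_0$ aligns the two hypothesis norms with exactly the quantities $\|\qubit{\phi}\|^2 - \|T_x\qubit{\phi}\|^2$ demanded by Lemma~\ref{norm-property}(\ref{diff-estimate}), and then track the multiplicative constants produced by the two lemmas so that the final coefficient of $\alpha$ comes out to $9$ rather than something larger.
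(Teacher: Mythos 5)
Your proposal follows essentially the same route as the paper's proof: the composition identity $\hat{T}_{\cent\track{xz}{wu}}\psi_0 = \hat{T}_{\track{z}{u}}\hat{T}_{\cent\track{x}{w}}\psi_0$, Lemma~\ref{norm-property}(\ref{lower-bound}) to pass from $\|\psi-\psi'\|^2$ to the image norm plus three times the defect $\|\qubit{\phi_x}-\qubit{\phi_y}\|^2-\|T(\qubit{\phi_x}-\qubit{\phi_y})\|^2$, and Lemma~\ref{norm-property}(\ref{diff-estimate}) to bound that defect via the two hypotheses on $\|\qubit{\phi_x}\|^2-\|\qubit{\phi_{xz}}\|^2$ and $\|\qubit{\phi_y}\|^2-\|\qubit{\phi_{yz}}\|^2$. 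The one loose end you flag, the exact coefficient of $\alpha$, is equally loose in the paper itself (whose final display yields $\gamma+6\alpha$ against the stated $\gamma+9\alpha$), and any absolute constant suffices for the downstream application.
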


\begin{proof}
Note that  $\qubit{\phi_{xz}} = T_{\track{z}{u}}\qubit{\phi_x}$ and
$\qubit{\phi_{yz}} = T_{\track{z}{u}}\qubit{\phi_y}$.
For convenience, we set $\psi = \hat{T}_{\track{x}{w}}\psi_0$ and $\psi' = \hat{T}_{\track{y}{w}}\psi_0$. Those vectors $\psi$ and $\psi'$ satisfy that $\hat{T}_{\track{z}{u}}\psi = \hat{T}_{\track{xz}{wu}}\psi_0$ and
$\hat{T}_{\track{z}{u}}\psi' = \hat{T}_{\track{yz}{wu}}\psi_0$.
Since $\psi,\psi'\in\YY_{[0,1]}$ and thus $ \hat{T}_{\track{z}{u}}\psi, \hat{T}_{\track{z}{u}}\psi'\in\YY_{[0,1]}$,  it is possible to apply Lemma \ref{norm-property}(\ref{lower-bound}) and then obtain
\begin{eqnarray*}
\|  \psi - \psi' \| ^2
&\leq& \left\|  \hat{T}_{\track{z}{u}}\psi - \hat{T}_{\track{z}{u}}\psi' \right\| ^2 +
 \left( \|\qubit{\phi_x}\|^2 - \|\qubit{\phi_{xz}}\|^2 \right) +
\left( \|\qubit{\phi_y}\|^2 - \|\qubit{\phi_{yz}}\|^2 \right) \\
&& \hs{10} + 4 \sqrt{ \left( \|\qubit{\phi_x}\|^2 - \|\qubit{\phi_{xz}}\|^2 \right) +
\left( \|\qubit{\phi_y}\|^2 - \|\qubit{\phi_{yz}}\|^2 \right) } \\
&<& \gamma + 2\alpha + 4 \sqrt{2\alpha} \\
&<& \gamma + 2\sqrt{\alpha} + 6\sqrt{\alpha} \;\;=\;\; \gamma + 8\sqrt{\alpha},
\end{eqnarray*}
where the last inequality comes from $\alpha\leq1$. Therefore, we obtain $\|  \psi - \psi' \| ^2 <  \gamma + 8\sqrt{\alpha}$, as requested.
\end{proof}

To verify Condition \ref{item:reverse}, let us assume that $(xz,n)\cong_{S}(yz,n)$, $(xz,n)=_{S}(x,n)$, and $(yz,n)=_{S}(y,n)$. In other words,  $\| \hat{T}_{\track{xz}{wu}}\psi_0 - \hat{T}_{\track{yz}{wu}}\psi_0 \| ^2 < \mu$,
$\| \qubit{\phi_{xz}}\| ^2-\| \qubit{\phi_{x}}\| ^2<\mu^2$, and $\| \qubit{\phi_{yz}}\| ^2-\| \qubit{\phi_{y}}\| ^2<\mu^2$. By setting $\gamma=\mu$ and $\alpha=\mu^2$ in
Claim \ref{T-norm-diff}, we conclude that  $\| \hat{T}_{\track{x}{w}}\psi_0 - \hat{T}_{\track{y}{w}}\psi_0 \| ^2 < \mu + 8\sqrt{\mu^2} \leq 9\mu$. Since $9\mu<2(1-2\varepsilon^*)$, Claim \ref{T-diff-bound} yields the equivalence $(x,n)\equiv_{S}(y,n)$. Therefore, Condition \ref{item:reverse} is true.

The proof of Theorem \ref{oneqfa-character} is now completed.
\end{proofof}

Theorem \ref{oneqfa-character} reveals a certain aspect of the characteristic  features of advised 1qfa's, from which we can deduce several important   consequences. Here, we intend to apply Theorem \ref{oneqfa-character} to demonstrate a class separation between $\reg$ and $\oneqfa/n$.
Without any use of advice, Kondacs and Watrous \cite{KW97} proved that
$\reg\nsubseteq \oneqfa$. Our class separation naturally extends their result and further indicates that 1qfa's are still not as powerful as 1dfa's even with a great help of advice.

\begin{corollary}\label{reg-oneqfa-advice}
 $\reg\nsubseteq \oneqfa/n$, and thus $\oneqfa/n\neq \reg/n$.
\end{corollary}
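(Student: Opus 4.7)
The second conclusion follows at once from the first, since $\reg\subseteq\reg/n$ (the 1dfa may ignore its advice), so any regular witness $L_{0}\in\reg\setminus\oneqfa/n$ separates both classes. The plan for the first part is to argue by contradiction through Theorem~\ref{oneqfa-character}: fix a suitable regular $L_{0}$, assume $L_{0}\in\oneqfa/n$, extract the constants $c,d$ and the relations $\equiv_{S},\leq_{S},\cong$ on $\Delta$ satisfying Conditions 1--7, and derive a combinatorial contradiction.

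For a regular $L_{0}$ the number of equivalence classes $|\Delta_{n}/\!\equiv_{S}|$ is bounded by the Myhill--Nerode index, so Conditions 1 and 5 impose nothing. The obstruction must therefore be squeezed out of the interaction between Conditions 3, 4, 6, and 7. The attack I would carry out is as follows. For every length $n$, the prefix chain $(\lambda,n)\geq_{S}(u_{1},n)\geq_{S}\cdots\geq_{S}(u_{1}\cdots u_{n},n)$ of an input $u=u_{1}\cdots u_{n}$ is descending by Condition~3, and Condition~6 caps its strict drops at $c$, so a pigeon-hole argument yields an index window of length at least $\lfloor n/(c+1)\rfloor$ on which $=_{S}$ holds. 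Next, I would choose a common suffix $z$ (compatible with $L_{0}$'s combinatorial structure) that preserves the plateau after being appended, and then invoke the contrapositive of Condition~4: if $x,y$ are two prefixes in the window with $(x,n)\not\equiv_{S}(y,n)$, then $(xz,n)\not\cong(yz,n)$. Running this over a family of inputs designed so that the window contains many pairwise $\not\equiv_{S}$ prefixes produces a $\cong$-discrepancy subset of $\Delta$; making it exceed $d$ for infinitely many $n$ contradicts Condition~7, completing the proof.

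The principal difficulty is selecting $L_{0}$ so that the forced discrepancy set provably exceeds every fixed $d$. Because the Myhill--Nerode index of a single regular language is finite, arbitrarily many pairwise $\not\equiv_{S}$ prefixes cannot all live at one length $n$, and the witness must exploit growth across varying $n$ and a suffix $z$ that mixes MN-classes in a scale-dependent way. A promising candidate is a regular language in the spirit of Kondacs--Watrous's $L_{a}=\Sigma^{*}a$ but enriched with additional "last-symbol tests" at many depths, so that deterministic advice of length $n$ can localise only finitely many critical positions while infinitely many are demanded by the combinatorics. Verifying that no admissible $\cong$ (one respecting Condition~7) can simultaneously satisfy Condition~4 on such an $L_{0}$ is the main technical hurdle, and is where the algebraic machinery of Theorem~\ref{oneqfa-character} pays off, converting the awkward "quantum advice" question into a concrete combinatorial impossibility.
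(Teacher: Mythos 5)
Your reduction of the second claim to the first is fine, and your high-level plan---assume $L_0\in\oneqfa/n$, pull out $c,d,\equiv_S,\leq_S,\cong$ from Theorem~\ref{oneqfa-character}, use Conditions 3 and 6 to locate a long $=_S$-plateau along a prefix chain, and then use Condition 4 against Condition 7 to manufacture a too-large $\cong$-discrepancy set---is exactly the skeleton of the paper's argument. But the proposal stops precisely where the proof begins: you never exhibit a concrete witness language, and you never prove the combinatorial claim that the plateau forces more than $d$ pairwise non-close elements. Worse, your own diagnosis of the obstacle points you in the wrong direction. You observe that the number of pairwise $\not\equiv_S$ prefixes at a single length is capped by the Myhill--Nerode index, and conclude that the witness ``must exploit growth across varying $n$.'' That is not how the contradiction is obtained. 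The discrepancy set one needs consists of elements that are pairwise $\not\cong$, and $\not\cong$ is strictly weaker than $\not\equiv_S$ (Condition 2 only gives $\cong\,\Rightarrow\,\equiv_S$). The paper works at a \emph{single} length $n$: for the language $S=(aa+ab+ba)^*$ it takes the $2^{\lceil\log k\rceil+1}\geq 2k$ strings $w\in((a+b)a)^*$ of a fixed short length, and shows that for each \emph{pair} $w\neq w'$, a pair of shorter prefixes ($\overline{x}_iya$ versus $\overline{x}_iy'b$, depending on where $w$ and $w'$ first differ) is $\not\equiv_S$; the contrapositive of Condition 4, applied on the plateau, then yields $(z_{i,w},n)\not\cong(z_{i,w'},n)$. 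The resulting discrepancy set has size $2k>d$ even though the number of $\equiv_S$-classes at length $n$ is at most $d$. Your sketch, which asks for many pairwise $\not\equiv_S$ prefixes sharing one common suffix $z$, cannot produce more than $d$ elements by your own observation, so the step ``making it exceed $d$'' would fail as stated.

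Two further concrete problems. First, your candidate language ``in the spirit of $L_a=\Sigma^*a$'' is a dangerous starting point: the paper's Lemma~\ref{partial-order-cond} shows $L_a$ itself \emph{is} in $\oneqfa/n$, so last-symbol tests are exactly the kind of structure that advice neutralizes; the actual witness $(aa+ab+ba)^*$ encodes a position-parity constraint (no $bb$ starting at an odd position) that a single length-$n$ advice string cannot localize. Second, the quantitative bookkeeping that closes the argument---choosing $n\geq(2k+1)(\lceil\log k\rceil+1)$, greedily building a strictly descending chain of at most $c\leq k$ blocks each of length at most $2(\lceil\log k\rceil+1)$, and deriving $n<(2k+1)(\lceil\log k\rceil+1)$ from maximality---is absent. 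As written, the proposal is a research plan with the two hardest steps (the witness and the discrepancy-set count) explicitly deferred, so it does not constitute a proof.
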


\begin{proof}
Our example language $S$ over a binary alphabet $\Sigma=\{a,b\}$ is
expressed in the form of {\em regular expression} as $(aa+ab+ba)^*$.
Since $S$ is obviously a regular language, hereafter we aim at verifying
that $S$ is located outside of $\oneqfa/n$. Assume otherwise; that is,  $S$ belongs to $\oneqfa/n$. Letting $\Delta = \{(x,n)\in\Sigma^*\times\nat\mid |x|\leq n\}$, Theorem \ref{oneqfa-character} guarantees the existence of two constants $c,d\in\nat^{+}$, an equivalence relation $\equiv_{S}$, a partial order $\leq_{S}$, and a closeness relation $\cong_{S}$ that satisfy
Conditions 1--7 given in the theorem.
We set $k =\max\{c,d\}$. Moreover, let $n$ denote the minimal {\em even} integer satisfying $n \geq (2k+1)(\ceilings{\log{k}}+1)$.

To draw a contradiction, we want to construct a special string, say, $x$ of length at most $n$. Inductively, we will build a series
$x_1,x_2,\ldots,x_m$ of
strings, each of which has length at most $2(\ceilings{\log{k}}+1)$, as long as the total length $|x_1\cdots x_m|$ does not exceed $n$.
For our convenience, set $x_0=\lambda$. The construction of such a series is described as follows. Assuming that $x_0,x_1,x_2,\ldots,x_i$ are already defined and they satisfy $|x_1\cdots x_i|< n- 2(\ceilings{\log{k}}+1)$, we
define $x_{i+1}$ in the following way. Let us denote by $\overline{x}_i$ the concatenated string $x_1x_2\cdots x_{i}$ and denote by $z_{i,w}$ the string $\overline{x}_iw$ for each given string $w$ in $((a+b)a)^*$ satisfying the inequity $|\overline{x}_iw|\leq n$. Now, we claim our key statement.

\begin{claim}\label{reduction-L}
There exists a nonempty string $w$ in $((a+b)a)^*$ such that both $|w|\leq 2(\ceilings{\log{k}}+1)$ and $(z_{i,w},n)<_{S} (\overline{x}_i,n)$ hold.
\end{claim}

Assuming that Claim \ref{reduction-L} is true, we choose the lexicographically-first nonempty string $w$ in $((a+b)a)^*$ that satisfies both $|w|\leq 2(\ceilings{\log{k}}+1)$ and $(z_{i,w},n)<_{S} (\overline{x}_{i},n)$.
The desired string $x_{i+1}$ in our construction is defined to be  this special  string $w$. Note that $\overline{x}_{i+1} = \overline{x}_{i}x_{i+1}$ holds. After the whole construction ends, let us assume that we have obtained $x_1,x_2,\ldots,x_m$. Obviously, it holds that
$|x_1x_2\cdots x_m|\leq 2m(\ceilings{\log{k}}+1)$.
Our construction also ensures that
$(\overline{x}_{m},n)<_{S} (\overline{x}_{m-1},n)<_{S} \cdots <_{S} (\overline{x}_1,n)$; thus, the sequence $((\overline{x}_{m},n), (\overline{x}_{m-1},n), \ldots,(\overline{x}_1,n))$ forms a strictly descending chain in $\Delta$.
Since $m\leq c$ by Condition \ref{item:chain}, $m\leq k$ follows.
Thus, we deduce $|x_1x_2\cdots x_m|\leq 2k(\ceilings{\log{k}}+1)$. Moreover, it holds that $|x_1x_2\cdots x_m|> n - 2(\ceilings{\log{k}}+1)$ because, otherwise, there still remains enough room for another string $x_{m+1}$ to satisfy, by Claim \ref{reduction-L}, both $|x_{m+1}|\leq 2(\ceilings{\log{k}}+1)$ and $(\overline{x}_{m+1},n)<_{S} (\overline{x}_{m},n)$, contradicting the maximality of the length $|x_1x_2\cdots x_m|$.  As a result, we obtain $n-2(\ceilings{\log{k}}+1) <|x_1x_2\cdots x_m|\leq 2k (\ceilings{\log{k}}+1)$, from which we conclude that $n < (2k+1)(\ceilings{\log{k}}+1)$. This is clearly a contradiction against  $n\geq (2k+1)(\ceilings{\log{k}}+1)$.   Therefore, $S$ cannot belong to $\oneqfa/n$.

To complete the proof of Corollary \ref{reg-oneqfa-advice}, it still remains to prove Claim \ref{reduction-L}.  This claim can be proven by a way of contradiction with a careful use of Conditions \ref{item:reverse}, \ref{item:equiv-S}, and \ref{item:discrepancy}.
Let us assume that $\overline{x}_i$ is already defined in our construction process. Toward a contradiction, we suppose that the claim fails; that is, for any nonempty string $w\in((a+b)a)^*$ with $|w|\leq 2(\ceilings{\log{k}}+1)$,  the equality
$(z_{i,w},n) =_{S}(\overline{x}_i,n)$ always holds. Under this assumption, it is possible to prove the following statement.

\begin{claim}\label{w-a-b}
For any two distinct pair $w,w'$ in $S$ with $|w|=|w'|\leq n-2$, it holds that $(wa,n)\not\equiv_{S} (w'b,n)$.
\end{claim}

For the time being, let us assume that Claim \ref{w-a-b} is true.
Let $X_k$ denote the set of all strings in $((a+b)a)^*$ of length exactly $2(\ceilings{\log{k}}+1)$.
Note that the total number of strings in $X_k$ is $2^{\ceilings{\log{k}}+1}\geq 2k$. We then define $G_n$ to be the set of all elements $(z_{i,w},n)\in \Delta$ associated with certain strings $w$ in $X_k$. Note that $|G_n|=|X_k|\geq 2k$.
Now, we want to show that $G_n$ is a $\cong_{S}$-discrepancy set.  Assume otherwise; that is,  two {\em distinct} strings $w,w'\in X_k$ satisfy $(z_{i,w},n)\cong_{S} (z_{i,w'},n)$.
For those special strings, there are (possibly empty) strings $y,y',z$ for which  $w=yaaz$ and $w'=y'baz$. Note that $|\overline{x}_iy|=|\overline{x}_iy'|\leq |z_{i,w}|-2\leq  n-2$ since $|z_{i,w}|\leq n$.
By applying Claim \ref{w-a-b} to the two strings $\overline{x}_iy$ and $\overline{x}_iy'$, we conclude that $(\overline{x}_iya,n) \not\equiv_{S}(\overline{x}_iy'b,n)$.
Since $(z_{i,w},n) =_{S} (\overline{x}_i,n) =_{S} (z_{i,w'},n)$ holds by our assumption, from $(z_{i,w},n)\cong_{S} (z_{i,w'},n)$,  Condition \ref{item:reverse} implies that $(\overline{x}_{i}ya,n)\equiv_{S}(\overline{x}_iy'b,n)$. This is a contradiction, and therefore $G_n$ is indeed a $\cong_{S}$-discrepancy subset of $\Delta$.  Condition \ref{item:discrepancy} then implies that $|G_n|\leq d\leq k$. However, this contradicts $|G_n|\geq 2k$. Therefore, Claim \ref{reduction-L} should hold.

Finally, let us prove Claim \ref{w-a-b} by induction on length $|w|$. Consider the case where $|w|=0$. Assume that $(a,n)\equiv_{S}(b,n)$.
The definition of $S$ implies the existence of a string $z$ for which  $|az|=n$ and $S(az)\neq S(bz)$. For instance, when $n=2$, it holds that
$S(ab)\neq S(bb)$.
However, Condition  \ref{item:equiv-S} yields $S(az) = S(bz)$, leading to a contradiction. Thus, it follows that $(a,n)\not\equiv_{S}(b,n)$.
Next, consider the case where $0<|w|\leq n-2$. Since $w,w'\in S$, there exists a string $z$ such that $|wabz|=n$ and $S(wabz) \neq S(w'bbz)$. If $(wa,n)\equiv_{S}(w'b,n)$, then Condition  \ref{item:equiv-S} also yields the equality  $S(wabz)=S(w'bbz)$, a contradiction. We thus conclude that $(wa,n)\not\equiv_{S}(w'b,n)$.
\end{proof}

\section{Power of Reversible Computation with Advice}\label{sec:reversible-advice}

As a special case of quantum computation, we turn our attention to {\em error-free} quantum computation and we wish to discuss characteristic behaviors of such computation, particularly assisted by resourceful deterministic advice. Since such error-free quantum computation has been known to coincide with ``reversible'' computation, we are focused on a model of {\em one-way (deterministic) reversible finite automaton}\footnote{This machine model is different from those defined in \cite{Ang82,Pin92}. See \cite{AF98} for more details.} (or 1rfa, in short), which was discussed in \cite{AF98}.
In this paper, a 1rfa is introduced as a 1dfa  $M=(Q,\Sigma,\delta,q_0,Q_{acc},Q_{rej})$ whose transition function  $\delta:Q\times\check{\Sigma}\to Q$ satisfies a special condition, called a {\em reversibility condition}; namely, for every inner state $q\in Q$ and every symbol $\sigma\in\check{\Sigma}$, there exists at most one inner state $q'\in Q$ that makes a transition $\delta(q',\sigma)=q$.
In a similar way as 1qfa's, as soon as $M$ enters any halting state, it instantly stops operating and accepts (resp., rejects) a given input instance if any accepting state (resp., rejecting state) is reached. Moreover, we demand that, for every input $x$, before or on reading the right endmarker $\dollar$, $M$
must halt (because, otherwise, we cannot determine whether $M$ accepts or rejects $x$).
We use the notation $\onerfa$ for the family of all languages recognized by those 1rfa's.
Analogous to $\reg/n$, the advised language family $\onerfa/n$ is composed of all languages $L$ over appropriate alphabets $\Sigma$  for which  there exist a 1rfa $M$ and a length-preserving advice function $h$ satisfying
$M(\track{x}{h(|x|)}) = L(x)$ for every string $x\in\Sigma^*$.  {}From the obvious relation $\onerfa\subseteq\oneqfa$ follows the containment $\onerfa/n\subseteq\oneqfa/n$.

In what follows, we will discuss more intriguing features of advised 1rfa's.


\subsection{A Necessary and Sufficient Condition for 1RFA/n}\label{sec:N-S-condition-1RFA}

In Theorem \ref{oneqfa-character}, we have presented a machine-independent, algebraic necessary condition for languages recognized by advised 1qfa's with bounded-error probability. When underlying finite automata are restricted to 1rfa's, it is possible to strengthen the theorem by giving a precise machine-independent, algebraic characterization of languages by advised 1rfa's.
This is our second main theorem, Theorem \ref{onerfa-characterization}.

\sloppy
\begin{theorem}\label{onerfa-characterization}
Let $S$ be any language over alphabet $\Sigma$ and define $\Delta = \{(x,n)\mid x\in\Sigma^*, n\in\nat$, $|x|\leq n\}$. The following two statements are logically equivalent.
\renewcommand{\labelitemi}{$\circ$}
\begin{enumerate}\vs{-2}
  \setlength{\topsep}{-2mm}%
  \setlength{\itemsep}{0mm}
  \setlength{\parskip}{0cm}%

\item $S$ is in $\onerfa/n$.

\item There are a total order $\leq_{S}$ over $\Delta$ and two equivalence relations $\simeq_{S}$ and $\equiv_{S}$ over $\Delta$ such that

\begin{enumerate}\vs{-1}
  \setlength{\topsep}{-2mm}%
  \setlength{\itemsep}{0mm}
  \setlength{\parskip}{0cm}%

\item[(i)] two sets $\Delta/\!\simeq_{S}$ and $\Delta/\!\equiv_{S}$ are both finite,

\item[(ii)] any strictly descending chain (with respect to $<_{S}$) in $\Delta$ has length at most $2$, and

\item[(iii)] for any length parameter $n\in\nat$, any two symbols $\sigma,\xi \in\Sigma$, and any three elements $(x,n),(y,n),(z,n)\in\Delta$ with $|x|=|y|$, the following seven  conditions hold.

\begin{enumerate}
  \setlength{\topsep}{-2mm}%
  \setlength{\itemsep}{0mm}
  \setlength{\parskip}{0cm}%

\item[(a)] If $|x\sigma|\leq n$, then $(x\sigma,n)\leq_{S}(x,n)$ and,
     if $|x|=n>0$, then $(x,n)<_{S}(\lambda,n)$.

\item[(b)] Whenever $|x\sigma|\leq n$, $(x\sigma,n)\simeq_{S}(y\sigma,n)$
 iff  $(x,n)\simeq_{S}(y,n)$.

\item[(c)] If $(x\sigma,n)<_{S}(x,n)=_{S}(z,n)$ with $|x\sigma|\leq n$, then $(x\sigma,n)\not\simeq_{S}(z,n)$.

\item[(d)] In the case where $(\lambda,n)=_{S} (x,n)=_{S}(z,n)$,  $(x,n)\equiv_{S}(z,n)$ iff $(x,n)\simeq_{S}(z,n)$.

\item[(e)] If $(x\sigma,n)<_{S}(x,n)$ and $(y\xi,n)<_{S}(y,n)$ with $|x\sigma|\leq n$ and $|y\xi|\leq n$, then $(x\sigma,n)\equiv_{S}(y\xi,n)$ iff  $(x\sigma,n)\simeq_{S}(y\xi,n)$. 
\item[(f)] If $(xz,n)=_{S}(x,n)$ with $|xz|=n$, then $(xz,n)\equiv_{S} (x,n)$.

\item[(g)] If $(x,n)\equiv_{S}(y,n)$, then $S(xz) = S(yz)$ holds for all strings $z\in\Sigma^*$ satisfying $|xz|=n$.
\end{enumerate}
\end{enumerate}
\end{enumerate}
\end{theorem}

This theorem requires three relations $\leq_{S}$, $\simeq_{S}$, and $\equiv_{S}$ as in Theorem \ref{oneqfa-character}; however, their roles are slightly different.  Condition (b) in this theorem particularly concerns the {\em reversibility} of a transition function of an underlying automaton.
Hereafter, we intend to give the proof of Theorem \ref{onerfa-characterization}.

\begin{proofof}{Theorem \ref{onerfa-characterization}}
Let $\Sigma$ be any alphabet, set $\Delta = \{(x,n)\mid x\in\Sigma^*, n\in\nat$, $|x|\leq n\}$, and pay our attention to
an arbitrary language $S$ over $\Sigma$. Before proceeding further, it is important to note that Lemma \ref{endmarker} also folds for $\onerfa/n$.

($1 \Rightarrow 2$) Assume that $S\in\onerfa/n$. By a 1rfa-version of
Lemma \ref{endmarker}, we can take an advice alphabet $\Gamma$, a 1rfa $M=(Q,\Sigma_{\Gamma},\delta,q_0,Q_{acc},Q_{rej})$ whose input tape uses no endmarker, and a length-preserving advice function $h:\nat\rightarrow\Gamma^*$ satisfying $M(\track{x}{h(|x|)}) = S(x)$ for all nonempty strings $x\in\Sigma^*$.
Now, we will introduce the desired relations $\leq_{S}$, $\simeq_{S}$, and $\equiv_{S}$ over $\Delta$.

Firstly, we define a function $f:\Delta\rightarrow Q$ by setting $f(x,n)$ to be an inner state $q\in Q$ satisfying $\hat{\delta}(q_0,\track{x}{w})=q$,
where $w$ is a unique string specified by $w= Pref_{|x|}(h(n))$.  In particular, $f(\lambda,0)=q_0$ holds.
Let $\simeq_{S}$ denote the kernel relation of $f$ (i.e., $(x,n)\simeq_{S}(y,m)$ iff $f(x,n)=f(y,m)$).
Clearly, the relation $\simeq_{S}$ is reflexive, symmetric, and transitive; thus, it is an equivalence relation.

Secondly, we define another function $\mu:\Delta\to \{0,1\}$. Let $\mu(\lambda,0) =0$. Given any nonempty string $x$ with $|x|\leq n$, if  $M$ never enters any halting state while reading $\track{x}{w}$, then we define $\mu(x,n)=1$; otherwise, we set $\mu(x,n)=0$. Notice that $\mu(\lambda,n)=1$ holds when $n>0$. The desired
total order $\leq_{S}$ is simply defined as follows: $(x,n)\leq_{S}(y,m)$ if $\mu(x,n)\leq \mu(y,m)$.
For convenience, let $\Phi=\{(x\sigma,n)\in\Delta\mid \mu(x\sigma,n)<\mu(x,n)\}\cup \{(x,n)\in\Delta\mid \mu(x,n)=1\}$.
For any $(x,n),(y,m)\in \Phi$, we set $(x,n)\equiv_{S}(y,m)$ exactly when $f(x,n)=f(y,m)$. In addition, when $\mu(xz,n)=\mu(x,n)=0$, we also set  $(xz,n)\equiv_{S}(x,n)$. It is not difficult to show that $\equiv_{S}$ is an equivalence relation.

The next goal is to establish Conditions (i)--(iii).

(i) The number of equivalence classes in the set $\Delta/\!\simeq_{S}$ is at most $|Q|$ because the range of $f$ is $Q$. Since $Q$ is finite,  $\Delta/\!\simeq_{S}$ is obviously a finite set. Since $|\Delta/\!\equiv_{S}| \leq 2|Q|$, the set $\Delta/\!\equiv_{S}$ is also finite.

(ii) Since $\mu$  takes at most two values, any strictly descending chain must have length at most $2$ as well.

(iii)  Take any symbol $\sigma\in\Sigma$, any string $z\in\Sigma^*$, and two arbitrary elements $(x,n),(y,n)$ in $\Delta$ satisfying $|x|=|y|$. Hereafter, we intend to show Conditions (a)--(e). When $n=0$, those conditions are trivially met; thus, we will consider only the case of $n\geq1$.

(a) Notice that $\mu(x\sigma,n)\leq \mu(x,n)$ for all $x$ with $|x|\leq n$. This implies $(x\sigma,n)\leq_{S}(x,n)$, as requested. For every string $x$ in $\Sigma^n$, $M$ always enters a halting state while reading $\track{x}{w}$; thus, $\mu(x,n)=0$ must hold. This implies that $(x,n)<_{S}(\lambda,n)$ whenever $n\geq1$.

(b) Assume that $|x\sigma|\leq n$. Using the aforementioned string $w$, let $\tau$ denote an advice symbol satisfying $w\tau = Pref_{|x\sigma|}(h(n))$.
If $(x\sigma,n)\simeq_{S}(y\sigma,n)$, then $\hat{\delta}(q_0,\track{x\sigma}{w\tau}) =\hat{\delta}(q_0,\track{y\sigma}{w\tau})=q$ holds for a certain inner state $q\in Q$. Let $p$ and $p'$ be two inner states for which  $\hat{\delta}(q_0,\track{x}{w})=p$ and $\hat{\delta}(q_0,\track{y}{w})=p'$.
It thus follows that $\delta(p,\track{\sigma}{\tau})=\delta(p',\track{\sigma}{\tau})=q$. The reversibility condition of $\delta$ then ensures that $p=p'$;
in other words,  $f(x,n)=f(y,n)$.
This clearly leads to the desired conclusion $(x,n)\simeq_{S}(y,n)$.
Moreover, since $M$ is deterministic by nature,
$(x,n)\simeq_{S}(y,n)$ implies $(x\sigma,n)\simeq_{S}(y\sigma,n)$. Therefore, Condition (b) in the theorem is satisfied.

(c) Assume that $(x\sigma,n)<_{S}(x,n) =_{S}(z,n)$ with $|x\sigma|\leq n$; that is, $\mu(x\sigma,n)=0$ and $\mu(x,n)=\mu(z,n)=1$ when $n\geq1$.
Note that $M$ enters an appropriate halting state just after reading off $\track{x\sigma}{w\tau}$ but $M$ never enters any halting state while reading $\track{z}{u}$, where $u=Pref_{|z|}(h(n))$. This means that $f(x\sigma,n)\neq f(z,n)$; thus, $(x\sigma,n)\not\simeq_{S}(z,n)$ holds.

(d) Assume that $(\lambda, n)=_{S} (x,n)=_{S}(y,n)$. Note that  $\mu(x,n)$ and $\mu(y,n)$ take the value $1$ since $\mu(\lambda,n)=1$. By the above definition, $(x,n)\simeq_{S}(y,n)$ is equivalent to $(x,n)\equiv_{S}(y,n)$.

(e) Assume that $(x\sigma,n)<_{S}(x,n)$ and $(y\xi,n)<_{S}(y,n)$. Since $\mu(x\sigma,n)=\mu(y\xi,n)=0$ and $\mu(x,n)=\mu(y,n)=1$, it follows that  $(x\sigma,n),(y\xi,n)\in\Phi$. By the definition of $\equiv_{S}$,
$(x\sigma,n)\equiv_{S}(y\xi,n)$ iff  $f(x\sigma,n)=f(y\xi,n)$. Recall that $f(x\sigma,n)=f(y\xi,n)$ iff $(x\sigma,n)\simeq_{S}(y\xi,n)$.
Therefore, we conclude that $(x\sigma,n)\equiv_{S}(y\xi,n)$ iff $(x\sigma,n)\simeq_{S}(y\xi,n)$.

(f) Assume that $(xz,n) =_{S} (x,n)$ with $|xz|=n$. Since $M$ halts on every nonempty input string, it must follow that $\mu(x,n)= \mu(xz,n) =0$. This implies $(x,n)\equiv_{S}(xz,n)$.

(g) Let $z$ be any string satisfying $|xz|=n$.  Assume that $(x,n)\equiv_{S}(y,n)$ holds. There are two cases to consider separately. In the first case where $\mu(x,n)=\mu(y,n)=1$, since $(x,n)\equiv_{S}(y,n)$,
$M$ enters the same inner state after reading $\track{x}{w}$ as well as after reading $\track{y}{w}$. Since $M$ is deterministic, $M$ must behave exactly in the same way while reading the remaining input  string $\track{z}{u}$,
provided that $u$ satisfies $wu=Pref_n(h(n))$. Therefore, $M$ accepts $\track{xz}{wu}$ iff $M$ accepts $\track{yz}{wu}$.  In other words, $S(xz) = S(yz)$ holds, as requested.
Next, consider the second case where $\mu(x,n)=\mu(y,n)=0$.
Take symbols $\sigma,\xi$ and strings $x',x'',y',y''$ for which $x=x'\sigma x''$, $y=y'\xi y''$, $\mu(x'\sigma,n)<\mu(x',n)$, and $\mu(y'\xi,n)<\mu(y',n)$.
Since $(x,n)\equiv_{S}(y,n)$, $M$ must enter the same halting state just after reading both $\track{x'\sigma}{w'}$ and $\track{y'\xi}{w''}$, where $w'=Pref_{|x'\sigma|}(h(n))$ and $w''=Pref_{|y'\xi|}(h(n))$.
Therefore, for any $z$ with $|xz|=n$,  we derive $S(x'\sigma x''z)=S(y'\xi y''z)$, that is, $S(xz)=S(yz)$.

\s

(2 $\Rightarrow$ 1)
Assume that we have three relations $\leq_{S}$, $\simeq_{S}$, and $\equiv_{S}$ satisfying Conditions (i)--(iii) of the theorem.
In what follows, we will construct an advice function $h$ and a 1rfa $M$ having the two endmarkers for which $M$ on input $\track{x}{h(|x|)}$ outputs $S(x)$ for {\em all} strings $x$ in $\Sigma^*$. This implies that $S$ is indeed in $\onerfa/n$.
Meanwhile, we ignore the empty string and consider only the set
$\Sigma^{*}-\{\lambda\}$.
By Condition (i), we set $d=|\Delta/\!{\simeq_{S}}|$ and assume that  $\Delta/\!{\simeq_{S}} = \{A_1,A_2,\ldots,A_d\}$, where each $A_i$ is an equivalence class. Notice that $\Delta$ coincides with
$\bigcup_{i=1}^{d} A_i$.

Let us introduce a useful notion, called a {\em turning point}, which roughly marks the transition point of the value $\mu(\cdot,\cdot)$ along a series $\{(Pref_{i}(x),n)\}_{i\in[0,|x|]_{\integer}}$ of substrings of input $x$.
Formally, for any nonempty string $x=\sigma_1\sigma_2\cdots \sigma_n\in\Sigma^n$, if $(\sigma_1,n)<_{S}(\lambda,n)$, then the turning point of $x$ is $1$; otherwise, the turning point of $x$
is an index $i\in[2,n]_{\integer}$ satisfying $(\sigma_1\sigma_2\cdots \sigma_{i},n)<_{S} (\sigma_1\sigma_2\cdots \sigma_{i-1},n)$. Notice that, by Conditions (ii) and (a), the turning point of $x$ is unique.

Given any length $n\in\nat^{+}$, we set $C^{(n)}_{acc}$ as the collection of all elements $q$ in $[d]$ satisfying the following condition: there exists a string $x_0\in\Sigma^n$ such that $(Pref_{i_0}(x_0),n)\in A_{q}$ and $S(x_0)=1$ hold for the turning point $i_0$ of $x_0$.
Likewise, we define $C^{(n)}_{rej}$ by replacing ``$S(x_0)=1$'' in the above definition with ``$S(x_0)=0$.''
It is easy to see that two sets $\{C^{(n)}_{acc}\mid n\in\nat^{+}\}$ and $\{C^{(n)}_{rej}\mid n\in\nat^{+}\}$ are both finite.

\begin{claim}\label{dfa-property}
For any $n\in\nat^{+}$, $x,y\in\Sigma^*$ with $|x|=|y|$, $\sigma\in\Sigma$, and $q\in[d]$, the following four properties hold.
\begin{enumerate}\vs{-1}
  \setlength{\topsep}{-2mm}%
  \setlength{\itemsep}{0mm}
  \setlength{\parskip}{0cm}%

\item If $(x,n)\in\Delta$, then there exists a unique index $q'\in[d]$ such that   $(x,n)\in A_{q'}$.

\item If $(x,n),(y,n)\in A_{q}$ and $|x|<n$, then there is a unique index $q'\in [d]$ such that $(x\sigma,n),(y\sigma,n)\in A_{q'}$.

\item If $(x\sigma,n),(y\sigma,n)\in A_{q}$ and $|x\sigma|\leq n$, then there is a unique index $q'\in[d]$ for which  $(x,n),(y,n)\in A_{q'}$.

\item It holds that $C^{(n)}_{acc}\cap C^{(n)}_{rej}=\setempty$.
\end{enumerate}
\end{claim}

\begin{proof}
(1)  Since the union $\bigcup_{i=1}^{d}A_i$ covers $\Delta$, each element $(x,n)$ in $\Delta$ belongs to a certain set $A_{q'}$ for $q'\in[d]$. The uniqueness of this index $q'$ comes from the fact that all sets in $\Delta/\!\simeq_{S}$ are mutually disjoint.

(2) Note that $(x,n),(y,n)\in A_{q}$ implies $(x,n)\simeq_{S}(y,n)$. Moreover, since $|x\sigma|\leq n$,   $(x\sigma,n)\simeq_{S}(y\sigma,n)$ immediately follows from $(x,n)\simeq_{S}(y,n)$ by Condition (b).
For this element $(x\sigma,n)$, we apply Claim \ref{dfa-property}(1) to obtain a unique index $q'\in[d]$ satisfying $(x\sigma,n)\in A_{q'}$.
In a similar fashion, from $(x\sigma,n)\simeq_{S}(y\sigma,n)$ follows the membership $(y\sigma,n)\in A_{q'}$ as well.

(3) Since $(x\sigma,n),(y\sigma,n)\in A_{q}$, it holds that   $(x\sigma,n)\simeq_{S}(y\sigma,n)$. Condition (b) then ensures that $(x,n)\simeq_{S}(y,n)$. The desired consequence follows from Claim \ref{dfa-property}(1).

(4) For the disjointness of $C^{(n)}_{acc}$ and $C^{(n)}_{rej}$, let us assume that there is a common element $q\in[d]$ inside $C^{(n)}_{acc}\cap C^{(n)}_{rej}$. For such a $q$, take two strings $x,y\in\Sigma^n$ for which  $(Pref_{i_x}(x),n),(Pref_{i_y}(y),n)\in A_{q}$ and $S(x)\neq S(y)$, provided that $i_x$ and $i_y$ are respectively the turning points of $x$ and $y$. For simplicity, write $x'$ (resp., $y'$) for $Pref_{i_x}(x)$ (resp.,  $Pref_{i_y}(y)$).
Since $(x',n),(y',n)\in A_{q}$ leads to $(x',n)\simeq_{S}(y',n)$, it follows from Condition (e) that $(x',n)\equiv_{S}(y',n)$. Moreover,  since $(x,n)=_{S}(x',n)$, by Condition (f), we immediately obtain $(x,n)\equiv_{S}(x',n)$. Similarly, we obtain $(y,n)\equiv_{S}(y',n)$.
The transitivity of $\equiv_{S}$ thus yields $(x,n)\equiv_{S}(y,n)$. Condition (g) finally implies $S(x)=S(y)$. This is obviously  a contradiction, and hence $C^{(n)}_{acc}\cap C^{(n)}_{rej}$ should be empty.
\end{proof}

Based on Claim \ref{dfa-property}, we wish to define an appropriate advice function $h$. For our purpose, let   $n$ ($\in\nat^{+}$) be an arbitrary length and let $\#$ be a special symbol not in $\Sigma\cup[d]$.
Given any index $i\in[n]$, we will introduce a series of {\em finite functions} $h_{n,i}:[d]\times \Sigma\to ([d]\cup\{\#\})\times \{C^{(n)}_{acc}\}_{n\in\nat^{+}}\times \{C^{(n)}_{rej}\}_{n\in\nat^{+}}$. Let $q$ and $q'$ be any two indices in $[d]$ and let $\sigma$ be any symbol in $\Sigma$.
\begin{enumerate}
  \setlength{\topsep}{-2mm}%
  \setlength{\itemsep}{0mm}
  \setlength{\parskip}{0cm}%

\item[(i)] Let $h_{n,1}(1,\sigma) =   (q',C_{acc}^{(n)},C_{rej}^{(n)})$ if $(\sigma,n)\in A_{q'}$ holds. For any $q\neq1$, let $h_{n,1}(q,\sigma)=(\#,C^{(n)}_{acc},C^{(n)}_{rej})$.

\item[(ii)] For any $i\in[2,n]_{\integer}$, let $h_{n,i}(q,\sigma) = (q',C_{acc}^{(n)},C_{rej}^{(n)})$ if   both $(x,n)\in A_{q}$ and $(x\sigma,n)\in A_{q'}$ hold for an appropriate string $x\in\Sigma^{i-1}$. If there is no such string $x\in\Sigma^{i-1}$, then we set  $h_{n,i}(q,\sigma)= (\#,C_{acc}^{(n)},C_{rej}^{(n)})$ for any symbol $\sigma\in\Sigma$.
\end{enumerate}
Finally, we set $\Gamma=\{h_{n,i}\mid n\geq 1, i\in[n]\}$. Since $\Gamma$ is a finite set, we enumerate all elements in $\Gamma$ as $h'_1,h'_2,\ldots,h'_{e}$ and we treat each element $h'_i$ as a new ``advice symbol.'' Our advice string $h(n)$ of length $n$ is set to be $h_{n,1}h_{n,2}\cdots h_{n,n}$, where each $h_{n,i}$ corresponds to a unique advice symbol listed above.

\begin{claim}\label{function-h}
\begin{enumerate}
  \setlength{\topsep}{-2mm}%
  \setlength{\itemsep}{0mm}
  \setlength{\parskip}{0cm}%

\item The above defined $h$ is indeed a function.

\item Let $q_1,q_2,q'\in[d]$. If $h_{n,i}(q_1,\sigma)=h_{n,i}(q_2,\sigma)=(q',C^{(n)}_{acc},C^{(n)}_{rej})$, then $q_1=q_2$.
\end{enumerate}
\end{claim}

\begin{proof}
(1) For every symbol $\sigma\in\Sigma$, Claim~\ref{dfa-property}(1) provides a unique index $q'\in [d]$ that satisfies $(\sigma,n)\in A_{q'}$. This proves that $h_{n,1}$ is indeed a function.
Next, let $i\in[2,n]_{\integer}$ and assume that $h_{n,i}(q,\sigma)= (q',C^{(n)}_{acc},C^{(n)}_{rej})$ and $h_{n,i}(q,\sigma)=(q'',C^{(n)}_{acc},C^{(n)}_{rej})$ for two indices $q',q''\in[d]$.
By the definition of $h_{n,i}$, there exist two strings $x,y\in\Sigma^{i-1}$ for which $(x,n),(y,n)\in A_{q}$, $(x\sigma,n)\in A_{q'}$, and $(y\sigma,n)\in A_{q''}$. Since $|x|=|y|<n$,  the uniqueness condition of Claim~\ref{dfa-property}(2) implies $q'=q''$.
Therefore, $h_{n,i}$ is also a function.

(2) The case of $i=1$ follows from the definition of $h_{n,1}$ and Claim \ref{dfa-property}(1).  Hereafter, we consider the case of $i\geq2$.
Let us assume that
$h_{n,i}(q_1,\sigma)=h_{n,i}(q_2,\sigma) = (q',C^{(n)}_{acc},C^{(n)}_{rej})$. Since $q'\neq\#$, certain two strings $x,y\in\Sigma^{i-1}$ satisfy that $(x,n)\in A_{q_1}$, $(y,n)\in A_{q_2}$, and $(x\sigma,n),(y\sigma,n)\in A_{q'}$. By Claim \ref{dfa-property}(3), the equality $q_1=q_2$ follows immediately.
\end{proof}

In the following argument, we will abbreviate $(q,C_{acc}^{(n)},C_{rej}^{(n)})$ as $\bar{q}$ as long as ``$n$'' is clear from the context, and we write $\bar{h}_{n,i}(q,\sigma)=q'$ whenever $h_{n,i}(q,\sigma)= (q',C_{acc}^{(n)},C_{rej}^{(n)})$ holds.

Next, we  will define a finite automaton $M = (Q,\Sigma_{\Gamma},\delta,q_0,Q_{acc},Q_{rej})$ with $q_0=0$.
In the proof below, we assume that $\lambda\in L$. The proof for the
other case of $\lambda\not\in L$ is essentially the same. Now, let $q_f = -1$.
The set $Q$ is composed of $q_0$, $q_f$, and all triplets of the form $(q,C^{(n)}_{acc},C^{(n)}_{rej})$ for any $n\in\nat^{+}$ and any $q\in [d]$.
Likewise, the set $Q_{acc}$  consists of $q_f$ and all triplets   $(q,C^{(n)}_{acc},C^{(n)}_{rej})$ in $Q$ satisfying
$q\in C^{(n)}_{acc}$. The set $Q_{rej}$ is similarly defined using $C^{(n)}_{rej}$ except for $q_f$.

As usual, we set $Q_{halt} = Q_{acc}\cup Q_{rej}$.
Our transition function $\delta: Q\times (\Sigma_{\Gamma}\cup\{\cent,\dollar\})\to Q$ is defined as follows.
Fix $n\in\nat^{+}$ arbitrarily.
Initially, we set $\delta(q_0,\cent)=q_0$, $\delta(q_0,\dollar)=q_f$, $\delta(q_f,\dollar)=q_0$,
and $\delta(\bar{q},\dollar) = \bar{q}$ for every $q\in[d]\cup\{\#\}$.
For any symbol $\sigma\in\Sigma$, we further define
 $\delta(q_0,\track{\sigma}{h_{n,1}}) = h_{n,1}(1,\sigma)$.
Given any index $i\in[2,n]_{\integer}$ and any $q\in [d]$, whenever  $\bar{h}_{n,i}(q,\sigma)\neq\#$, we set
$\delta(\bar{q},\track{\sigma}{h_{n,i}}) = h_{n,i}(q,\sigma)$.
For convenience,  all inputs of the form $(\bar{q},\track{\sigma}{\tau})$
defined so far are said to be {\em legitimate} for $\delta$.
For the other remaining inputs $(\bar{q},\track{\sigma}{\tau})$, which are distinctively called {\em illegitimate}, we define the values of $\delta(\bar{q},\track{\sigma}{\tau})$ arbitrarily as long as $\delta$ is ``reversible'' on the set of all illegitimate inputs.

\begin{claim}\label{series-A-vs-delta}
Let $n\in\nat^{+}$ and let $x=\sigma_1\sigma_2\cdots \sigma_n\in\Sigma^n$.
Assume that $(\sigma_1,n)\in A_{q_1}$,  $(\sigma_1\sigma_2,n)\in A_{q_2}$,  $(\sigma_1\sigma_2\sigma_3,n)\in A_{q_3}$, $\ldots$, $(x,n)\in A_{q_n}$. It follows that $\bar{q}_i = \hat{\delta}(q_0,\track{\sigma_1\cdots\sigma_i}{h_{n,1}\cdots h_{n,i}})$ for any index $i\in[n]$.
\end{claim}

\begin{proof}
We prove by induction
on $i\in[n]$   that $\bar{q}_i = \hat{\delta}(q_0,\track{\sigma_1\cdots\sigma_i}{h_{n,1}\cdots h_{n,i}})$.
Consider the basis case of $i=1$. In this case,  we obtain  $\hat{\delta}(q_0,\track{\sigma_1}{h_{n,1}}) = \delta(q_0,\track{\sigma_1}{h_{n,1}}) = h_{n,1}(1,\sigma_1) = \bar{q}_1$. For induction step $i\geq1$, our induction hypothesis guarantees that $\bar{q}_i = \hat{\delta}(q_0,\track{\sigma_1\cdots\sigma_i}{h_{n,1}\cdots h_{n,i}})$. It thus immediately follows that
\[
\hat{\delta}(q_0,\track{\sigma_1\cdots\sigma_{i+1}}{h_{n,1}\cdots h_{n,i+1}})
= \delta(\bar{q}_i,\track{\sigma_{i+1}}{h_{n,i+1}})
=
h_{n,i+1}(q_i,\sigma_{i+1})
= \bar{q}_{i+1}.
\]
By the mathematical induction, the claim is true.
\end{proof}

\begin{claim}
The transition function $\delta$ is reversible.
\end{claim}

\begin{proof}
By induction on $i$ (for $h_{n,i}$), we want to prove that $\delta$ is reversible. To achieve our goal, it suffices to  verify the reversibility condition of $\delta$ only on the set of all legitimate inputs; namely, for every $\sigma\in \Sigma$, $q'\in [d]$, $m,n\in\nat^{+}$, and $i\in[n]$,
(*) there is at most one index $q\in [d]$ such that $\delta(\bar{q},\track{\sigma}{h_{n,i}}) = (q',C^{(m)}_{acc},C^{(m)}_{rej})$, where $\bar{q}$ denotes $(q,C^{(n)}_{acc},C^{(n)}_{rej})$.

If $(C^{(n)}_{acc},C^{(n)}_{rej}) \neq (C^{(m)}_{acc},C^{(m)}_{rej})$, then $h_{n,i}(q,\sigma)\neq (q',C^{(m)}_{acc},C^{(m)}_{rej})$ holds for any $q\in[d]$, and thus Statement (*) is clearly true. Henceforth, let us
consider the case of $m=n$. Now, let $\bar{q}'=(q',C^{(n)}_{acc},C^{(n)}_{rej})$.
We assume that two appropriate legitimate inputs $(\bar{q}_1,\track{\sigma}{h_{n,i}})$ and  $(\bar{q}_2,\track{\sigma}{h_{n,i}})$ satisfy  $\delta(\bar{q}_1,\track{\sigma}{h_{n,i}}) = \delta(\bar{q}_2,\track{\sigma}{h_{n,i}}) = \bar{q}'$,

[Case: $i=1$]
The legitimacy of $(\bar{q}_1,\track{\sigma}{h_{n,1}})$ and $(\bar{q}_2,\track{\sigma}{h_{n,1}})$ implies  that $h_{n,1}(q_1,\sigma)\neq (\#,C^{(n)}_{acc},C^{(n)}_{rej})$. This leads to the equality $\bar{q}_1 = q_0 =\bar{q}_2$, implying Statement (*).

[Case: $i\geq 2$]
Note that $\delta(\bar{q}_1,\track{\sigma}{h_{n,i}}) = h_{n,i}(q_1,\sigma) = \bar{q}'$ and $\delta(\bar{q}_2,\track{\sigma}{h_{n,i}}) = h_{n,i}(q_2,\sigma) = \bar{q}'$. If $q_1,q_2\in[d]$, then Claim \ref{function-h}(2) implies $q_1=q_2$; thus, Statement (*) holds.
\end{proof}

Hereafter, we aim at proving that $M$ correctly recognizes $S$ with the help of the advice $h$. For this purpose, we will give a supplemental claim, Claim \ref{C-acc-C-rej}.  Let $n\in\nat^{+}$ and let $x=\sigma_1\sigma_2\cdots \sigma_n\in\Sigma^n$ with its turning point $i_0$. Assume that $(\sigma_1,n)\in A_{q_1}$,  $(\sigma_1\sigma_2,n)\in A_{q_2}$,  $(\sigma_1\sigma_2\sigma_3,n)\in A_{q_3}$, $\ldots$, $(x,n)\in A_{q_n}$.

\begin{claim}\label{C-acc-C-rej}
\begin{enumerate}
  \setlength{\topsep}{-2mm}%
  \setlength{\itemsep}{0mm}
  \setlength{\parskip}{0cm}%

\item  It holds that $q_{i_0}\in C^{(n)}_{acc}\cup C^{(n)}_{rej}$ and $q_j\notin C^{(n)}_{acc}\cup C^{(n)}_{rej}$ for any $j\in[i_0-1]$.

\item Given any $x\neq\lambda$, it follows that $x\in S$ implies $q_{i_0}\in C^{(n)}_{acc}$ and that $x\notin S$ implies $q_{i_0}\in C^{(n)}_{rej}$.
\end{enumerate}
\end{claim}

\begin{proof}
(1) Since $i_0$ is the turning point of $x$, $q_{i_0}$ must belong to $C^{(n)}_{acc}\cup C^{(n)}_{rej}$. Next, we will show that $q_j\notin C^{(n)}_{acc}\cup C^{(n)}_{rej}$ for any $j\in[i_0-1]$.
It is enough to consider the case where $i_0\geq2$. Let $j\in[i_0-1]$ and recall that $(Pref_{j}(x),n)\in A_{q_j}$.
Toward a contradiction, we assume that $q_j\in C^{(n)}_{acc}\cup C^{(n)}_{rej}$. This means that, for a certain string $w\in\Sigma^n$, $(Pref_{j_0}(w),n)\in A_{q_j}$ holds; thus,   $(Pref_{j}(x),n)\simeq_{S}(Pref_{j_0}(w),n)$ follows. Since $j<i_0$, we deduce   $(Pref_{j}(x),n)=_{S}(\lambda,n)$. Therefore, it holds that  $(Pref_{j_0}(w),n)<_{S}(Pref_{j_0-1}(w),n)=_{S}(Pref_j(x),n)$. By Condition (c), we obtain $(Pref_{j_0}(w),n)\not\simeq_{S}(Pref_j(x),n)$. This is obviously a contradiction.

(2) We want to show that $x\in S$ implies $q_{i_0}\in C^{(n)}_{acc}$. The other statement is similarly proven. Now, we assume by contradiction that $x\in S$ and $q_{i_0}\notin C^{(n)}_{acc}$. Since $x$ has the turning point $i_0$,
$(Pref_{i_0}(x),n)\in A_{q_{i_0}}$ implies $q_{i_0}\in C^{(n)}_{acc}\cup C^{(n)}_{rej}$, from which $q_{i_0}\in C^{(n)}_{rej}$ follows
by our assumption.
This implies that, for a certain appropriate string $w\in\Sigma^n$, $(Pref_{j_0}(w),n)\in A_{q_{i_0}}$ and $S(w)=0$, where $j_0$ is the turning point of $w$. It follows from  $(Pref_{j_0}(w),n)=_{S}(w,n)$ that   $(Pref_{j_0}(w),n)\equiv_{S}(w,n)$ by Condition (f).
Moreover, since $(Pref_{j_0}(w),n)\in A_{q_{i_0}}$, we obtain  $(Pref_{i_0}(x),n)\simeq_{S} (Pref_{j_0}(w),n)$. Using Condition (e), we conclude that $(Pref_{i_0}(x),n)\equiv_{S}(Pref_{j_0}(w),n)$.
Since $(Pref_{i_0}(x),n)\equiv_{S}(x,n)$, the property of $\equiv_{S}$ implies that $(x,n)\equiv_{S}(w,n)$. Condition (g) then leads to $S(x)=S(w)=0$, a contradiction. Therefore, $x\in S$ implies $q_{i_0}\in C^{(n)}_{acc}$.
\end{proof}

Finally, we argue that $S = \{ x\mid \text{ $M$ accepts $\track{x}{h(|x|)}$ }\}$. First, we consider the case where $x\in S$.
When $x=\lambda$, we obtain $\hat{\delta}(q_0,\cent\lambda\dollar)=q_f$, and thus $M$ accepts $x$. Now, we assume that $|x|\geq1$.
By Claim \ref{C-acc-C-rej}(1), we obtain $q_{i_0}\in C^{(n)}_{acc}$ and $q_j\notin C^{(n)}_{acc}\cup C^{(n)}_{rej}$ for all $j\in[i_0-1]$. This implies that, by Claim \ref{C-acc-C-rej}(2),  $\bar{q}_{i_0}\in Q_{acc}$ and $\bar{q}_{j}\notin Q_{halt}$ for any index $j\in[i_0-1]$. Note that, by Claim \ref{series-A-vs-delta}, for each $j\in[i_0]$, $M$ enters a unique inner state $\bar{q}_j$ after scanning $\track{\sigma_j}{h_{n,j}}$.
From this situation, $M$ should enter an accepting state $\bar{q}_{i_0}$ just after reading $\track{\sigma_1\cdots \sigma_{i_0}}{h_{n,1}\cdots h_{n,i_0}}$; in short, $M$ accepts $\track{x}{h(n)}$. The other case $x\not\in S$ is handled similarly as in the previous case, since the essential difference is only the final step.

This completes the proof of Theorem \ref{onerfa-characterization}.
\end{proofof}

As an immediate consequence of Theorem \ref{onerfa-characterization}, we will show  that $\oneqfa$ is not included in $\onerfa/n$. This result can be viewed as the strength of {\em bounded-error} quantum computation over {\em error-free} advised quantum computation.

\begin{corollary}\label{oneqfa-vs-onerfa}
$\oneqfa\nsubseteq \onerfa/n$, and thus $\onerfa/n\subsetneq \oneqfa/n$.
\end{corollary}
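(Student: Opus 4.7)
The strategy is to exhibit a specific witness language $L=\{a^{i}b^{j}\mid i,j\geq 0\}$ over $\Sigma=\{a,b\}$ that belongs to $\oneqfa$ but not to $\onerfa/n$. Since $\oneqfa\subseteq\oneqfa/n$ via constant dummy advice, this simultaneously delivers both $\oneqfa\nsubseteq\onerfa/n$ and $\onerfa/n\neq\oneqfa/n$. For the membership $L\in\oneqfa$, I would invoke the classical fact (essentially Brodsky and Pippenger \cite{BP02}) that the minimal 1dfa of $a^{*}b^{*}$ satisfies the partial order condition: its three inner states, namely ``only $a$'s so far,'' ``at least one $b$ has been read,'' and the absorbing ``a forbidden factor $ba$ has appeared'' state, admit no pair forming the forbidden cycle, so $L$ is recognized by a bounded-error 1qfa.

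The core of the argument is to show $L\notin\onerfa/n$ via Theorem \ref{onerfa-characterization}. Suppose, for contradiction, that some equivalence relation $\equiv_{S}$ on $\Delta$ satisfies Conditions (i), (a), and (b). I claim, by induction on $k\geq 1$, that for every $n\geq k$ the $k$ elements $(a^{k-i}b^{i},n)$ with $1\leq i\leq k$ fall into $k$ pairwise distinct $\equiv_{S}$-classes. For the inductive step, write $a^{k-i}b^{i}=(a^{k-i}b^{i-1})\cdot b$, so the $k$ target strings at length $k$ are the images under ``append $b$'' of the $k$ length-$(k-1)$ strings $a^{k-1},a^{k-2}b,\ldots,b^{k-1}$. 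The inductive hypothesis places the $k-1$ strings containing at least one $b$ in pairwise distinct classes, and Condition (b) together with the distinguishing suffix $z=ab^{n-k}$ (of length $n-k+1$) separates $(a^{k-1},n)$ from each $(a^{k-1-i}b^{i},n)$ with $i\geq 1$, since $a^{k-1}z=a^{k}b^{n-k}\in L$ whereas $a^{k-1-i}b^{i}z$ contains the forbidden factor $ba$ and hence lies outside $L$. The contrapositive of Condition (a), applied with $\sigma=b$, then forwards the distinctness from the $k$ preimages to their $k$ $b$-images, closing the induction.

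Setting $n=k$ for each $k\geq 1$ thus produces at least $k$ distinct classes of $\equiv_{S}$ inside $\Delta$, so $\Delta/\!\equiv_{S}$ must be infinite, contradicting Condition (i); hence $L\notin\onerfa/n$, as required. The main technical delicacy lies in the inductive bookkeeping: Condition (a) is the vehicle that propagates the ``$k$ distinct classes'' count one symbol at a time, while Condition (b) with the carefully chosen suffix $z=ab^{n-k}$ must be invoked at every level to inject a fresh $q_{0}$-versus-$q_{1}$ distinction between $a^{k-1}$ and the $b$-containing strings. Once this coordination is in place, the cardinality blow-up is automatic.
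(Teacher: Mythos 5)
Your lower-bound argument (that $a^*b^*\notin\onerfa/n$) is correct and is essentially the paper's own proof in different clothing: the paper uses the same witness language $\{0^m1^n\mid m,n\in\nat\}$, fixes $n>k+1$, applies the pigeonhole principle to the $n-1$ strings $0^i1^{n-i}$ to force two of them into the same $\equiv_S$-class, strips the common suffix $1^{n-j}$ via Condition (a), and then contradicts Condition (b) with the suffix $z=0^{n-j}$; your induction showing that $\Delta/\!\equiv_S$ has at least $k$ classes for every $k$ is the contrapositive packaging of the same two conditions with the same kind of distinguishing suffix, and the bookkeeping checks out ($|xz|=n$, equal lengths, $n\geq k$, propagation of distinctness under ``append $b$'').

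There is, however, a genuine logical error in your justification of the membership half, $L\in\oneqfa$. The partial order condition of Brodsky and Pippenger is a \emph{necessary} condition for a language to be recognized by a bounded-error 1qfa (the paper states exactly this: every language in $\oneqfa$ satisfies it, and $L_a$ does not, hence $L_a\notin\oneqfa$); it is \emph{not} sufficient, and the converse was in fact disproved by Ambainis, \c{K}ikusts, and Valdats. So the inference ``the minimal 1dfa of $a^*b^*$ satisfies the partial order condition, therefore $L$ is recognized by a bounded-error 1qfa'' is unsound as written. The conclusion is nonetheless true, and the correct justification is the one the paper uses: Ambainis and Freivalds \cite{AF98} give an explicit 1qfa recognizing $\{0^m1^n\mid m,n\in\nat\}$ with success probability at least $0.68$. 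Replace your appeal to the partial order condition with that citation (or an explicit construction) and the proof is complete.
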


\begin{proof}
Let us consider a regular language $L=\{0^m1^n\mid m,n\in\nat\}$ over a   binary alphabet $\Sigma=\{0,1\}$. Ambainis and Freivalds \cite{AF98} showed how to recognize this language $L$ on a certain 1qfa with success probability at least $0.68$. To obtain the desired consequence, we need to show that $L\not\in\onerfa/n$. Note that $L$ was already proven in \cite{AF98} to be located outside of $\onerfa$ by a use of technical tool called a {\em forbidden construction}. Our result therefore not only extends this result but also provides a new proof technique based on Theorem \ref{onerfa-characterization}.

To lead to a contradiction, we assume that $L$ belongs to $\onerfa/n$. Theorem \ref{onerfa-characterization} guarantees the existence of three relations $\leq_{S}$, $\simeq_{S}$, and $\equiv_{L}$ on $\Delta$ that satisfy Conditions (i)--(iii) of the theorem. We denote by $k_1$ (resp., $k_2$)
the cardinality of the set $\Delta/\!\simeq_{L}$ (resp., $\Delta/\!\equiv_{L}$) of equivalence classes. Additionally, let $k=\max\{k_1,k_2\}$. For each index $n\in\nat$, write $L_n$ for the subset $\{0^i1^{n-i}\mid 0\leq i\leq n\}$ of $L$.

Now, for any index $n\geq4$, we want to assert that
\begin{quote}
(*) for each string $x\in L_n$, its turning point $i_0$ is at least $n-1$.
\end{quote}
Let us assume that there is a string $x\in L_n$ whose turning point $i_0$ is less than $n-1$. For simplicity, write $x'$ for $Pref_{i_0}(x)$ and let  $x'=x''\sigma$ with $\sigma\in\Sigma$ and $x''\in\Sigma^*$.  Since  $(x,n)=_{L}(x',n)$ holds by the definition of the turning point, we obtain $(x,n)\equiv_{L}(x',n)$ by Condition (f)  of the theorem.
We define  $z$ as $\bar{\sigma}0^{n-i_0-1}$.  We then obtain $(x'z,n)=_{L}(x',n)$. It thus follows that $(x'z,n)\equiv_{L}(x',n)$. The transitivity of $\equiv_{L}$ implies that $(x,n)\equiv_{L}(x'z,n)$.  Hence,
we conclude by Condition (g) that $L(x)=L(x'z)$. From  $x'z=x''\sigma\bar{\sigma}0^{n-i_0-1}$, it must hold that $x'z\notin L_n$. Therefore, we obtain $x\notin L_n$, a contradiction. As a result, Statement (*) holds.

Let us fix a number $n$ to satisfy $n>\max\{3,k+2\}$.
Since $i_0\geq n-1$, we want to consider the set $\Phi_n =\{(0^i1^{n-i-2},n)\mid 1\leq i\leq n-2\}$. From $|\Phi_n|=n-2>k\geq k_1$, it follows that there are
at least two indices $i,j\in[n-2]$ with $i<j$ for which  $(0^i1^{n-i-2},n)\simeq_{L}(0^j1^{n-j-2},n)$ holds.
By applying Condition (b) repeatedly, we obtain  $(0^i1^{j-i},n)\simeq_{L}(0^j,n)$.
It thus follows by Condition (d) that $(0^i1^{j-i},n)\equiv_{S}(0^j,n)$.
If we choose $z=0^{n-j}$, then Condition (g) further leads to
the equality $L(0^i1^{j-i}z) = L(0^jz)$. Since $i<j\leq n-2$, however, it holds that $L(0^i1^{j-i}z) = L(0^i1^{j-i}0^{n-j})=0$ and that $L(0^jz) = L(0^n) = 1$. This is
a contradiction. Therefore, $L$ cannot belong to $\onerfa/n$.

To see the second part of the corollary, we first recall that $\onerfa/n\subseteq \oneqfa/n$. Since $\oneqfa\subseteq \oneqfa/n$, the equality $\onerfa/n=\oneqfa/n$ leads to the containment $\oneqfa\subseteq \onerfa/n$. Clearly, this contradicts the first part. Therefore, we conclude that $\onerfa/n\neq \oneqfa/n$.
\end{proof}


\subsection{Reversible Computation with Randomized Advice}\label{sec:reverse-random-advice}

As a probabilistic variant of deterministic advice, {\em randomized advice}
was observed in \cite{Yam10} to endow an enormous computational power to one-way finite automata, where
randomized advice refers to a {\em probability ensemble} $\{D_n\}_{n\in\nat}$ consisting of an infinite series of probability distributions $D_n$ over the set $\Gamma^n$ of advice strings.
Those randomly chosen advice strings are given on the lower track of an input tape so that a tape head can scan a standard input and advice simultaneously.

Let us give a quick remark on the power of randomized advice. The notation $\onebplin/Rlin$ denotes the family of all languages recognized with bounded-error probability by one-tape one-head two-way off-line probabilistic Turing machines whose computation paths {\em all} terminate within {\em linear time} in the presence of randomized advice of {\em linear size} \cite{Yam10}. When such probabilistic Turing machines are replaced by 1dfa's and 1npda's, we obtain language families $\reg/Rn$ and $\cfl/Rn$, respectively, from $\onebplin/Rlin$ using advice of input size. It was shown in \cite{Yam10} that  $\reg/Rn$ is powerful enough to coincide with $\onebplin/Rlin$.  Moreover, it was proven  that $\reg/Rn\nsubseteq \cfl/n$ \cite{Yam10}, and thus $\cfl/n\neq \cfl/Rn$ follows.

Like the above notations $\reg/Rn$ and $\cfl/Rn$ introduced
in \cite{Yam10},
$\onerfa/Rn$ expresses the family of all languages $L$ that satisfy the following condition: there exist a 1rfa $M$, an error bound  $\varepsilon\in[0,1/2)$, an advice alphabet $\Gamma$,
and an advice probability ensemble $\{D_n\}_{n\in\nat}$
($D_n:\Gamma^n\rightarrow [0,1]$)  such that,
for every length $n\in\nat$ and any string $x$ of length $n$,
\begin{quote}
(*) $M$ on input $\track{x}{y}$ outputs $L(x)$ with probability at least $1-\varepsilon$ when $y$ is chosen at random according to $D_n$ (i.e.,  $y$ is chosen with probability $D_n(y)$).
\end{quote}
For our notational convenience, we introduce a succinct notation $\track{x}{D_n}$ to denote a {\em random variable}  expressing a string $\track{x}{y}$, provided
that $y$ ($\in\Gamma^n$) is chosen with probability $D_n(y)$.
With this notation, we rephrase Condition (*) as $\prob_{D_{n}}[M(\track{x}{D_{n}})=L(x)]\geq 1-\varepsilon$.

In what follows,  we demonstrate a strength of 1rfa's when they take randomized advice. Let $\dcfl$ express the deterministic counterpart of $\cfl$.

\begin{proposition}\label{det-vs-randomized-advice}
\renewcommand{\labelitemi}{$\circ$}
\begin{enumerate}\vs{-2}
  \setlength{\topsep}{-2mm}%
  \setlength{\itemsep}{0mm}
  \setlength{\parskip}{0cm}%

\item $\dcfl\cap\onerfa/Rn\nsubseteq \reg/n$.
\item $\onerfa/Rn \nsubseteq \cfl/n$.
\end{enumerate}
\end{proposition}

\begin{proof}
The following proof is in essence similar to the proof of \cite[Proposition 17]{Yam10}.

(1) For our purpose, we use a ``marked'' version of $Pal$, the set of {\em even-length palindromes}.
Now, define $Pal_{\#} =\{w\#w^R\mid w\in\{0,1\}^*\}$ as a language  over a  ternary alphabet $\Sigma=\{0,1,\#\}$. Similarly to the separation $Pal \not\in \reg/n$  \cite{Yam08}, it is possible to prove that $Pal_{\#}\not\in \reg/n$
by employing the so-called {\em swapping lemma} \cite{Yam08}.

Since $Pal_{\#}$ is known to be in $\dcfl$, the remaining task is to
show that $Pal_{\#}$ belongs to $\onerfa/Rn$.
As in Lemma \ref{endmarker}, we assume that input tapes of advised 1qfa's
have no endmarker.
Our advice alphabet $\Gamma$ is $\{0,1,\#\}$ and
our randomized advice $D_n$ of size $n$ is defined as follows. If $n=2m+1$, then  $D_n$  generates every string $w\# w^R$ ($w\in\{0,1\}^m$)
with equal probability $2^{-m}$; otherwise,  $D_n$ generates $\#^{n}$ with probability $1$.
Next, let us define a {\em one-way probabilistic finite automaton} (or a {\em 1pfa}) $M = (Q,\Sigma_{\Gamma},\delta,q_0,Q_{acc},Q_{rej})$ with
$Q_{acc}= \{q_0,q_2\}$, $Q_{rej} = \{q_1,q_3\}$, and $Q = Q_{acc}\cup Q_{rej}$.
The transition function $\delta$ of $M$ is defined as follows.
For any bits $\sigma,\tau\in\{0,1\}$ and any index $i\in\{0,1\}$,  we set   $\delta(q_i,\track{\sigma}{\tau}) = q_{\sigma\tau+i\;\mathrm{mod}\;2}$ and  $\delta(q_i,a) = q_{i+1\;\mathrm{mod}\;2}$, where $a=\track{\#}{\#}$ and $\sigma\tau$ is the numerical multiplication of $\sigma$ and $\tau$.
For any other state/symbol pair $(q,\sigma)$, we make two new transitions from $(q,\sigma)$ to both $q_2$ and $q_3$ with probability exactly $1/2$.

On any input of the form $x\#x'$, if $x'=x^{R}$, then $M$ enters an accepting state using $D_n$ with probability $1$, where this probability is calculated according to the transition probabilities of $M$ as well as the probability distribution $D_n$.
On the contrary, if $x'\neq x^{R}$, then $M$ enters an accepting state with probability exactly $1/2$, and thus an error probability is $1/2$.  To reduce this error probability to $1/4$, we need to make two runs of the above procedure in parallel.
To make sure that an input is of the form $x\#x'$, we also need to force the 1pfa to check that $\#$ occurs exactly once in scanning the entire input string.

It is not quite difficult to translate our 1pfa into an appropriate reversible automaton (by modifying randomized advice slightly), and thus we omit a detailed description of the desired 1rfa.

(2) In a way similar to (1), another language $Dup =\{ww\mid w\in\{0,1\}^*\}$ over a binary alphabet $\{0,1\}$ can be proven to fall into  $\onerfa/Rn$. Since  $Dup$ does not belong to $\cfl/n$ \cite{Yam08}, the proposition instantly follows.
\end{proof}

Since $\onerfa/Rn\subseteq \reg/Rn$, Proposition \ref{det-vs-randomized-advice}(2) in fact strengthens the early result of $\reg/Rn\nsubseteq \cfl/n$ \cite{Yam10}. Let us discuss briefly an
immediate consequence of Proposition \ref{det-vs-randomized-advice}(2). If $\onerfa/n=\onerfa/Rn$, then the obvious containment $\onerfa/n\subseteq \cfl/n$ yields a conclusion $\onerfa/Rn\subseteq \cfl/n$; however, this contradicts  Proposition \ref{det-vs-randomized-advice}(2). Therefore, we obtain a  class separation between $\onerfa/n$ and $\onerfa/Rn$.  This separation can be compared with $\reg/n\neq \reg/Rn$ in \cite{Yam10}.

\begin{corollary}\label{rand-det-rfa}
$\onerfa/n\neq \onerfa/Rn$.
\end{corollary}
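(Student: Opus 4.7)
The plan is to deduce the separation directly from Proposition \ref{det-vs-randomized-advice}(2) by sandwiching $\onerfa/n$ strictly between two families whose separation is already known. The key observation is that $\onerfa/n$ is contained in $\cfl/n$, whereas by Proposition \ref{det-vs-randomized-advice}(2) the family $\onerfa/Rn$ is not contained in $\cfl/n$; these two facts together immediately forbid the equality $\onerfa/n=\onerfa/Rn$.

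First, I would spell out the inclusion $\onerfa/n\subseteq \cfl/n$. Every 1rfa is, in particular, a 1dfa (the reversibility condition is merely an additional structural constraint on the transition function), so $\onerfa\subseteq\reg$. Since the notion of deterministic advice defined for $\onerfa/n$ and for $\reg/n$ is identical (an advice function $h:\nat\to\Gamma^{*}$ with $|h(n)|=n$ written on the lower track of the input tape), the same inclusion lifts at the advised level: $\onerfa/n\subseteq\reg/n$. Combined with the trivially verifiable $\reg/n\subseteq \cfl/n$ (a 1dfa is a 1npda that never uses its stack), we obtain $\onerfa/n\subseteq\cfl/n$.

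Second, I would set up the contradiction. Suppose to the contrary that $\onerfa/n=\onerfa/Rn$. Then by the inclusion just established,
\[
\onerfa/Rn \;=\; \onerfa/n \;\subseteq\; \cfl/n,
\]
which flatly contradicts Proposition \ref{det-vs-randomized-advice}(2), stating that $\onerfa/Rn\nsubseteq \cfl/n$. Hence $\onerfa/n\neq \onerfa/Rn$, as desired.

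There is no real obstacle here: the entire weight of the argument has been pushed into Proposition \ref{det-vs-randomized-advice}(2), whose proof already supplies a witnessing language, namely $Dup=\{ww\mid w\in\{0,1\}^{*}\}$, which lies in $\onerfa/Rn$ but, by the result of \cite{Yam08}, not in $\cfl/n$. The proof of the corollary itself is therefore a short syllogism, and I would present it in just a few lines, emphasizing that this separation is a natural reversible analogue of the known separation $\reg/n\neq\reg/Rn$ from \cite{Yam10}.
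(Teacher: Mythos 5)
Your proposal is correct and follows essentially the same route as the paper, which derives the corollary from Proposition \ref{det-vs-randomized-advice}(2) together with the obvious containment $\onerfa/n\subseteq\cfl/n$. The only difference is that you spell out the intermediate inclusion $\onerfa/n\subseteq\reg/n\subseteq\cfl/n$ explicitly, which the paper leaves as ``obvious.''
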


\section{{}From Randomized Advice to Quantum Advice}\label{sec:randomized}

In Section \ref{sec:reverse-random-advice}, we have witnessed the extraordinary power of 1rfa's when augmented with appropriate randomized advice. In particular, we have shown in  Corollary \ref{rand-det-rfa} that randomized advice is much more useful for 1rfa's than deterministic advice is.
In a similar fashion, we want to supply randomized advice to assist 1qfa's and
we will discuss how much randomized advice enhances the recognition power of the 1qfa's. Next, we will extend randomized advice further to {\em quantum advice}.
After examining a situation surrounding the 1qfa's in the presence of quantum advice, we will consider how to make the most of the quantum advice to strengthen the computational power of the 1qfa's.

\subsection{Computational Complexity of 1QFA/{Rn}}\label{sec:complexity-Rn}

By natural analogy with $\onerfa/Rn$, we intend to introduce an advised language family $\oneqfa/Rn$. The most reasonable way to define a language $L$ in $\oneqfa/Rn$ is to demand the existence of  a 1qfa $M$, a constant $\varepsilon\in[0,1/2)$, an advice alphabet $\Gamma$,
and an advice probability ensemble $\{D_n\}_{n\in\nat}$
($D_n:\Gamma^n\rightarrow [0,1]$)  for which  (*)
$\prob_{M,D_{n}}[M(\track{x}{D_n})=L(x)]\geq 1-\varepsilon$ holds
for every length $n\in\nat$ and every string $x$ of length $n$.
Since $M$ performs quantum operations rather than deterministic operations of 1rfa's, we need to state Condition (*) more precisely. Let $M = (Q,\Sigma_{\Gamma},\{U_{\sigma}\}_{\sigma\in\check{\Sigma}_{\Gamma}}, q_0,Q_{acc},Q_{rej})$ denote any underlying 1qfa and let $\{D_n\}_{n\in\nat}$
be an advice probability ensemble over $\Gamma^*$.
Let $x=\sigma_1\sigma_2\cdots \sigma_n\in\Sigma^n$ and $y=\tau_1\tau_2\cdots\tau_n\in\Gamma^n$.
To simplify our notation, write $\track{\sigma_0}{\tau_0}$ and $\track{\sigma_{n+1}}{\tau_{n+1}}$ for $\cent$ and $\dollar$, respectively. Notice that the $i$th tape cell consists of symbol
$\track{\sigma_i}{\tau_i}$.
Let us define quantum states $\qubit{\phi^{(x,y)}_{0}} = \qubit{q_0}$ and $\qubit{\phi^{(x,y)}_{i+1}} = T_{\track{\sigma_i}{\tau_i}}\qubit{\phi^{(x,y)}_{i}}$ in the space $E_{Q}$ for any index $i\in[0,n]_{\integer}$.
In the presence of randomized advice $D_n$, the acceptance probability $p_{acc}(x,D_n)$ of $M$ on this input $x$ is defined as
\[
p_{acc}(x,D_n) = \sum_{y\in\Gamma^n}D_n(y)\sum_{i=0}^{n+1} \left\|  P_{acc}U_{\track{\sigma_i}{\tau_i}}\qubit{\phi^{(x,y)}_{i}} \right\| ^2.
\]
Likewise, the rejection probability $p_{rej}(x,D_n)$ is defined using $P_{rej}$ in place of $P_{acc}$. With those notations, Condition (*) is now  understood as asserting that  $p_{acc}(x,D_n)\geq 1-\varepsilon$ holds for all $x\in L\cap\Sigma^n$ and that $p_{rej}(x,D_n)\geq 1-\varepsilon$ holds for all $x\in \Sigma^n-L$.

Let us start with a simple observation on significance of the  ``bounded-error probability'' requirement for 1qfa's.
By augmenting  $\oneqfa_{(a(n),b(n))}$ with randomized advice,
we can define $\oneqfa_{(a(n),b(n))}/Rn$ as a parametrization of $\oneqfa/Rn$. Recall from Section \ref{sec:introduction} that the notation ``${\rm ALL}$'' indicates the collection of all languages.   When  the error probability of 1qfa's becomes arbitrarily close to $1/2$ (known as {\em unbounded-error probability}), a dexterous choice of randomized advice can make those underlying 1qfa's recognize all languages; thus, the lemma below follows.

\begin{lemma}
$\oneqfa_{(1/2,1/2)}/Rn = {\rm ALL}$.
\end{lemma}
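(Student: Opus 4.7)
I aim at the nontrivial inclusion $\mathrm{ALL}\subseteq\oneqfa_{(1/2,1/2)}/Rn$; the reverse inclusion is immediate. Fix an arbitrary $L\subseteq\Sigma^*$. The driving idea is that a length-$n$ randomized advice has the exponentially large sample space $\Gamma^n$ at its disposal, and this is more than enough to encode a uniformly random guess $x^*\in\Sigma^n$ together with its correct label $L(x^*)$; the 1qfa compares $x$ symbol by symbol with $x^*$, outputs $L(x^*)$ with certainty on exact agreement, and defers to an unbiased quantum coin on the first disagreement. The resulting advantage of $1/(2|\Sigma|^n)$ over a uniformly random answer is precisely what meets the strict $>1/2$ threshold.

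\emph{Construction.} Take the advice alphabet $\Gamma=\Sigma\cup(\Sigma\times\{0,1\})$ as a disjoint union, and assign $D_n(y)=|\Sigma|^{-n}$ to each length-$n$ string $y=x^*_1 x^*_2\cdots x^*_{n-1}(x^*_n,L(x^*))$ with $x^*\in\Sigma^n$, and $D_n(y)=0$ elsewhere. The 1qfa has states $Q=\{q_0,q_a,q_r\}$ with $Q_{acc}=\{q_a\}$, $Q_{rej}=\{q_r\}$, and initial state $q_0$. On a scanned pair $\track{\sigma}{\tau}$, let $U_{\track{\sigma}{\tau}}$ be the identity when $\tau=\sigma\in\Sigma$ (intermediate match); the swap of $\qubit{q_0}$ with $\qubit{q_b}$ fixing the third basis vector when $\tau=(\sigma,b)\in\Sigma\times\{0,1\}$ (final match, identifying $q_1=q_a$, $q_0=q_r$); and the swap of $\qubit{q_0}$ with $(\qubit{q_a}+\qubit{q_r})/\sqrt{2}$ fixing the orthogonal $(\qubit{q_a}-\qubit{q_r})/\sqrt{2}$ in the two mismatch cases ($\tau\in\Sigma$ with $\tau\neq\sigma$, or $\tau=(\sigma',b)$ with $\sigma'\neq\sigma$). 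Each such operator is either the identity or an orthonormal basis swap, hence unitary. Set $U_\cent=I$, and let $U_\dollar$ be the identity except that it swaps $\qubit{q_0}$ with $\qubit{q_{L(\lambda)}}$ in order to handle the degenerate case $n=0$; this modification leaves the $n\geq 1$ analysis intact because, as shown below, the non-halting component of the state has already collapsed to $0$ by the time $\dollar$ is scanned.

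\emph{Analysis and main subtlety.} Conditioning on the advice sampled from $D_n$ encoding some $x^*\in\Sigma^n$: if $x^*=x$, every comparison matches, the measure-many measurements leave the state on the purely non-halting line $\qubit{q_0}$ (so they act trivially), and the final step deterministically sends $\qubit{q_0}\mapsto\qubit{q_{L(x)}}$, yielding $L(x)$ with certainty; if $x^*\neq x$, at the first disagreeing position the state becomes the balanced superposition $(\qubit{q_a}+\qubit{q_r})/\sqrt{2}$, whereupon a single measure-many step contributes exactly $1/2$ to each of $p_{acc}$ and $p_{rej}$ and zeroes the non-halting amplitude, preempting all further contributions. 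Averaging over $D_n$ gives, for every $x\in\Sigma^n$,
\[
\prob\bigl[M(\track{x}{D_{n}})=L(x)\bigr]\;=\;\frac{1}{|\Sigma|^n}\cdot 1 + \Bigl(1-\frac{1}{|\Sigma|^n}\Bigr)\cdot\frac{1}{2}\;=\;\frac{1}{2}+\frac{1}{2|\Sigma|^n}\;>\;\frac{1}{2},
\]
so both the acceptance probability when $x\in L$ and the rejection probability when $x\notin L$ strictly exceed $1/2$, establishing $L\in\oneqfa_{(1/2,1/2)}/Rn$. The only mild difficulty in carrying out this plan is exactly what was just handled: verifying that the measure-many projections do not disturb the intended classical-style bookkeeping, which is automatic here because matching keeps the trajectory on the non-halting vector $\qubit{q_0}$ while the first mismatch produces a pure equal superposition of the two halting states.
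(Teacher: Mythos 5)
Your construction is correct, and its core mechanism is the same as the paper's: the randomized advice is a guess that the machine compares against the input symbol by symbol, answering with certainty on a full match and dropping into the balanced halting superposition $(\qubit{q_a}+\qubit{q_r})/\sqrt{2}$ at the first mismatch, so that a single measure-many step contributes exactly $1/2$ to each outcome and kills the non-halting amplitude. The difference is in what is guessed. The paper draws the advice uniformly from $L\cap\Sigma^n$ (with a special all-$\#$ advice when that set is empty) and hard-wires ``accept on match,'' which yields acceptance probability $\tfrac12+\tfrac{1}{2|L\cap\Sigma^n|}$ on yes-instances but rejection probability \emph{exactly} $1/2$ on no-instances, so it really only certifies the one-sided cutpoint condition $x\in L\iff p_{acc}>1/2$. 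You instead draw a uniform $x^*\in\Sigma^n$ and append its label $L(x^*)$ to the advice, outputting that label on a match; this gives the strict margin $\tfrac12+\tfrac{1}{2|\Sigma|^n}$ on \emph{both} sides, which is what the paper's stated definition of $\oneqfa_{(1/2,1/2)}$ (accept with probability more than $1/2$ on yes-instances \emph{and} reject with probability more than $1/2$ on no-instances) literally demands, and it also absorbs the $L\cap\Sigma^n=\setempty$ case without a separate branch. So your variant is not just equivalent but arguably the cleaner match to the definition as written; the price is a marginally larger advice alphabet $\Sigma\cup(\Sigma\times\{0,1\})$ and the small bookkeeping you already handled around $U_{\dollar}$ and the $n=0$ case.
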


\begin{proof}
Let $L$ be any language over alphabet $\Sigma$. We set our advice alphabet $\Gamma$ to be $\Sigma\cup\{\#\}$, where $\#$ indicates a special symbol not in $\Sigma$.  We intend to define the desired 1qfa $M =(Q,\Sigma_{\Gamma},\{U_{\xi}\}_{\xi\in\check{\Sigma}_{\Gamma}}, q_0,Q_{acc},Q_{rej})$ and randomized advice $\{D_n\}_{n\in\nat}$ that together recognize $L$ with unbounded-error probability.

Fix an arbitrary length $n\in\nat$. For simplicity of the proof, we consider only the case of $\lambda\in L$. We abbreviate each  set $L\cap\Sigma^n$ as   $L_n$. Now, we assume that $n\geq1$.
In the case where $L_n=\setempty$, $D_n$ generates $\#^n$ with probability $1$. By scanning the first symbol of $\#^n$, $M$ easily concludes that  $L_n=\setempty$, and thus it immediately rejects any input string of length $n$.
Next, assuming $L_n \neq \setempty$, we set our randomized advice $D_n$ as  $D_n(y) =  1/|L_n|$ for any string $y\in L_n$ and $D_n(y)=0$ for all the other strings $y$ in $\Sigma^n$.

Our 1qfa $M$ is designed to work as follows.
If our input $x$ is $\lambda$, then we force $M$ to accept it after reading $\cent\lambda\dollar$. Otherwise, for each advice string $s$, (i) $M$ checks whether its input $\track{x}{s}$ satisfies $x=s$, (ii) if so, then $M$ accepts the input with certainty, and (iii) otherwise, $M$ accepts and rejects the input with equal probability.
To perform these steps, we first define $Q=\{q_0,q_1,q_2\}$, $Q_{acc}=\{q_1\}$, and $Q_{rej}=\{q_2\}$. The time-evolution operators $\{U_{\xi}\}_{\xi\in\check{\Sigma}_{\Gamma}}$ are defined
as $U_{\cent}= U_{\track{\sigma}{\sigma}} = I$ (identity), and
\[
\hs{-5}
{\small U_{\track{\sigma}{\tau}} = \ninematrices{0}{1}{0}{\frac{1}{\sqrt{2}}}{0}{\frac{1}{\sqrt{2}}}{\frac{1}{\sqrt{2}}}{0}{-\frac{1}{\sqrt{2}}}}, 
\;\;\;\;
{\small U_{\track{\sigma}{\#}} =  \ninematrices{0}{0}{1}{0}{1}{0}{1}{0}{0}},
\;\;\text{and}\;\;
{\small U_{\dollar} =  \ninematrices{0}{1}{0}{1}{0}{0}{0}{0}{1}},
\]
where $\sigma\neq\tau$.  Note that an initial quantum state of $M$ is $\qubit{q_0}$ ($=(1,0,0)^T$).
It is straightforward to verify that, for any $n\in\nat$ and any $x\in\Sigma^n$,  $x\in L_n$ iff $\prob_{M,D_n}[M(\track{x}{D_n})=1]>1/2$.
Therefore, $L$ belongs to $\oneqfa_{(1/2,1/2)}/Rn$.
\end{proof}


As for deterministic advice, we have remarked in Section \ref{sec:basic-property-QFA/n}  that  $\oneqfa/n$ is contained in $\reg/n$.
When randomized advice is concerned,  a similar containment holds between $\oneqfa/Rn$ and $\reg/Rn$; however, this fact is not quite obvious from their definitions.

\begin{lemma}\label{randomized-advice-vs-reg}
$\oneqfa/Rn\subseteq \reg/Rn$.
\end{lemma}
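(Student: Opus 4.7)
The plan is to derive the inclusion from the classical result $\oneqfa\subseteq\reg$ of Kondacs and Watrous by first driving the 1qfa's error down to a small constant via quantum amplification, and then classically extracting a 1dfa that is correct on the typical advice string. Given $L\in\oneqfa/Rn$ via a 1qfa $M$, randomized advice $\{D_n\}_{n\in\nat}$, and error $\varepsilon<1/2$, we shall first construct an amplified machine $M_k$ that runs $k$ parallel independent copies of $M$ on the tensor product state space $E_Q^{\otimes k}$, records each copy's halting outcome in a small classical ``vote'' register, and declares the final decision by majority. Bundling the $k$ advice tracks into a single symbol of the enlarged alphabet $\Gamma^k$ keeps the total advice length equal to $n$, so the randomized ensemble $\widetilde D_n:=D_n^{\otimes k}$ is admissible for $\oneqfa/Rn$. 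A Chernoff-type estimate then shows that for a sufficiently large constant $k$, $M_k$ recognizes $L$ under $\widetilde D_n$ with error at most some constant $\varepsilon'<1/32$.

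Next, Markov's inequality will convert this averaged accuracy into a per-advice guarantee. For every $x\in L$, since $E_{y\sim\widetilde D_n}[1-p_{acc}^{M_k}(\track{x}{y})]\leq\varepsilon'$, we obtain $\Pr_{y}[p_{acc}^{M_k}(\track{x}{y})\geq 3/4]\geq 1-4\varepsilon'\geq 7/8$, with the symmetric statement for rejection when $x\notin L$. Thus on a $7/8$-fraction of the advice space, $M_k$'s output probability is isolated from the cut-point $1/2$ by a margin of at least $1/4$.

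The core step extracts a 1dfa $M'$ from $M_k$ via the equivalence-class construction underlying $\oneqfa\subseteq\reg$. Two input prefixes $w,w'\in\Sigma_{\Gamma^k}^{*}$ will be declared equivalent when the elements $\hat T_{w}\psi_0$ and $\hat T_{w'}\psi_0$ of the space $\SSS$ from Lemma~\ref{norm-property} lie within a fixed distance $\delta\ll 1/4$ of each other. Lemma~\ref{norm-property}(\ref{upper-bound}) ensures this equivalence is preserved by every transition operator $T_\sigma$, and compactness of the unit ball in $E_{Q}^{\otimes k}$ combined with the boundedness of the accumulated accept/reject probabilities guarantees finitely many classes. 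Building a 1dfa $M'$ whose states are these classes, labelling a class accepting when $p_{acc}^{M_k}\geq 3/4$ and rejecting when $p_{acc}^{M_k}\leq 1/4$ (and leaving the label arbitrary on the intermediate band), we obtain a machine that agrees with the majority vote of $M_k$ on every input of isolated cut-point. Running $M'$ with randomized advice $\widetilde D_n$ then outputs $L(x)$ with probability at least $7/8$ on every $x$, placing $L$ in $\reg/Rn$.

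The main obstacle is the last step: unlike the original Kondacs--Watrous setting, $M_k$ is not globally bounded-error but only isolated on the $\widetilde D_n$-typical inputs, so the equivalence classes must be coarse enough to stay finite in number yet fine enough to let the final labelling correctly detect the majority vote on the isolated inputs, and transitions that leave an isolated class must be shown not to corrupt the trajectory on those typical advice strings. The quantitative norm estimates of Lemma~\ref{norm-property}(\ref{diff-estimate}) and (\ref{lower-bound}), together with the non-expansiveness of the $T_\sigma$ operators already exploited in the proof of Theorem~\ref{oneqfa-character}, should supply the control needed to make this construction rigorous.
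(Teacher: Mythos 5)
There is a genuine gap, and it sits in your very first step: the parallel-repetition amplification of a measure-many 1qfa cannot be carried out. The paper itself stresses this twice --- in Section \ref{sec:introduction} (``for bounded-error 1qfa's, it is not always possible to make a sufficient amplification of success probability'') and again just before Proposition \ref{closure-of-1QFA/Qn} --- and the Ambainis--Freivalds facts quoted there make the impossibility provable: $\{0^m1^n\}$ is recognized by some 1qfa with probability $0.68$, yet any 1qfa with success probability above $7/9$ is no more powerful than a 1rfa, and this language lies outside $\onerfa$; hence no generic error-reduction exists. The concrete obstruction is the ``small classical vote register'' you posit. In the measure-many model the machine is projectively measured with $\{P_{acc},P_{rej},P_{non}\}$ after every symbol and its entire memory is the finite quantum register; to run $k$ copies to completion you would need the evolution operator to act as $U_\sigma$ on the copies still live and as the identity on copies that have already halted, while preserving each halted copy's verdict. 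Such a map is not unitary: $U_\sigma$ sends the non-halting subspace into a subspace overlapping $E_{acc}\oplus E_{rej}$, so the images of the ``halted'' and ``non-halted'' sectors fail to be orthogonal. This is exactly the deficiency that Section \ref{sec:rewritable-1qfa} introduces \emph{rewritable} advice tracks to repair; Lemmas \ref{measure-reduction} and \ref{error-prob-reduction} accomplish your step 1, but only for $\oneqfastar/Qn$, not for ordinary 1qfa's.

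The remainder of your argument does track the paper's proof. The paper fixes $x$, writes $\prob_{M,D_n}[M(\track{x}{D_n})=L(x)]=\sum_i p_i r_i\geq 1-\varepsilon$, applies a Markov-type estimate to show that the index set $A=\{i\mid r_i\geq 1-3\varepsilon\}$ carries $D_n$-mass at least $2/3$, and then invokes the Kondacs--Watrous conversion of $M$ into a 1dfa $N$ that agrees with $M$ on the advice strings indexed by $A$, so that $N$ with the same randomized advice answers correctly with probability at least $2/3$. In other words, the paper simply omits amplification and applies the Markov step directly to the original machine with its original error bound, implicitly assuming $\varepsilon$ small enough that $1-3\varepsilon>1/2$. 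If you drop your step 1 and run your steps 2--3 on $M$ itself, with thresholds expressed in terms of $\varepsilon$, you recover the paper's argument; as written, your proof cannot be repaired without abandoning the amplification.
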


\begin{proof}
Fixing an input alphabet $\Sigma$, take any language $L$ in $\oneqfa/Rn$ over $\Sigma$. Let $M$ be a 1qfa, $\Gamma$ be an advice alphabet, and $\{D_n\}_{n\in\nat}$ be an advice probability ensemble over $\Gamma^*$, and assume that, for every length $n\in\nat$ and any string $x\in\Sigma^n$, $\prob_{M,D_n}[M(\track{x}{D_n})=L(x)] \geq 1-\varepsilon$ holds.
In what follows, we fix $n\in\nat$ and $x\in\Sigma^n$ arbitrarily.
We enumerate all strings in $\Gamma^n$ as $\{y_1,y_2,\ldots,y_{c^n}\}$ with $c=|\Gamma|$.
For each index $i\in[c^n]$, let $p_i = D_n(y_i)$ and $r_i= \prob_{M}[M(\track{x}{y_i})=L(x)]$ so that the value
$\prob_{M,D_n}[M(\track{x}{D_n})=L(x)]$ is succinctly expressed as
$\sum_{i=1}^{c^n}p_ir_i$.

Now, consider the set $A=\{i\in[c^n]\mid r_i\geq 1- 3\varepsilon\}$. First, we want to show that $\sum_{i\in A}p_i\geq 2/3$. By the definitions of $p_i$'s and $r_i$'s, it follows that
\[
\sum_{i=1}^{c^n} p_ir_i \leq \sum_{i\in A}p_i\cdot 1 + \sum_{i\not\in A}p_i(1-3\varepsilon) = 1-3\varepsilon + 3\varepsilon \sum_{i\in A}p_i,
\]
where the equality comes from the fact that $\sum_{i\not\in A}p_i= 1- \sum_{i\in A}p_i$. Since $\sum_{i=1}^{c^n}p_ir_i\geq 1-\varepsilon$ by our assumption, we conclude that $\sum_{i\in A}p_i\geq 2/3$.

As shown in \cite{KW97}, we can translate the underlying 1qfa $M$ into a certain  ``equivalent'' 1dfa, say, $N$. Unfortunately, this 1dfa $N$ may not always produce the same output as the original 1qfa does on the same input with ``high'' probability. Nonetheless, as far as we restrict our attention within the indices $i\in A$, $N$ correctly outputs $L(x)$  using $\{D_n\}_{n\in\nat}$ with probability at least $2/3$.
Therefore, $L$ belongs to $\reg/Rn$.
\end{proof}


Using Proposition \ref{det-vs-randomized-advice}(1), we can exemplify the usefulness of randomized advice for 1qfa's.

\begin{lemma}
$\oneqfa/n\neq \oneqfa/Rn$.
\end{lemma}

\begin{proof}
Assume that $\oneqfa/n=\oneqfa/Rn$.  Proposition \ref{det-vs-randomized-advice}(1) implies that $\onerfa/Rn\nsubseteq \reg/n$.  Since
$\onerfa/Rn\subseteq \oneqfa/Rn$ holds,  it follows that $\oneqfa/Rn\nsubseteq \reg/n$.  Thus, our assumption leads to a conclusion that  $\oneqfa/n\nsubseteq\reg/n$. This contradicts a fact stated in Section \ref{sec:basic-property-QFA/n} that $\oneqfa/n$ is a subclass of $\reg/n$.  Therefore, $\oneqfa/Rn$ is different from $\oneqfa/n$.
\end{proof}


Since quantum computation is well capable of handling quantum information, it is natural to consider a piece of special advice, known as {\em quantum advice}, which is a series of {\em pure quantum states}, introduced in  \cite{NY04b} for Turing machines.  In the past literature, quantum advice has been discussed chiefly in the context of polynomial-time computations (see, e.g.,  \cite{Aar05,NY04b,Raz09}).
Associated with an advice alphabet $\Gamma$, we denote by $\qubit{\phi_n}$  a {\em normalized} quantum state in a Hilbert space of dimension $|\Gamma|^n$ if $n\geq1$. Using a computational basis $\Gamma^n$,  $\qubit{\phi_n}$ can be expressed as  a superposition of the form $\sum_{s\in\Gamma^n}\alpha_{s}\qubit{s}$ with appropriate amplitudes $\alpha_{s}\in\complex$ satisfying  $\sum_{s\in\Gamma^n}|\alpha_{s}|^2=1$.
For our later convenience, the succinct notation $\qubit{\track{x}{\phi_n}}$ indicates a particular quantum state $\sum_{s\in\Gamma^n}\alpha_s\qubit{\track{x}{s}}$ represented in computational basis $\Sigma_{\Gamma}^{n}= \{\track{x}{s}\mid x\in\Sigma^n, s\in\Gamma^n\}$.
For our convenience, when $n=0$, we set $\ket{\phi_0}=\ket{\lambda}$.

To treat the quantum advice formally, it is more convenient to {\em rephrase} the earlier definition of advised 1qfa given in Section \ref{sec:QFA/n} by expanding its original Hilbert space $E_{Q} = span\{\qubit{q}\mid q\in Q\}$ used  to  a larger Hilbert space $E_n = span\{\qubit{q}\qubit{y}\mid q\in Q,y\in\Gamma^n\}$,  where $n$ refers to input size.
First, we express our ``new'' advised 1qfa, say, $M$ as a septuple $(Q,\Sigma,\Gamma,\{U_{\sigma}\}_{\sigma\in\check{\Sigma}}, q_0,Q_{acc},Q_{rej})$. Each entry $U_{\sigma}$ associated with symbol  $\sigma\in\Sigma$ is a unitary operator acting on $span\{\ket{q}\ket{\tau}\mid q\in Q,\tau\in \Gamma\}$ and two special entries $U_{\cent}$ and $U_{\dollar}$ corresponding to the endmarkers $\cent$ and $\dollar$ are unitary operators acting only on $span\{\ket{q}\mid q\in Q\}$. For every operator $U_{\sigma}$ with $\sigma\in\Sigma$,
since the input tape is {\em read-only}, although $U_{\sigma}$ accesses its second register holding advice symbols in $\Gamma$, it cannot change the ``content'' of the second register; namely, for any pair $(q,\tau)\in Q\times\Gamma$, there exists a unit-length vector $\ket{\phi_{q,\tau}}$ satisfying $U_{\sigma}\ket{q}\ket{\tau} = \ket{\phi_{q,\tau}}\ket{\tau}$.

Let us fix our input size $n$ (ignoring the endmarkers), let $i\in[n]$, and assume that $\sigma$ is the $i$th input symbol.
Each basic operator $U_{\sigma}$ induces an extended operator $U^{(i)}_{\sigma}$, acting on $E_n$, which
is, intuitively, applied to $M$'s inner states as well as the content of the the $i$th tape cell (composed of both an input symbol and an advice symbol).
Given any pair $(q,y)\in Q\times \Gamma^n$ with $y=y_1y_2\cdots y_n$, we set $U^{(i)}_{\sigma,n}\ket{q}\ket{y}$ to be a quantum state in $E_n$ obtained by applying $U_{\sigma}$ to $\ket{q}\ket{y_i}$ only.
As an example, it holds that $U^{(1)}_{\sigma,n}\ket{q}\ket{y} = (U_{\sigma}\ket{q}\ket{y_1})\otimes \ket{y_2y_3\cdots y_n}$. Moreover, let $U^{(0)}_{\cent,n}\ket{q}\ket{y} = (U_{\cent}\ket{q})\otimes \ket{y}$ and $U^{(n+1)}_{\dollar,n}\ket{q}\ket{y} = (U_{\dollar}\ket{q})\otimes \ket{y}$. All other extended operators $U^{(i)}_{\sigma,n}$ for any $i\in\nat$ and $\sigma\in\check{\Sigma}$ are simply set as $I$ (identity).
The original projection operators $P_{acc}$, $P_{rej}$, and $P_{non}$ in  Section \ref{sec:QFA/n} are appropriately  modified to act on $E_n$ as well.
Notice that those projection operators are applied only to the first register containing inner states in $Q$, not to the second register holding advice strings in $\Gamma^n$.

To simplify our notation, when ``$n$'' is clear from the context, we drop the script ``$n$'' and write $U^{(i)}_{\sigma}$ instead of $U^{(i)}_{\sigma,n}$. Hereafter, we fix $n\in\nat$.
For each extended operator $U^{(i)}_{\sigma}$, we set $T^{(i)}_{\sigma} = P_{non}U^{(i)}_{\sigma}$.
Given any input $\cent x\dollar = \sigma_0\sigma_1\sigma_2\cdots \sigma_{n}\sigma_{n+1}$ with $\sigma_0=\cent$, $\sigma_{n+1}=\dollar$, and $x\in\Sigma^n$ as well as any index $i\in[0,n+1]_{\integer}$,
an extended operator $T_{\sigma_0\sigma_1\cdots \sigma_i}$  acting on $E_n$ is defined to be $T_{\sigma_i}^{(i)}T_{\sigma_{i-1}}^{(i-1)} \cdots T_{\sigma_1}^{(1)} T_{\sigma_0}^{(0)}$.
On such an input $x$, an advised 1qfa $M$ starts with an initial quantum state $\qubit{q_0}\qubit{\phi_n} = \sum_{y\in\Gamma^n}\alpha_y\qubit{q_0}\qubit{y}$. At time $i+1$, performing the measurement $P_{acc}$ gives the acceptance probability $p_{acc}(x,\phi_n,i+1) = \|  P_{acc}U^{(i)}_{\sigma_i}T_{\sigma_0\sigma_1\cdots \sigma_{i-1}}\qubit{q_0}\qubit{\phi_n} \| ^2$, which equals
$\|  \sum_{y\in\Gamma^n}\alpha_y P_{acc}U^{(i)}_{\sigma_i}T_{\sigma_0\sigma_1\cdots \sigma_{i-1}} \qubit{q_0}\qubit{y} \| ^2$, provided that we ignore $T_{\sigma_0\sigma_1\cdots \sigma_{i-1}}$ when $i=0$.
After the 1qfa $M$ halts, the (total) acceptance probability
$p_{acc}(x,\phi_n)$ becomes  $\sum_{i=1}^{n+2}p_{acc}(x,\phi_n,i)$.  The rejection probabilities $p_{rej}(x,\phi_n,i+1)$ and $p_{rej}(x,\phi_n)$ are  similarly defined using $P_{rej}$ in place of $P_{acc}$.

Unlike a model of quantum Turing machine, our current model of 1qfa
is equipped with two {\em  read-only} tape tracks and, unintentionally, this ``read-only'' restriction severely limits the potential power of quantum advice. To comprehend this limitation, let us first observe that each basis advice string given in a quantum advice state is unaltered during computation,  and therefore any two quantum computations associated with different basis advice strings never interfere with each another. This observation yields
the following new characterization of $\oneqfa/Rn$ in terms of quantum advice.
For succinctness, we use the notation $\prob_{M}[M(\track{x}{\phi_{|x|}})=L(x)]$ to denote the total probability of $M$ on input $\qubit{\track{x}{\phi_n}}$ producing output value $L(x)$.

\begin{proposition}\label{random-quantum-advice}
Let $L$ be any  language over alphabet $\Sigma$. The following two statements are logically equivalent.
\begin{enumerate}\vs{-1}
  \setlength{\topsep}{0mm}%
  \setlength{\itemsep}{0mm}
  \setlength{\parskip}{0cm}%

\item  $L\in\oneqfa/Rn$.
\item \sloppy There exist a 1qfa $M$ with two read-only tape tracks, an advice alphabet $\Gamma$, a series $\Phi = \{\qubit{\phi_n}\}_{n\in\nat}$ of quantum advice states over $\Gamma^*$, and an error bound  $\varepsilon\in[0,1/2)$ satisfying $\prob_{M}[M(\track{x}{\phi_{|x|}}) = L(x)]\geq 1-\varepsilon$ for any input $x\in\Sigma^*$.
\end{enumerate}
\end{proposition}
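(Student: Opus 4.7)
The plan is to exploit the read-only nature of both the input tape and the advice tape: because the transition operators $U^{(i)}_\sigma$ cannot modify the advice register, different classical advice strings in a quantum superposition cannot interfere with one another, and the whole computation decomposes diagonally in the advice basis. This will make the two directions essentially symmetric translations between a mixture (randomized advice) and a pure state whose squared amplitudes give that mixture.

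For the direction $(1) \Rightarrow (2)$, I would start with a 1qfa $M$, advice alphabet $\Gamma$, randomized advice $\{D_n\}_{n\in\nat}$, and error bound $\varepsilon$ witnessing $L\in\oneqfa/Rn$. Define the quantum advice state $\qubit{\phi_n} = \sum_{s\in\Gamma^n}\sqrt{D_n(s)}\qubit{s}$ (real nonnegative amplitudes suffice). Keep $M$ unchanged, viewed in the extended Hilbert space $E_n$. The initial state on input $x$ is $\qubit{q_0}\qubit{\phi_n} = \sum_s \sqrt{D_n(s)}\qubit{q_0}\qubit{s}$. I would then verify by a straightforward induction on the step index $i$ that, since each $U^{(i)}_{\sigma}$ acts as identity on the advice register, the evolved state has the form $\sum_s \sqrt{D_n(s)}\qubit{\psi^{(x,s)}_i}\qubit{s}$, where $\qubit{\psi^{(x,s)}_i}$ is exactly the state $M$ would be in at step $i$ under classical advice $s$. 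Orthogonality of the basis vectors $\qubit{s}$ then gives $\| P_{acc}U^{(i)}_{x_i}(\cdots)\qubit{q_0}\qubit{\phi_n}\|^2 = \sum_s D_n(s)\| P_{acc}U^{(i)}_{x_i}\qubit{\psi^{(x,s)}_{i-1}}\|^2$. Summing over $i$ yields $p_{acc}(x,\phi_n) = \sum_s D_n(s) p_{acc}(x,s) = p_{acc}(x,D_n)$, and likewise for rejection; thus the same error bound $\varepsilon$ transfers.

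For the direction $(2) \Rightarrow (1)$, I start with $M$, $\Gamma$, $\Phi = \{\qubit{\phi_n}\}$ with $\qubit{\phi_n} = \sum_{s\in\Gamma^n}\alpha_s\qubit{s}$, and $\varepsilon$ witnessing statement (2). Define a probability ensemble $D_n(s) = |\alpha_s|^2$; this is a genuine distribution since $\qubit{\phi_n}$ is normalized. I would again decompose the evolved state as $\sum_s \alpha_s \qubit{\psi^{(x,s)}_i}\qubit{s}$ using the read-only property, and then, because the $\qubit{s}$ are orthonormal, compute $\|P_{acc}U^{(i)}_{x_i}(\cdots)\qubit{q_0}\qubit{\phi_n}\|^2 = \sum_s |\alpha_s|^2 \|P_{acc}U^{(i)}_{x_i}\qubit{\psi^{(x,s)}_{i-1}}\|^2$, giving $p_{acc}(x,\phi_n) = \sum_s D_n(s) p_{acc}(x,s)$. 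Thus the same machine $M$, treated with randomized advice $\{D_n\}_{n\in\nat}$ in the sense of Section \ref{sec:complexity-Rn}, recognizes $L$ with error at most $\varepsilon$, placing $L$ in $\oneqfa/Rn$.

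The only real subtlety, and the step I would spell out most carefully, is the diagonal-decomposition lemma: the claim that after each step the global state remains of the form $\sum_s \beta_s \qubit{\psi_s}\qubit{s}$ with the $\qubit{\psi_s}$ being exactly the classical-advice trajectories. This requires using that both $U^{(i)}_\sigma$ and $P_{non}$ act trivially on the second register, so that $T^{(i)}_{x_i}$ preserves the form; and that the measurement $P_{acc}$ at each step, again acting trivially on the advice register, yields probabilities that are weighted sums of $\|P_{acc}\qubit{\psi^{(x,s)}_{i-1}}\|^2$ with weights $|\beta_s|^2$, with no cross terms. Everything else is bookkeeping, and the equivalence of error bounds $\varepsilon$ in both directions drops out immediately.
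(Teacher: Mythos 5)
Your proposal is correct and follows essentially the same route as the paper: both directions rest on the diagonal decomposition of the computation in the advice basis (justified by the read-only advice track), with the translations $\qubit{\phi_n}=\sum_{s}\sqrt{D_n(s)}\qubit{s}$ and $D_n(s)=|\alpha_s|^2$ exactly as in the paper's proof. Your write-up is in fact somewhat more explicit than the paper's (which dispatches the $(1)\Rightarrow(2)$ direction in one sentence), but there is no substantive difference in approach.
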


\begin{proof}
($1 \Rightarrow 2$) Note that a piece of randomized advice, say, $D_n$ over $\Gamma^n$ can be embedded into the aforementioned Hilbert space $E_n$ as a quantum state of the form $\qubit{\phi_n} = \sum_{y\in\Gamma^n}\sqrt{D_n(y)}\qubit{y}$.
Statement (2) thus follows immediately by replacing $D_n$ with $\qubit{\phi_n}$.

($2 \Rightarrow 1$)
Take $M$, $\Gamma$, $\Phi$, and $\varepsilon$ described in the lemma and take any number $n\in\nat$.
Assume that each advice quantum state $\qubit{\phi_n}\in\Phi$ is of the form  $\qubit{\phi_n} = \sum_{y\in\Gamma^n}\alpha_{y}\qubit{y}$ for appropriate amplitudes $\alpha_i\in\complex$.  Let $x=\sigma_1\sigma_2\cdots\sigma_n$ be any string in $\Sigma^n$ and choose an appropriate string  $y=\tau_1\tau_2\cdots \tau_n$ of length $n$.
For convenience, write $\sigma_0=\cent$ and $\sigma_{n+1}=\dollar$.
For the pair $(x,y)$,  we define $\qubit{\phi_{0}^{(x,y)}} = \qubit{q_0}\qubit{y}$ and $\qubit{\phi_{i+1}^{(x,y)}} = T^{(i)}_{\sigma_i}\qubit{\phi^{(x,y)}_{i}}$ for each index $i\in[0,n+1]_{\integer}$.
As remarked earlier, $T^{(i)}_{\sigma_i}$ modifies only $M$'s inner states; thus, $\qubit{\phi^{(x,y)}_{i+1}}$ can be expressed as $\qubit{\psi^{(x,y)}_{i+1}}\qubit{y}$ for a certain quantum state $\qubit{\psi^{(x,y)}_{i+1}}$.
Similarly, $U^{(i)}_{\sigma_i}\qubit{\phi^{(x,y)}_{i}}$ can be expressed as $\qubit{\tilde{\psi}^{(x,y)}_{i+1}}\qubit{y}$. The total acceptance probability $p_{acc}(x,\phi_n)$ is then calculated as
\[
p_{acc}(x,\phi_n) = \left\|  \sum_{y\in\Gamma^n}|\alpha_y|^2
\sum_{i=0}^{n+1} P_{acc}U^{(i)}_{\sigma_{i}}\qubit{\phi_{i}^{(x,y)}} \right\| ^2
= \sum_{y\in\Gamma^n}|\alpha_y|^2 \left\|
\sum_{i=0}^{n+1} P_{acc}\qubit{\tilde{\psi}_{i+1}^{(x,y)}}\qubit{y} \right\| ^2.
\]
The rejection probability $p_{rej}(x,\phi_n)$ is also calculated similarly by replacing $P_{acc}$ with $P_{rej}$. To obtain the desired consequence, it suffices to take an advice probability ensemble $\{D_n\}_{n\in\nat}$ defined as $D_n(y) = |\alpha_{y}|^2$ for each string $y\in\Gamma^n$.
\end{proof}

Proposition  \ref{random-quantum-advice} indicates that, if a 1qfa has only read-only tape tracks, then the power of quantum advice is diminished to that of randomized advice.
The proposition therefore guides us to an inevitable introduction of a notion of ``rewritable'' advice tracks in the next subsection.

\subsection{Rewritable Advised Quantum Finite Automata for Quantum Advice}\label{sec:rewritable-1qfa}

We begin with a brief discussion on how to extend the original advised 1qfa model (given in Section \ref{sec:basic-property-QFA/n}) in a simple and natural way.  First of all, we remind that, for most types of classical ``one-way'' finite automata, it is of no importance whether a tape head erases or modifies the content of any tape cell just before leaving off that tape cell, because the tape head never returns to this particular tape cell to retrieve any modified information.
Even if  the tape head is allowed to return to the modified tape cells, the computational power of the automata may not change in many natural cases. For instance, as noted earlier, advised 1dfa's (resp., bounded-error advised 1pfa's) are computationally equivalent to one-tape linear-time deterministic (resp., bounded-error probabilistic) Turing machines with linear-size advice; in short, both $\onedlin/lin = \reg/n$ \cite{TYL10} and $\onebplin/Rlin = \reg/Rn$ \cite{Yam10} hold.
These equalities suggest that the ``read-only'' requirement for  input tapes
is irrelevant to the computational power of 1dfa's and 1pfa's.
Now, suppose that we re-define two automata models---1qfa's and 1rfa's---used in the previous sections for deterministic and randomized advice so that they are further allowed to modify any {\em advice symbol} written in each tape cell of the lower tape track before their tape heads leave this current tape cell (but, importantly, the tape heads  never visit the same tape cell again).  For our reference, such a tape track is referred to as a {\em rewritable advice (tape)  track}. It is not difficult to see that such a new definition does not alter the advised language families, such as  $\reg/n$ and $\reg/Rn$, simply because underlying automata cannot remember more than a constant number of modified symbols.

When quantum advice is concerned, what would happen if we use 1qfa's equipped with rewritable advice tracks?  For our convenience, we refer to such an extended 1qfa as a {\em rewritable advised 1qfa}.
Nonetheless, we keep  an upper track that holds standard input strings
{\em read-only} as in the original model of 1qfa's.
Notice that what actually limits the power of 1qfa's is a prohibition of disposing of (or dumping) quantum information after it is read and its information is processed. In other words, we intend to utilize the advice  track as a device of {\em write-only memory}. Unlike classical computation, quantum computation can draw a considerable benefit from such write-only memory, despite the fact that a one-way head move still  hampers the
machine's potential ability. Similar ideas were discussed lately in, e.g., \cite{Pas00,YFS+12}.

Let us recall the rephrased description of advised 1qfa's presented in Section \ref{sec:complexity-Rn}. Using the same notations, let $\Phi=\{\ket{\phi_n}\}_{n\in\nat}$ be a series of quantum advice states over $\Gamma^*$. With this quantum advice $\Phi$,
a rewritable advised 1qfa $M = (Q,\Sigma, \Gamma, \{U_{\sigma}\}_{\sigma\in \check{\Sigma}}, q_0,Q_{acc},Q_{rej})$ starts with an initial quantum state  $\qubit{q_0}\qubit{\phi_n}$, where $\qubit{\phi_n}$ is an advice quantum  state in $span\{\qubit{z}\mid z\in \Gamma^n\}$ when a string $x$ of length $n$ is given as a standard input.
The machine's unary operator $U^{(i)}_{\sigma_i}$ is still applied to only $M$'s inner states and the content of the $i$th tape cell; however, it now freely modifies the content of the advice track. The acceptance probability $p_{acc}(x,\phi_n)$ of $M$ on $x$ with quantum advice $\qubit{\phi_n}$ is the sum, over all $i\in[0,n+1]_{\integer}$, of the values  $p_{acc}(x,\phi_n,i+1) = \|  P_{acc}U_{\sigma_i}^{(i)}T_{\sigma_0\sigma_1\cdots \sigma_{i-1}} \qubit{q_0}\qubit{\phi_n}  \| ^2$. The rejection probability $p_{rej}(x,\phi_n)$  is similarly defined.
To emphasize the use of rewritable advised tape tracks, a special notation $\oneqfastar/Qn$ will be used to  denote the family of all languages recognized with bounded-error probability by rewritable advised 1qfa's using quantum advice.

The actual power of rewritable advised 1qfa's geared by quantum advice is exemplified in  Lemma \ref{inclusion-oneqfa/Qn}. For this lemma, let us first review the language family $\onebqlin$, which was introduced in \cite{TYL10} as the family of all languages recognized by one-tape one-head two-way off-line quantum Turing machines whose error probabilities are upper-bounded by $1/4$, where all the ``classically-viewed'' computation paths generated by the machines must terminate {\em simultaneously} within a {\em linear} number of steps. Appending linear-size quantum advice to those machines, we naturally expand $\onebqlin$ to its advised version $\onebqlin/Qlin$, which can be also seen as a quantum analogue of $\onebplin/Rlin$ \cite{Yam10}.
Because of a nature of Turing machine, during its computation, the machine can freely alter not only a given advice string but also a given input string.


The next lemma specifies a location of $\oneqfastar/Qn$ in the landscape of low-complexity classes.

\begin{lemma}\label{inclusion-oneqfa/Qn}
$\reg/Rn \subseteq \oneqfastar/Qn \subseteq \onebqlin/Qlin$.
\end{lemma}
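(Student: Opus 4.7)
The plan is to prove the two inclusions separately, exploiting the rewritability of the advice tape in rather different ways. For $\reg/Rn \subseteq \oneqfastar/Qn$, I will embed the randomized advice into a quantum superposition and then reversibly simulate the underlying 1dfa by dumping history bits onto the rewritable advice track so that distinct branches stay orthogonal. For $\oneqfastar/Qn \subseteq \onebqlin/Qlin$, I will give a routine step-by-step simulation of a rewritable 1qfa by a linear-time quantum Turing machine that uses the same series of quantum advice states.

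For the first inclusion, take $L \in \reg/Rn$ witnessed by a 1dfa $N=(Q,\Sigma_{\Gamma},\delta,q_0,Q_{acc},Q_{rej})$ with randomized advice $\{D_n\}_{n\in\nat}$ and error bound $\varepsilon \in [0,1/2)$. Without loss of generality $N$ can be assumed to decide only after reading the whole input. Define the quantum advice state $\qubit{\phi_n} = \sum_{y\in\Gamma^n}\sqrt{D_n(y)}\,\qubit{y}$, which is normalized because $\sum_y D_n(y)=1$. The rewritable 1qfa $M$ is built over an enlarged advice alphabet $\Gamma' \supseteq \Gamma \cup (Q\times \Gamma)$, and its time-evolution operator $U^{(i)}_{\sigma}$ acts on basis vectors of the ``initial'' type by $\qubit{q}\qubit{\tau}\mapsto \qubit{\delta(q,\track{\sigma}{\tau})}\qubit{(q,\tau)}$; this map is injective because the second register records exactly the pre-transition pair, and it extends in a standard way to a full unitary on the inner-state/cell space. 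Since the head never revisits a cell and the content $(q_{i-1},y_i)$ written into cell $i$ uniquely identifies $y_i$ (given $x$ and the previously written cells), the computational branches originating from distinct basis strings of $\qubit{\phi_n}$ stay pairwise orthogonal throughout the computation. Linearity of the operators and the Born rule then give a total acceptance probability equal to $\sum_{y\in\Gamma^n} D_n(y)\cdot \prob[N(\track{x}{y})=1] = \prob_{D_n}[N(\track{x}{D_n})=1]\geq 1-\varepsilon$ whenever $L(x)=1$, and symmetrically for the rejection case.

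For the second inclusion, take $L\in\oneqfastar/Qn$ witnessed by a rewritable 1qfa $M$ with quantum advice $\{\qubit{\phi_n}\}_{n\in\nat}$ and error bound $\varepsilon$. I will construct a one-tape two-way one-head quantum Turing machine $N$ that uses the same quantum advice (of length exactly $n$, hence linear size). The machine $N$ places $x$ on the upper track and $\qubit{\phi_n}$ on the lower track, and for $i=1,\dots,n$ it moves its head one cell to the right and applies $M$'s unitary $U^{(i)}_{x_i}$ to its finite control and the currently scanned cell. The per-step projective measurements of $M$ are simulated by the standard deferred-measurement trick: $N$ maintains a single ``halted'' flag in its finite control so that, once a halting state of $M$ is entered, subsequent unitaries act as the identity, and a single terminal measurement then reproduces exactly the accept/reject statistics of $M$. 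Since $N$ uniformly terminates within $O(n)$ steps and consumes linear-size quantum advice, we obtain $L\in\onebqlin/Qlin$.

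The main obstacle lies in the first inclusion, namely verifying that the injective history-dumping map genuinely produces the classical mixture with weights $D_n(y)$ rather than interference-distorted probabilities that could destroy the $1-\varepsilon$ bound. This reduces to checking that the tape state after step $i$ is a classical (injective) function of $(x,y_1,\ldots,y_i)$, so that the $y$-indexed terms in $\sum_{y}\sqrt{D_n(y)}\cdot(\text{state}_y)$ remain mutually orthogonal and no cross-branch terms survive when $P_{acc}$ is applied; this in turn is immediate from the construction of $U^{(i)}_{\sigma}$, which makes the second register of cell $i$ carry both $y_i$ and the pre-transition inner state.
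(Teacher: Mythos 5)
Your proposal is correct and follows essentially the same route as the paper: the first inclusion is proved by encoding the randomized advice as the superposition $\sum_{y}\sqrt{D_n(y)}\ket{y}$ and making the 1dfa's transitions reversible by dumping the pre-transition inner state onto the rewritable advice track (which is exactly the paper's $\tau_q=\track{q}{\tau}$ construction, and your orthogonality check is the intended justification that no cross-branch interference arises), while the second inclusion is the same routine observation that a rewritable 1qfa is a special case of a one-tape linear-time quantum Turing machine with the same linear-size quantum advice. The only difference is that you spell out the deferred-measurement bookkeeping for the second inclusion, which the paper simply declares obvious.
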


\begin{proof}
The second containment $\oneqfastar/Qn \subseteq \onebqlin/Qlin$ is obvious from the fact that any 1qfa can be viewed as a special case of one-tape quantum Turing machine because it naturally satisfies all the requirements needed to be a rewritable advised  1qfa.

The first containment $\reg/Rn \subseteq \oneqfastar/Qn$ is shown, roughly, with a similar idea used in, \eg  \cite[Proposition 4.2]{NY09}, by dumping the information regarding a current inner state of an underlying 1dfa onto a rewritable  advice track in order to convert a deterministic move into a quantum move.

More precisely, take any language $S$ in $\reg/Rn$. There are
a 1dfa $M$, an advice alphabet $\Gamma$, an advice probability ensemble $\{D_n\}_{n\in\nat}$, an error bound $\varepsilon\in[0,1/2)$ satisfying  $\prob_{M,D_n}[M(\track{x}{D_n}) = S(x)]\geq 1-\varepsilon$ for every length $n\in\nat$ and every string $x\in\Sigma^n$.
Let us  construct a rewritable advised 1qfa $N$. In an arbitrary configuration, assume that $M$ is in inner state $q$, is scanning $\track{\sigma}{\tau}$, and applies a transition $\delta(q,\track{\sigma}{\tau}) = q'$. Corresponding to this particular move,  $N$ scans $\track{\sigma}{\tau}$ in the inner state $q$,   modifies  $\track{\sigma}{\tau}$ to $\track{\sigma}{\tau_{q}}$, and enters the inner state $q'$, where $\tau_q = \track{q}{\tau}$ is a new advice symbol uniquely associated with $(q,\sigma)$.
It thus holds that, for each fixed advice string $y\in\Gamma^n$, $M$ accepts $\track{x}{y}$ iff $N$ on input $\track{x}{y}$ enters an accepting state with probability $1$.
Choose a quantum state $\qubit{\phi_n} = \sum_{y\in\Gamma^n}\sqrt{D_n(y)}\qubit{y}$ as our intended quantum advice.
It then follows that $\prob_{N}[N(\track{x}{\phi_n}) = S(x)] = \prob_{M,D_n}[M(\track{x}{D_n}) = S(x)]$. Therefore, $S$ is in $\oneqfastar/Qn$.
\end{proof}


An introduction of rewritable advice track also makes it possible to prove
a {\em closure property} of $\oneqfastar/Qn$ under Boolean operations.
By  contrast, with no advice, $\oneqfa$ violates those properties \cite{AKV01}, chiefly because a 1qfa alone is, in general, unable to amplify its success probability.

\begin{proposition}\label{closure-of-1QFA/Qn}
The advised language family $\oneqfastar/Qn$ is closed under union, intersection, and complementation.
\end{proposition}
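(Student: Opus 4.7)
The plan is to establish the three closures separately: complementation via a simple relabeling, and union and intersection (which are equivalent under complementation by De Morgan) via a parallel-simulation construction that exploits both the quantum nature of the advice and the rewritability of the advice track.

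For complementation, given any $L\in\oneqfastar/Qn$ witnessed by a rewritable 1qfa $M=(Q,\Sigma_\Gamma,\{U^{(i)}_\sigma\},q_0,Q_{acc},Q_{rej})$ with quantum advice $\{\qubit{\phi_n}\}$ and error bound $\varepsilon<1/2$, the machine $\bar M$ obtained by interchanging $Q_{acc}$ and $Q_{rej}$ (all unitaries, the initial state, and the advice remaining unchanged) recognizes $\Sigma^*\setminus L$ with the same advice and error bound, since the effect is simply to swap the per-step projections $P_{acc}\leftrightarrow P_{rej}$ and therefore $p^{\bar M}_{acc}(x,\phi_{|x|})=p^M_{rej}(x,\phi_{|x|})$ at every step.

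For intersection, let $M_j$ with quantum advice $\{\qubit{\phi^{(j)}_n}\}$ over $\Gamma_j$ and error $\varepsilon_j<1/2$ witness $L_j\in\oneqfastar/Qn$ for $j\in\{1,2\}$. First I would convert each $M_j$ into an equivalent ``effectively measure-once'' rewritable 1qfa $M_j'$ as follows: at every step $i$, apply a Brodsky-Pippenger-style sweeping unitary that moves any amplitude on $Q^{(j)}_{acc}\cup Q^{(j)}_{rej}$ into designated non-halting ghost states $h_j^{acc},h_j^{rej}$ while writing a step-$i$-specific tag into the advice cell just visited; because the tape head is one-way, different steps tag different cells, so the ghost amplitudes accumulated at different steps are mutually orthogonal, and the total squared norm of the accepted-ghost subspace at the end of the input equals the original $p^{M_j}_{acc}(x,\phi^{(j)}_n)$. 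A standard Chernoff-style amplification, implemented by encoding $k$ independent copies of the advice into the enlarged per-cell alphabet $\Gamma_j^k$, running $k$ parallel sweep-converted machines under a tensor-product unitary, and using a final ``majority-accept-ghost'' projection, then drives each $\varepsilon_j$ below $1-1/\sqrt{2}$. The combined rewritable 1qfa $M$ uses the product advice alphabet $\Gamma_1^k\times\Gamma_2^k$, the tensor-product quantum advice $\qubit{\phi^{(1)}_n}^{\otimes k}\otimes\qubit{\phi^{(2)}_n}^{\otimes k}$ suitably interleaved cell-by-cell, inner states $Q_1'\times Q_2'$, per-step unitary $U^{(1),i}_\sigma\otimes U^{(2),i}_\sigma$, trivial intermediate measurement (every intermediate state is declared non-halting), and a final projection onto the tensor product of the two accept-ghost subspaces; by tensor-product independence, $\prob_M[M\text{ accepts }x]=p^{M_1'}_{acc}(x,\phi^{(1)}_n)\cdot p^{M_2'}_{acc}(x,\phi^{(2)}_n)$, which exceeds $1/2$ when $x\in L_1\cap L_2$ and is bounded by $\max(\varepsilon_1,\varepsilon_2)<1/2$ otherwise. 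Union follows from $L_1\cup L_2=\Sigma^*\setminus((\Sigma^*\setminus L_1)\cap(\Sigma^*\setminus L_2))$.

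The main obstacle is the sweeping-and-tagging step: one must enlarge the per-cell alphabet enough that the sweep is a genuine unitary permutation on the resulting inner-state-and-cell Hilbert space at every step, and must verify that the step-indexing tags really keep ghost contributions from different steps orthogonal, which requires tracking, along each ghost branch, which cells contain original versus modified advice content (and making sure that the extended $U^{(j),i}_\sigma$ acts as identity on ghost inner states so that old ghost branches do not disturb the tags already written). A secondary subtlety is realising the majority-vote projection during amplification as a bona-fide orthogonal projection compatible with the measure-many framework; this is straightforward once the sweep construction is in place, since all nontrivial projections have been deferred to the final step.
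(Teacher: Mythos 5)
Your proposal is correct and follows essentially the same route as the paper: complementation by swapping $Q_{acc}$ and $Q_{rej}$, intersection by a tensor-product simulation after first deferring all measurements to the end via a sweeping-and-tagging construction on the rewritable advice track (the paper's Lemma \ref{measure-reduction}) and amplifying each error below $1-\sqrt{2}/2$ by parallel repetition with a majority vote (Lemma \ref{error-prob-reduction}), and union by De Morgan. The subtleties you flag --- orthogonality of ghost branches tagged at distinct cells, identity action on ghost states, and realising the majority vote as a single final projection --- are exactly the points the paper's two auxiliary lemmas address.
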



Three closure properties of $\oneqfastar/Qn$ given in Proposition \ref{closure-of-1QFA/Qn} are a direct consequence of the facts shown in Lemmas  \ref{measure-reduction} and \ref{error-prob-reduction} that, by an appropriate use of quantum advice, (i) a
rewritable advised 1qfa can reduce the number of applications of measurement operations down to one and (ii) the rewritable advised 1qfa can reduce its error probability as well.

As shown in the next lemma, the use of rewritable advice track helps postpone all projective measurement operations until the very end of their computation and, consequently, it significantly simplifies the behaviors of 1qfa's.

\begin{lemma}\label{measure-reduction}
For any rewritable advised 1qfa $M$ with quantum advice  $\Psi = \{\qubit{\phi_{n}}\}_{n\in\nat}$, there exist another quantum advice $\Psi' = \{\qubit{\phi'_n}\}_{n\in\nat}$ and another rewritable advised 1qfa $N$ whose input tape has no endmarker  such that, for all nonempty input strings, (i) $N$ conducts a projective measurement only once just after scanning an entire input and (ii) after the measurement, the acceptance probability of $N$ on each input with $\Psi'$ equals the acceptance probability of $M$ on the same input with $\Psi$. In order to process the empty input $\lambda$, by adding the right endmarker $\dollar$, we can modify $N$ to conduct a measurement only once just after scanning $\dollar$ and to accept (or reject)
$\lambda$ with certainty.
\end{lemma}

With the help of quantum advice, the error bound of each rewritable advised 1qfa can be significantly reduced.  This error-reduction property is quite useful in constructing advised  1qfa's that recognize given target languages.

\begin{lemma}\label{error-prob-reduction}
Let $L$ be any language  in $\oneqfastar/Qn$ over alphabet $\Sigma$.
For any constant $\varepsilon\in(0,1/2)$,
there exist a rewritable advised 1qfa $M$ and a series $\{\qubit{\phi_n}\}_{n\in\nat}$ of quantum advice states such that, for every length $n\in\nat$, (i) for any string $x\in L\cap\Sigma^n$, $M$ accepts $\qubit{\track{x}{\phi_n}}$ with probability at least $1-\varepsilon$, and (iii) for any string $x\in\Sigma^n-L$, $M$ rejects  $\qubit{\track{x}{\phi_n}}$ with probability at least $1-\varepsilon$.
\end{lemma}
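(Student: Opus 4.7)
The plan is to apply the standard parallel-repetition majority-vote technique, which becomes particularly clean for rewritable 1qfa's because Lemma~\ref{measure-reduction} lets us postpone measurement to the end, and because quantum advice can be supplied as a tensor power. Start with any rewritable 1qfa $M_0$ with quantum advice $\{\qubit{\phi^{(0)}_n}\}_{n\in\nat}$ over an advice alphabet $\Gamma_0$, with some error bound $\varepsilon_0\in[0,1/2)$, witnessing $L\in\oneqfastar/Qn$. The first step is to invoke Lemma~\ref{measure-reduction} to replace $M_0$ by an equivalent measure-once rewritable 1qfa $M_1$ with advice $\{\qubit{\phi^{(1)}_n}\}$, whose sole terminal measurement agrees with $L(x)$ with probability at least $1-\varepsilon_0$. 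Postponing measurements is essential because it ensures that distinct parallel copies do not interfere via mid-run projections.

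Next, fix an integer $k\in\nat^{+}$ (to be chosen as a function of $\varepsilon$ and $\varepsilon_0$ at the end) and build a rewritable 1qfa $M$ that runs $k$ independent copies of $M_1$ on the same input $x$. I would take the inner-state set to be $Q^{k}$ with initial state $\qubit{q_0}^{\otimes k}$, the advice alphabet to be $\Gamma = \Gamma_0^{k}$ so that the $i$-th cell of the lower track holds a $k$-tuple whose $j$-th slot belongs to copy $j$, and the time-evolution operator at step $i$ on input symbol $\sigma$ to be $(U^{(i)}_{\sigma})^{\otimes k}$, where the $j$-th tensor factor is a copy of $M_1$'s operator acting only on $M$'s $j$-th inner-state register and on the $j$-th slot of the $i$-th advice cell. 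The quantum advice supplied to $M$ is $\qubit{\phi_n} = \qubit{\phi^{(1)}_n}^{\otimes k}$, reorganized cell-by-cell into the natural superposition over $(\Gamma_0^{k})^{n}$. The accepting (resp., rejecting) states of $M$ are defined classically as those tuples in $Q^{k}$ a strict majority of whose coordinates lie in the accepting (resp., rejecting) set of $M_1$; tuples with no strict majority can be placed in $Q_{non}$ and handled by a unitary extension.

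Finally, the analysis is routine. Because the initial joint state is a product and every $(U^{(i)}_{\sigma})^{\otimes k}$ respects the tensor decomposition, the final state of $M$ remains a pure product $\qubit{\psi_n}^{\otimes k}$, where $\qubit{\psi_n}$ is the final state of $M_1$ on $\qubit{\track{x}{\phi^{(1)}_n}}$. Hence the unique terminal measurement of $M$ produces $k$ statistically independent accept/reject outcomes, each matching $L(x)$ with probability at least $1-\varepsilon_0>1/2$, and a standard Chernoff estimate of the form $\prob[\text{minority wins}]\leq 2\exp(-2k(1/2-\varepsilon_0)^2)$ lets us fix $k$ as a large enough absolute constant so that the majority vote errs with probability below the target $\varepsilon$. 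The principal obstacle I expect is bookkeeping rather than conceptual: one has to verify that each $(U^{(i)}_{\sigma})^{\otimes k}$ is genuinely local to the $i$-th advice cell under the alphabet expansion $\Gamma_0\mapsto\Gamma_0^{k}$ so that the product structure is preserved under the rewritable-track dynamics, and that the coherent rearrangement of $\qubit{\phi^{(1)}_n}^{\otimes k}$ into a valid cell-by-cell quantum advice state over $\Gamma_0^{k}$ introduces no hidden correlations between copies. Once those identifications are spelled out, the independence of the $k$ outcomes and the Chernoff bound close the argument.
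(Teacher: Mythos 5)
Your proposal is correct and follows essentially the same route as the paper's proof: invoke Lemma \ref{measure-reduction} to defer all measurements to the end, run $k$ independent copies in parallel on tupled state registers with tensor-product advice, and take a terminal majority vote, with $k$ a constant depending only on $\varepsilon$ and $\varepsilon_0$. The only cosmetic difference is that you bound the majority-vote error via a Chernoff estimate while the paper picks the minimal odd $k$ making the explicit binomial tail at most $\varepsilon$; these are interchangeable.
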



Before proving Lemmas \ref{measure-reduction} and \ref{error-prob-reduction}, we wish to finish the proof of Proposition \ref{closure-of-1QFA/Qn}.

\begin{proofof}{Proposition \ref{closure-of-1QFA/Qn}}
Our goal is to show three closure properties of $\oneqfastar/Qn$ listed in the proposition.
Let $L_1$ and $L_2$ be two arbitrary languages in $\oneqfastar/Qn$. For each index $i\in\{1,2\}$, by Lemmas \ref{measure-reduction}, there exists
a rewritable advised 1qfa $M_i$ with no left  endmarker $\cent$
such that $M_i$ recognizes $L_i$ with quantum advice $\Phi_i=\{\qubit{\phi_{i,n}}\}_{n\in\nat}$ over advice alphabet $\Gamma_i$ with error bound $\varepsilon_i\in [0,1/2)$  and that $M_i$ performs no measurement until scanning the right endmarker $\dollar$.
Let $M_i = (Q_i,\Sigma, \Gamma_i, \{U_{i,\sigma}\}_{\sigma\in\Sigma_{\dollar}}, q_{i,0},Q_{i,acc},Q_{i,rej})$, where $\Sigma_{\dollar} = \Sigma\cup\{\dollar\}$.
Without loss of generality, we further assume that $\Gamma_1=\Gamma_2$ and simply write $\Gamma$ for $\Gamma_1$.
For convenience, let  $\qubit{\phi_{i,n}}$ express a quantum state  $\sum_{y\in\Gamma}\alpha^{(i)}_{y}\qubit{y}$ for certain amplitudes $\{\alpha^{(i)}_{y}\}_{y\in\Gamma} \subseteq \complex$ with $\sum_{y\in\Gamma}|\alpha^{(i)}_{y}|^2=1$.

\s
\n{\bf [Complementation]}
Consider the complement $\overline{L_1}$ of $L_1$.  We modify $M_1$ by exchanging the roles of ``accepting states'' and ``rejecting states'' in $Q$. Since $M_1$ recognizes $L_1$ using $\Phi_1$ with bounded error probability, it is obvious that this new machine recognizes $\overline{L_1}$ using $\Phi_1$ with the same errors probability as $M_1$'s.

\s
\n{\bf [Intersection]}
By Lemma \ref{error-prob-reduction}, it is possible to reduce the error probability of $M_i$; thus, we can assume that $0\leq \varepsilon_i<1-\frac{\sqrt{2}}{2}$. For convenience, we set $\varepsilon = \varepsilon_1+\varepsilon_2-\varepsilon_1\varepsilon_2$. By the choice of
$\varepsilon_i$'s, it follows that $0\leq \varepsilon<1/2$.

For $L_1\cap L_2$, let us define a new rewritable advised 1qfa $M = (Q,\Sigma, \Gamma, \{V_{\sigma}\}_{\sigma\in \Sigma_{\dollar}}, q_0,Q_{acc},Q_{rej})$ having no left endmarker. Let  $Q=Q_1\times Q_2$, $Q_{acc} = Q_{1,acc}\times Q_{2,acc}$, $Q_{rej} = (Q_{1,rej}\times Q) \cup (Q\times Q_{2,rej})$, and $q_0 =(q_{1,0},q_{2,0})$.
Each unitary operator $V_{\sigma}$ is defined as  $V_{\sigma}\qubit{(q_1,q_2)}\qubit{\tau_1,\tau_2} = U_{1,\sigma}\qubit{q_1}\qubit{\tau_1}\otimes  U_{2,\sigma}\qubit{q_2}\qubit{\tau_2}$. Fix $n\in\nat$.  For each pair $(i,\sigma)\in [2]\times\Sigma_{\dollar}$ and any $j\in\nat^{+}$, we write $U^{(j)}_{i,\sigma,n}$ for an extended operator induced from $U_{i,\sigma}$.
Given any input $x\in\Sigma^n$, let $U_{i,x}$ denote $U^{(n+1)}_{i,\dollar,n} U^{(n)}_{i,x_n,n}U^{(n-1)}_{i,x_{n-1},n}\cdots U^{(1)}_{i,x_1,n}$. Likewise,
we take extended operators $V^{(j)}_{\sigma,n}$ and we set $V_{x} = V^{(n+1)}_{\dollar,n}V^{(n)}_{x_n,n}V^{(n-1)}_{x_{n-1},n}\cdots V^{(1)}_{x_1,n}$.
Now, our new quantum advice state $\qubit{\psi_n}$ has the form $\qubit{\phi_{1,n}}\otimes \qubit{\phi_{2,n}}$.
When $M$ reads the entire input string $x$, $M$ generates  a quantum state $V_{x}\qubit{q_{0}}\qubit{\psi_n} = U_{1,x}\qubit{q_{1,0}}\qubit{\phi_{1,n}}\otimes U_{2,x}\qubit{q_{2,0}}\qubit{\phi_{2,n}}$.
Because $M$'s computation is, in essence, decomposed into two independent computations of $M_1$ and $M_2$, it is easy to show that
\begin{equation*}\label{prob-multiplication}
\prob_{M}[M(\track{x}{\psi_n}) = 1] = \prob_{M_1}[M_1(\track{x}{\phi_{1,n}}) =1]\cdot \prob_{M_2}[M_2(\track{x}{\phi_{2,n}}) =1].
\end{equation*}
{}From this equality, we obtain the following.
\begin{itemize}\vs{-1}
  \setlength{\topsep}{0mm}%
  \setlength{\itemsep}{0mm}
  \setlength{\parskip}{0cm}%

\item If $x\in L$, then it holds that $\prob_{M}[M(\track{x}{\psi_n}) =1]\geq (1-\varepsilon_1)(1-\varepsilon_2) = 1- \varepsilon$.

\item If $x\not\in L$, then it holds that $\prob_{M}[M(\track{x}{\psi_n}) =0]\geq \max\{ 1-\varepsilon_1,1-\varepsilon_2\} \geq 1- \varepsilon$, because $\varepsilon$ is at least $\max\{\varepsilon_1,\varepsilon_2\}$.
\end{itemize}\vs{-1}
Therefore, $M$ recognizes $L_1\cap L_2$  with bounded-error probability using the quantum advice $\{\qubit{\psi_n}\}_{n\in\nat}$.

\s
\n{\bf [Union]}
Since $L_1\cup L_2 = \overline{\overline{L_1}\cap \overline{L_2}}$, this ``union'' case follows from the previous cases of ``complementation'' and ``intersection.''
\end{proofof}

Now, let us prove  Lemmas \ref{measure-reduction} and \ref{error-prob-reduction}.
Lemma \ref{measure-reduction} is shown intuitively as follows. Instead of measuring advised 1qfa's inner states at every step, we write them down on an advice track and enter new (but corresponding) non-halting states so that we can keep the 1qfa operating without performing any measurement until we make the last-minute measurement at the very end of a computation of the 1qfa.

\begin{proofof}{Lemma \ref{measure-reduction}}
Let $\Sigma$ and $\Gamma$ denote respectively an input alphabet and an advice alphabet. Let $M$ be any rewritable advised 1qfa and let  $\Phi =\{\qubit{\phi_n}\}_{n\in\nat}$ be a series of advice quantum states over  $\Gamma^*$. We remark that Lemma \ref{endmarker} holds also for $M$. Therefore,  we can assume that $M$'s input tape has no endmarker.
This assumption helps us set
$M$ to be $(Q,\Sigma,\Gamma,\{U_{\sigma}\}_{\sigma\in{\Sigma}}, q_0,Q_{acc},Q_{rej})$.
For the first part of the lemma, since we deal only with nonempty input strings, hereafter, we assume that $n\geq1$.
In addition, we use the notation $p_{acc}(x,\phi_n,i)$ to express
the acceptance probability of $M$ on input $x=x_1x_2\cdots x_n$  at time $i\in\nat^{+}$; namely, $p_{acc}(x,\phi_n,i) = \|   P_{acc}U^{(i)}_{x_i}T_{x_1x_2\cdots x_{i-1}}\qubit{q_0}\qubit{\phi_n}\| ^2$, provided that $T_{x_1x_2\cdots x_{i-1}}$ is ignored when $i=1$. The total acceptance probability $p_{acc}(x,\phi_n)$ of $M$ on $x$ then becomes $\sum_{i=1}^{n}p_{acc}(x,\phi_n,i)$.

By modifying $M$ properly, we wish to define a new rewritable advised 1qfa $N = (\tilde{Q},\Sigma, \tilde{\Gamma}, \{\hat{U}_{\sigma}\}_{\sigma\in{\Sigma}}, q_0,Q_{acc},Q_{rej})$ for which $N$ has no endmarker and conducts a projective measurement only once just after reading the entire nonempty input.
To each halting state $q\in Q_{halt}$, we assign a new {\em non-halting}  state $\hat{q}$, and we then define $\hat{Q}_{halt} = \{\hat{q}\mid q\in Q_{halt}\}$ and $\Gamma' = \{\track{\hat{q}}{\tau}\mid q\in Q_{halt},\tau\in\Gamma\}$.
Associated with each $q\in Q$, we also prepare a new inner state $\bar{q}$, and we set $\bar{Q} = \{\bar{q}\mid q\in Q\}$.
The desired set $\tilde{Q}$ of $N$'s inner states is defined as $Q\cup \hat{Q}_{halt} \cup\bar{Q}$. Moreover, we set $\bar{Q}_{acc} = \{\bar{q}\mid q\in Q_{acc}\}$  and $\bar{Q}_{rej} = \{\bar{q}\mid q\in Q_{rej}\}$.
Our new advice alphabet  $\tilde{\Gamma}$ is $\Gamma\cup \Gamma'  \cup\{ \track{\dollar}{\tau}\mid \tau\in\Gamma\}$ and our  new quantum advice $\Psi'$ consists of quantum states $\qubit{\phi^{\dollar}_n} = \sum_{y\in\Gamma^n}\gamma_{y}\qubit{y_1y_2\cdots y_{n-1}\track{\dollar}{y_n}}$ induced from  $\qubit{\phi_n} = \sum_{y\in\Gamma^n}\gamma_{y}\qubit{y}$,
provided that each $y$ has the form $y=y_1y_2\cdots y_n$. Each operator   $\hat{U}_{\sigma}$ of $N$ is defined as follows. Given any $q\in \tilde{Q}$ and any $\tau\in\Gamma$, let
(i) $\hat{U}_{\sigma}\ket{q}\ket{\tau} = U_{\sigma}\ket{q}\ket{\tau}$ if $q\in Q_{non}$,
(ii) $\hat{U}_{\sigma}\ket{\hat{q}}\ket{\tau} = \ket{\hat{q}}\ket{\track{\hat{q}}{\tau}}$ if $q\in Q_{halt}$,
(iii) $\hat{U}_{\sigma}\ket{q}\ket{\track{\dollar}{\tau}} = VU_{\sigma}\ket{q}\ket{\tau}$ if $q\in Q$, and
(iv)  $\hat{U}_{\sigma}\ket{\hat{q}}\ket{\track{\dollar}{\tau}} = \ket{\bar{q}}\ket{\track{\hat{q}}{\tau}}$ if $q\in Q_{halt}$, where $V$ behaves as $V\ket{p}\ket{\nu} = \ket{\bar{p}}\ket{\track{p}{\nu}}$. For the other cases, we can set the value of $\hat{U}_{\sigma}\ket{q}\ket{\tau}$ arbitrarily as long as $\hat{U}_{\sigma}$ maintains unitarity. We also expand three projections $P_{acc},P_{rej},P_{non}$  to $\tilde{P}_{acc},\tilde{P}_{rej},\tilde{P}_{non}$, respectively.

For brevity, let $H'_n$ denote the space $span\{ \ket{q}\ket{y} \mid q\in Q, y\in\tilde{\Gamma}^n \}$. Similarly to $\qubit{\phi^{\dollar}_{n}}$, if  $\qubit{\psi}$ has the form  $\sum_{q\in Q}\sum_{y\in\Gamma^n}\alpha_{q,y}\qubit{q}\qubit{y}$ with $y=y_1y_2\cdots y_{n-1}y_n$, then  we denote by $\qubit{\psi^{\dollar}}$ the quantum state $\sum_{q\in Q}\sum_{y\in\Gamma^n}\alpha_{q,y}\qubit{q}\qubit{y_1y_2\cdots y_{n-1}\track{\dollar}{y_n}}$. Obviously, $\qubit{\psi^{\dollar}}$
belongs to $H'_n$.
Let us look into a computation of $M$ and its associated computation of $N$.
Initially, the machine $M$ with the given quantum advice $\Phi$ starts with a quantum state $\qubit{\psi_0} = \qubit{q_0}\qubit{\phi_n}$, whereas $N$ with $\Phi'$ begins with  $\qubit{\psi'_0}=\qubit{q_0}\qubit{\phi^{\dollar}_n}$, which can be written as $\qubit{\psi^{\dollar}_0} + \qubit{\xi_0}$ with $\qubit{\xi_0}=0$.
Let $i$ be any index between $1$ and $n-1$.
Now, assume that, just after step $i-1$, $M$ generates a quantum state $\qubit{\psi_{i-1}}\in E_{non}$ and $N$ generates $\qubit{\psi'_{i-1}} = \qubit{\psi^{\dollar}_{i-1}} + \qubit{\xi_{i-1}}$, where $\qubit{\xi_{i-1}}$ belongs to the space $E'_n  = span\{\qubit{q}\qubit{y}\mid q\in \hat{Q}_{halt},y\in\tilde{\Gamma}^n\}$.

Now, let us consider the $i$th step.
Before performing a measurement, $M$ is assumed to have generated a quantum state  $U^{(i)}_{x_i}\qubit{\psi_{i-1}} =  \qubit{\psi_{i}} + \qubit{\psi_{i,acc}} + \qubit{\psi_{i,rej}}\in E_{non}\oplus E_{acc}\oplus E_{rej}$. After an application of the measurement, the acceptance (resp., rejection) probability $p_{acc}(x,\phi_n,i)$ (resp., $p_{rej}(x,\phi_n,i)$) becomes $\| \qubit{\psi_{i,acc}}\| ^2$
(resp., $\| \qubit{\psi_{i,rej}}\| ^2$).
Corresponding to  $\qubit{\psi_{i,acc}}$,
we set $\qubit{\psi'_{i,acc}}$ to be $\sum_{q\in Q_{acc}}\sum_{y\in\Gamma^n}\alpha_{q,y}\qubit{\hat{q}}\qubit{y_1\cdots y_{i-1}\track{q}{y_i}y_{i+1}\cdots y_{n-1}\track{\dollar}{y_n}}$ if $\qubit{\psi_{i,acc}}$ is expressed as $\sum_{q\in Q_{acc}}\sum_{y\in\Gamma^n}\alpha_{q,y}\qubit{q}\qubit{y}$. Likewise, $\qubit{\psi'_{i,rej}}$ is defined using $Q_{rej}$ in place of $Q_{acc}$.
Note that, for each $\hat{q}\in \hat{Q}_{halt}$,  $\hat{U}^{(i)}_{x_i}\qubit{\hat{q}}\qubit{y'_1\cdots y'_{i-1}y_iy_{i+1}\cdots  y_n}$ equals $\qubit{\hat{q}}\qubit{y'_1\cdots y'_{i-1}\track{\hat{q}}{y_i}y_{i+1}\cdots y_n}$, where $y'_1,\ldots,y'_{i-1}\in\Gamma'$ and $y_i,y_{i+1},\ldots,y_n\in\Gamma$.
It thus follows that $\hat{U}^{(i)}_{x_i}\qubit{\psi^{\dollar}_{i-1}} = \qubit{\psi^{\dollar}_i} + \qubit{\psi'_{i,acc}} + \qubit{\psi'_{i,rej}}$.
The machine $N$ therefore generates
\[
\qubit{\psi'_i} = \hat{U}^{(i)}_{x_i}\qubit{\psi'_{i-1}} = \hat{U}^{(i)}_{x_i}\qubit{\psi^{\dollar}_{i-1}} + \hat{U}^{(i)}_{x_i}\qubit{\xi_{i-1}} = \qubit{\psi^{\dollar}_i} + \qubit{\psi'_{i,acc}} + \qubit{\psi'_{i,rej}} + \hat{U}^{(i)}_{x_i}\qubit{\xi_{i-1}}.
\]
Finally, we set $\qubit{\xi_{i}}$ to be $\qubit{\psi'_{i,acc}} + \qubit{\psi'_{i,rej}} + \hat{U}^{(i)}_{x_i}\qubit{\xi_{i-1}}$,  which belongs to  $E'_n$. Hence, we obtain $\qubit{\psi'_{i}} = \qubit{\psi^{\dollar}_{i}} + \qubit{\xi_{i}}$.
Since $\| \qubit{\psi_{i,acc}}\|  = \| \qubit{\psi'_{i,acc}}\|$, we derive   $p_{acc}(x,\phi_n,i) = \| \qubit{\psi'_{i,acc}}\| ^2$. It is important to note that every vector $\qubit{\xi_i}$ is orthogonal to $\qubit{\xi_{i-1}}$ because $N$ generates different strings on its rewritable advice track
at time $i$.

At step $n$, by the help of the advice symbol $\track{\dollar}{y_n}$,
if $U^{(n)}_{x_n}\qubit{\psi_{n-1}}$ is expressed as  $\sum_{q\in Q}\sum_{y\in\Gamma^n} \alpha_{q,y}\qubit{q}\qubit{y}$, then
$\hat{U}^{(n)}_{x_n}\qubit{\psi^{\dollar}_{n-1}}$ must have the form
$\sum_{q\in Q}\sum_{y\in\Gamma^n} \alpha_{q,y} \qubit{\bar{q}}\qubit{y_1\cdots y_{n-1}\track{q}{y_n}}$, which is orthogonal to the space $H'_n$.
Concerning an inner state $\hat{q}\in\hat{Q}_{halt}$, it also holds that $\hat{U}^{(n)}_{x_n}\qubit{\hat{q}}\qubit{y'_1\cdots y'_{n-1}\track{\dollar}{y_n}} = \qubit{\bar{q}}\qubit{y'_1,\cdots y'_{n-1}\track{\hat{q}}{y_n}}$  for all  $y'_1,\ldots,y'_{n-1}\in\Gamma'$ and $y_n\in\Gamma$.
Thus, the acceptance probability $p_{acc}(x,\phi_n,n)$ of $M$ at step $n$, which equals $\| P_{acc}U^{(n)}_{x_n}\qubit{\psi_{n-1}} \| ^2$,
coincides with
$\| \tilde{P}_{acc}\hat{U}^{(n)}_{x_n}\qubit{\psi^{\dollar}_{n-1}} \| ^2$ by the definition of $\bar{Q}_{acc}$.
For the final quantum state $\ket{\psi'_n} = \hat{U}^{(n)}_{x_n}\qubit{\psi'_{n-1}}$ of $N$, we obtain
\[
\tilde{P}_{acc}\qubit{\psi'_{n}} =
\tilde{P}_{acc}\hat{U}^{(n)}_{x_n}\qubit{\psi'_{n-1}} =   \tilde{P}_{acc}\hat{U}^{(n)}_{x_n}\qubit{\psi^{\dollar}_{n-1}} + \sum_{i=1}^{n-1}\tilde{P}_{acc}\hat{U}^{(n)}_{x_n}\cdots \hat{U}^{(i+1)}_{x_{i+1}}\qubit{\psi'_{i,acc}}.
\]
After performing a measurement,   $N$ accepts $x$ with probability $p= \| \tilde{P}_{acc}\qubit{\psi'_{n}} \|^2$, which equals $\sum_{i=1}^{n-1}\| \qubit{\psi'_{i,acc}}\| ^2 + \|  \tilde{P}_{acc}\hat{U}^{(n)}_{x_n}\qubit{\psi^{\dollar}_{n-1}} \| ^2$ because $\| \tilde{P}_{acc}\hat{U}^{(n)}_{x_n}\cdots \hat{U}^{(i+1)}_{x_{i+1}}\qubit{\psi'_{i,acc}} \| = \| \qubit{\psi'_{i,acc}} \|$ holds. Since
$\|\tilde{P}_{acc}\hat{U}^{(n)}_{x_n}\qubit{\psi^{\dollar}_{n-1}} \|^2 = \|\tilde{P}_{acc}U^{(n)}_{x_n}\qubit{\psi_{n-1}} \|^2 = p_{acc}(x,\phi_n,n)$ and $\|\qubit{\psi'_{i,acc}}\|^2 = p_{acc}(x,\phi_n,i)$ for all $i\in[n-1]$, we conclude that $p$ equals $p_{acc}(x,\phi_n)$ of $M$.

We want to show the second part of the lemma. To cope with the empty input, we further modify $N$ by equipping $N$ with the right endmarker $\dollar$ and by forcing $N$ to perform a measurement once after applying an extended operator $\hat{U}^{(n+1)}_{\dollar}$ associated with $\dollar$. We write $N'$ to indicate this modified machine.
Let us consider only the case where we wish to ``accept'' $\lambda$ with certainty, since the other case can be similarly dealt with.
As in the proof of Lemma \ref{endmarker}, we prepare a fresh inner state $q_f$ and set $\tilde{Q}'=\tilde{Q}\cup\{q_f\}$. We also define $\bar{Q}'_{acc} = \bar{Q}_{acc}\cup\{q_f\}$ and $\bar{Q}'_{rej} = \bar{Q}_{rej}$ and prepare new projections $\tilde{P}'_{acc}$ and $\tilde{P}'_{rej}$ associated with  $\bar{Q}'_{acc}$ and $\bar{Q}'_{rej}$, respectively.
Additionally, we define $\hat{U}_{\dollar}$, which acts on $E_{\tilde{Q}'}$, as follows: let  $\hat{U}_{\dollar}\ket{q_0} = \ket{q_f}$,  $\hat{U}_{\dollar}\ket{q_f} = \ket{q_0}$ and $\hat{U}_{\dollar}\ket{q} = \ket{q}$ for all inner states $q\in \tilde{Q}'-\{q_0,q_f\}$.

First, assume that $n\geq1$. Note that $N'$ accepts input $x$ with probability $p'=\| \tilde{P}'_{acc}\hat{U}^{(n+1)}_{\dollar}\qubit{\psi'_n} \|^2$.
Notice that $\ket{\psi'_n}$ is orthogonal to the space $span\{\ket{q}\ket{y}\mid q\in Q\cup\{q_f\}, y\in\tilde{\Gamma}^n\}$.
Thus, an application of $\hat{U}^{(n+1)}_{\dollar}$ does not change  the vector $\ket{\psi'_n}$, namely, $\hat{U}^{(n+1)}_{\dollar}\qubit{\psi'_n} = \qubit{\psi'_n}$. These facts imply that $p' = \| \tilde{P}'_{acc} \ket{\psi'_{n}} \|^2 = \| \tilde{P}_{acc}\ket{\psi'_n} \|^2$; as a result, we obtain $p'=p$.
In contrast, when $n=0$ (\ie the input $x$ is $\lambda$), since $p' = \| \tilde{P}'_{acc}\hat{U}^{(n+1)}_{\dollar}\ket{q_0}\ket{\lambda} \|^2 = \|\tilde{P}'_{acc}\ket{q_f}\ket{\lambda} \|^2$,
$N'$ accepts $\lambda$ with certainty, as requested by the lemma.
\end{proofof}


Hereafter, we will prove Lemma \ref{error-prob-reduction}. The proof of the lemma is based on a technique of {\em parallel repetition} of the same quantum computation, and Lemma \ref{measure-reduction} essentially helps make this  technique applicable.

\begin{proofof}{Lemma \ref{error-prob-reduction}}
Since $L\in\oneqfastar/Qn$, by Lemma \ref{measure-reduction}, we take a rewritable advised 1qfa $M$ with no left endmarker $\cent$, an error bound $\varepsilon_0\in[0,1/2)$,   and a series $\Psi =\{\qubit{\phi_n}\}_{n\in\nat}$ of quantum advice states satisfying  $\prob_{M}[M(\track{x}{\phi_n})=L(x)] \geq 1-\varepsilon_0$ for every length $n\in\nat$ and any string $x\in\Sigma^n$. For a later reference, the notation  $\varepsilon_{x}$ is reserved for the value $1-  \prob_{M}[M(\track{x}{\phi_n})=L(x)]$.
Choose an arbitrary error bound $\varepsilon\in(0,1/2)$.
For simplicity, we ignore the case of $x=\lambda$ and assume that $|x|\geq1$. Lemma \ref{measure-reduction} helps $M$ conduct a projective measurement only once after scanning $\dollar$.

If $\varepsilon_0\leq \varepsilon$, then $M$ outputs $L(x)$ with probability at least $1-\varepsilon_0\geq 1-\varepsilon$, and thus
the lemma is obviously  true. Therefore, in what follows,
let us concentrate on the case where  $0<\varepsilon<\varepsilon_0$.
Depending on the value $\varepsilon$, we select a positive integer $k$, which indicates the number of times we repeat in parallel the execution of $M$ on each input $x$,  as  the minimal odd number satisfying that $1-\sum_{i=0}^{\floors{k/2}} \smallcomb{k}{\ceilings{k/2}+i} \varepsilon_0^{\floors{k/2}-i}(1-\varepsilon_0)^{\ceilings{k/2}+i} \leq\varepsilon$.

We then prepare the collection of all
$k$-tuples $(q_{i_1},q_{i_2},\ldots,q_{i_k})\in Q^k$ as a new set $Q'$ of inner states. We express those  $k$-tuples as  $\qubit{q_{i_1}}\qubit{q_{i_2}}\cdots\qubit{q_{i_k}}$ using $k$ different registers.
Next, we simulate $M$ on a new rewritable advised 1qfa $M'$ in the following manner.
On input $x$, $M'$ runs $M$ on each of the $k$ registers  simultaneously in parallel. In the end of $M$'s computation, if the $k$ registers altogether hold  a basis vector $\qubit{q_{i_1}}\qubit{q_{i_2}}\cdots\qubit{q_{i_k}}$ for certain indices $i_1,i_2,\ldots,i_k$, then $M'$ enters a new inner state $q^{(i_1,i_2,\ldots,i_k)}$. Let $Q'_{fin}$ denote the set of all such new inner states. Next, we will partition $Q'_{fin}$ into three subsets, $Q'_{acc}$, $Q'_{rej}$, and $Q'_{other}$.
The set $Q'_{acc}$ (resp., $Q'_{rej}$) is composed of all inner states $q^{(i_1,i_2,\ldots,i_k)}$ for which  $|\{i\in[k]\mid q_i\in Q_{acc}\}|\geq \ceilings{k/2}$   (resp., $|\{i\in[k]\mid q_i\in Q_{rej}\}|\geq \ceilings{k/2}$). Let $Q'_{other} = Q'_{fin} - Q'_{acc}\cup Q'_{rej}$.
For each string $x$,  the probability that $M'$ successfully produces $L(x)$ equals $\sum_{i=0}^{\floors{k/2}} \smallcomb{k}{\ceilings{k/2}+i} \varepsilon_x^{\floors{k/2}-i}(1-\varepsilon_x)^{\ceilings{k/2}+i}$, which exceeds $\sum_{i=0}^{\floors{k/2}} \smallcomb{k}{\ceilings{k/2}+i} \varepsilon_0^{\floors{k/2}-i}(1-\varepsilon_0)^{\ceilings{k/2}+i}$ since $\varepsilon_x\leq \varepsilon_0$. By the choice of $k$,  $M'$ recognizes   $L$ with success probability at least $1-\varepsilon$.
\end{proofof}

\section*{Appendix: Proof of Lemmas \ref{metric-space-prop} and \ref{norm-property}}

This appendix presents the proofs of Lemmas \ref{metric-space-prop} and \ref{norm-property} that have been omitted from Section \ref{sec:QFA/n} for the sake of readability.


We begin with the proof of Lemma \ref{metric-space-prop}. In the following two separate proofs of the lemma, we assume that $\HH$ is any Hilbert space and that  $\psi=(\ket{\phi},\gamma_1,\gamma_2)$, $\psi' = (\ket{\phi'},\gamma'_1,\gamma'_2)$, and $\psi'' = (\ket{\phi''},\gamma''_1,\gamma''_2)$ are three arbitrary vectors in the metric vector space  $\YY_{\HH}$.
Recall that $\|\psi\|^2 = \| (\ket{\phi},\gamma_1,\gamma_2) \|^2 = \|\ket{\phi}\|^2 +|\gamma_1|^2+|\gamma_2|^2$.

\ms
\n{\bf Proof of Lemma \ref{metric-space-prop}(\ref{triangle-inequality}):\hs{3}}
Firstly, we note that $\|\psi\|^2\|\psi'\|^2 = (\|\ket{\phi}\|^2+\gamma_1^2+\gamma_2^2) (\|\ket{\phi'}\|^2+(\gamma'_1)^2+(\gamma'_2)^2) \geq ( \|\ket{\phi}\|\|\ket{\phi'}\| + |\gamma_1\gamma'_1| + |\gamma_2\gamma'_2|)^2$. From this inequality, it is easy to deduce
\begin{eqnarray*}
(\| \psi\|  + \| \psi'\| )^2
&=&  \|\psi\|^2 + 2\|\psi\| \|\psi'\| + \|\psi'\|^2 \\
&\geq& \| \qubit{\phi}\| ^2 +\| \qubit{\phi'}\| ^2
 + \gamma_1^2+(\gamma'_1)^2 +\gamma_2^2+ (\gamma'_2)^2 + 2 \left( \|\ket{\phi}\|\|\ket{\phi'}\| + |\gamma_1\gamma'_1| + |\gamma_2\gamma'_2| \right).
\end{eqnarray*}
Since $\psi+\psi' = (\ket{\phi}+\ket{\phi'},\gamma_1 + \gamma'_1,\gamma_2 + \gamma'_2)$, the norm $\| \psi + \psi'\| ^2$ is estimated as
\begin{eqnarray*}
\| \psi + \psi'\| ^2
&=&
\| \qubit{\phi}+\qubit{\phi'}\| ^2 +(\gamma_1 + \gamma'_1)^2+ (\gamma_2 + \gamma'_2)^2 \\
&\leq& \| \qubit{\phi}\| ^2 +\| \qubit{\phi'}\| ^2 + |\braket{\phi}{\phi'}|+ |\braket{\phi'}{\phi}| + \gamma_1^2
 + (\gamma'_1)^2 +\gamma_2^2 + (\gamma'_2)^2 + 2\left( |\gamma_1\gamma'_1| + |\gamma_2\gamma'_2| \right) \\
&\leq& \| \qubit{\phi}\| ^2 +\| \qubit{\phi'}\| ^2 + \gamma_1^2
 + (\gamma'_1)^2 +\gamma_2^2 + (\gamma'_2)^2  + 2\left( \| \qubit{\phi}\| \| \qubit{\phi'}\| + |\gamma_1\gamma'_1| + |\gamma_2\gamma'_2| \right) \\
&\leq& (\| \psi\|  + \| \psi'\| )^2,
\end{eqnarray*}
where the second inequality follows from the fact that $|\braket{\phi}{\phi'}|\leq \| \qubit{\phi}\| \| \qubit{\phi'}\|$. Therefore, we obtain the inequality $(\| \psi\|  + \| \psi'\| )^2 \geq \| \psi + \psi'\| ^2$, which immediately yields the desired claim.
\qed

\ms
\n{\bf Proof of Lemma \ref{metric-space-prop}(\ref{norm-triangle-prop}):\hs{3}}
We start with the following inequality:
\begin{eqnarray*}
\lefteqn{(\| \psi -\psi' \| + \|\psi'-\psi''\| )^2} \hs{5} \\
&\geq& (\|\ket{\phi}-\ket{\phi'}\| + \|\ket{\phi'}-\ket{\phi''}\|)^2 + (|\gamma_1-\gamma'_1|+|\gamma'_1-\gamma''_1|)^2 + (|\gamma_2-\gamma'_2|+|\gamma'_2-\gamma''_2|)^2.
\end{eqnarray*}
The well-known triangular inequalities for vectors and real numbers further yield
\begin{equation*}
(\| \psi -\psi' \| + \|\psi'-\psi''\| )^2
\geq \|\ket{\phi}-\ket{\phi''}\|^2 + |\gamma_1-\gamma''_1|^2 + |\gamma_2-\gamma''_2|^2.
\end{equation*}
Thus, we obtain $(\| \psi -\psi' \| + \|\psi'-\psi''\| )^2
\geq \|\psi-\psi''\|^2$, which is logically equivalent to the desired claim.
\qed

\ms

Next, we will prove Lemma \ref{norm-property}.
Recall that $\Sigma$ is a basis alphabet and $M = (Q,\Sigma,\{U_{\sigma}\}_{\sigma\in\check{\Sigma}},q_0,Q_{acc},Q_{rej})$ is a 1qfa.
In what follows, we fix a string $x=x_1x_2\cdots x_n$ of length $n$ with  $x_i\in (\check{\Sigma})^*$ for every $i\in[n]$, where $\check{\Sigma} = \Sigma\cup\{\cent,\dollar\}$.
Let $\qubit{\phi}$ and $\qubit{\phi'}$ be any
two  quantum states in $E_{Q}$ and let $\psi = (\qubit{\phi},\gamma_1,\gamma_2)$ and $\psi' = (\qubit{\phi'},\gamma'_1,\gamma'_2)$ be any two vectors in $\YY_{E_Q}$.
Recall also that, for each symbol $\sigma\in\check{\Sigma}$, $T_{\sigma} = P_{non}U_{\sigma}$ and $T_x = T_{x_{n}}T_{x_{n-1}}\cdots T_{x_2}T_{x_1}$.

To show the target lemma,  we additionally define $\qubit{\phi_1} = \qubit{\phi}$ and $\qubit{\phi'_1} = \qubit{\phi'}$ and, for each index
$i\in[2,n+1]_{\integer}$, $\qubit{\phi_i} = T_{x_1x_2\cdots x_{i-1}}\qubit{\phi_1}$ and $\qubit{\phi'_i} = T_{x_1x_2\cdots x_{i-1}}\qubit{\phi'_1}$.
Moreover,  we set
$\alpha_i = \|  P_{acc}U_{x_i}\qubit{\phi_i}\| ^2$ and $\beta_i = \|  P_{rej}U_{x_i}\qubit{\phi_i}\| ^2$; similarly, we define $\alpha'_i$ and $\beta'_i$ using $\qubit{\phi'_i}$ in place of $\qubit{\phi_i}$.

Before providing the desired proof of the lemma, we will list seven  useful properties. In the next claim, let $\gamma_1,\gamma_2,\gamma'_1,\gamma'_2$ be any real numbers and let $\ket{\phi},\ket{\phi'}$ be any quantum states in $E_{Q}$.

\begin{claim}\label{useful-formulas}
\begin{enumerate}
  \setlength{\topsep}{0mm}%
  \setlength{\itemsep}{0mm}
  \setlength{\parskip}{0cm}%

\item\label{phi} $\| \qubit{\phi} \| ^2
= \|  T_x\qubit{\phi} \| ^2 +
\sum_{i=1}^{n} (\alpha_i + \beta_i)$.

\item\label{phi-phi-prime} $\| \qubit{\phi}-\qubit{\phi'}\|^2 = \|T_x(\qubit{\phi}-\qubit{\phi'})\|^2 + \sum_{i=1}^{n}\|P_{acc}U_{x_i}(\qubit{\phi_i}-\qubit{\phi'_i})\|^2 +
\sum_{i=1}^{n}\| P_{rej}U_{x_i}(\qubit{\phi_i}-\qubit{\phi'_i})\|^2$.

\item\label{P-acc-upper} $(\sqrt{\gamma_1^2+\sum_{i=1}^{n}\alpha_i} - \sqrt{(\gamma'_1)^2+\sum_{i=1}^{n}\alpha'_i})^2 \leq (|\gamma_1| - |\gamma'_1| )^2 + \sum_{i=1}^{n}\|P_{acc}U_{x_i}(\ket{\phi_i}-\ket{\phi'_i})\|^2$. A similar  inequality holds for $(\gamma'_1,\gamma'_2,\beta_i,\beta'_i,P_{rej})$.

\item\label{alpha-beta} $\|\qubit{\phi}-\qubit{\phi'}\|^2 - \|T_x(\qubit{\phi}-\qubit{\phi'})\|^2  \geq   (\sqrt{\sum_{i=1}^{n}\alpha_i} - \sqrt{\sum_{i=1}^{n}\alpha'_i} )^2 + (\sqrt{\sum_{i=1}^{n}\beta_i} - \sqrt{\sum_{i=1}^{n}\beta'_i})^2$.

\item\label{P-acc-alpha-lower} $\sum_{i=1}^{n}\|P_{acc}U_{x_i}(\ket{\phi}-\ket{\phi'})\|^2 \leq 2\sum_{i=1}^{n}(\alpha_i+\alpha'_i)$. A similar inequality holds for $(\beta_i,\beta'_i,P_{rej})$.

\item\label{P-acc-lower}  If $\gamma_1^2+(\gamma'_1)^2\leq 1$ holds, then $(\sqrt{\gamma_1^2+\sum_{i=1}^{n}\alpha_i} - \sqrt{(\gamma'_1)^2+\sum_{i=1}^{n}\alpha'_i} )^2 +  \sum_{i=1}^{n}(\alpha_i + \alpha'_i) + (2\sqrt{2}) \sqrt{ \sum_{i=1}^{n}(\alpha_i + \alpha'_i)} \geq (|\gamma_1|-|\gamma'_1|)^2 + \sum_{i=1}^{n}\|P_{acc}U_{x_i}(\ket{\phi}-\ket{\phi'})\|^2$. A similar inequality holds for $(\gamma_2,\gamma'_2,\beta_i,\beta'_i,P_{rej})$.

\item\label{diff-phi} $2|\braket{\phi}{\phi'} - \bra{\phi}T^{\dagger}_{x}T_{x}\ket{\phi'}| \leq
 (\| \qubit{\phi}\| ^2 - \| T_x\qubit{\phi}\| ^2)
 + (\| \qubit{\phi'}\| ^2 - \| T_x\qubit{\phi'}\| ^2)$.
\end{enumerate}
\end{claim}

\begin{proof}
(\ref{phi}) It holds that  $U_{x_i} = T_{x_i}+P_{acc}U_{x_i}+P_{rej}U_{x_i}$  for each index $i\in[n]$.
Since $U^{\dagger}_{x_i}U_{x_i} = I$, it obviously follows that
\begin{equation}\label{eqn:Pacc-Prej-T}
\| \qubit{\phi_i}\|^2 = \|U_{x_i}\qubit{\phi_i}\|^2 =
\| T_{x_i}\qubit{\phi_i}\|^2 + \|P_{acc}U_{x_i}\qubit{\phi_i}\|^2 + \| P_{rej}U_{x_i}\qubit{\phi_i}\|^2.
\end{equation}
Applying this equality repeatedly with $\ket{\phi_{i+1}} = T_{x_i}\ket{\phi_i}$, we then obtain
\begin{eqnarray*}
\| \qubit{\phi_1} \| ^2
&=& \|  \qubit{\phi_2} \| ^2 +
\|  P_{acc}U_{x_1} \qubit{\phi_1} \| ^2
 + \|  P_{rej}U_{x_1} \qubit{\phi_1} \| ^2 \\
&=& \|  \qubit{\phi_3} \| ^2 +
 \sum_{i=1}^{2}\|  P_{acc}U_{x_i} \qubit{\phi_i} \| ^2
 + \sum_{i=1}^{2}\|  P_{rej}U_{x_i} \qubit{\phi_i} \| ^2 \\
&=& \cdots\cdots\cdots \\
&=& \|  \qubit{\phi_{n+1}} \| ^2 +
\sum_{i=1}^{n}\|  P_{acc}U_{x_i} \qubit{\phi_i} \| ^2
 + \sum_{i=1}^{n}\|  P_{rej}U_{x_i} \qubit{\phi_i} \| ^2.
\end{eqnarray*}
The desired formula in the claim immediately follows since $\ket{\phi} = \ket{\phi_1}$, $T_x\qubit{\phi} = \qubit{\phi_{n+1}}$,   $\alpha_i  = \|  P_{acc}U_{x_i} \qubit{\phi_i} \| ^2$, and  $\beta_i  = \|  P_{rej}U_{x_i} \qubit{\phi_i} \| ^2$.

(\ref{phi-phi-prime}) This target equality can be obtained by an argument similar to the proof of Claim \ref{useful-formulas}(\ref{phi}) using, instead of Eq.(\ref{eqn:Pacc-Prej-T}), the equality
\begin{equation}\label{eqn:diff-Pacc-Prej-T}
\| \qubit{\phi_i}-\qubit{\phi'_i}\|^2 = \|T_{x_i}(\qubit{\phi_i}-\qubit{\phi'_i})\|^2 + \|P_{acc}U_{x_i}(\qubit{\phi_i}-\qubit{\phi'_i})\|^2 + \| P_{rej}U_{x_i}(\qubit{\phi_i}-\qubit{\phi'_i})\|^2.
\end{equation}

(\ref{P-acc-upper}) Note that $(\sqrt{a+b}-\sqrt{c+d})^2\leq (\sqrt{a}-\sqrt{c})^2+(\sqrt{b}-\sqrt{d})^2$ holds for any real numbers $a,b,c,d\geq0$,  because this formula is logically equivalent to $\sqrt{(a+b)(c+d)} \geq \sqrt{ac}+\sqrt{bd}$. It thus follows that
\begin{equation}\label{alpha-gamma-ballance}
\left( \sqrt{\gamma_1^2+\sum_{i}\alpha_i} - \sqrt{(\gamma'_1)^2+\sum_{i}\alpha'_i} \right)^2 \leq \left( |\gamma_1| - |\gamma'_1| \right)^2 + \left( \sqrt{\sum_{i}\alpha_i}-\sqrt{\sum_{i}\alpha'_i} \right)^2.
\end{equation}

\sloppy
For the time being, let $\{\ket{\xi_i},\ket{\xi'_i}\}_{i\in[n]}$ represent
any collection of $n$ pairs of vectors.
Notice that
$\sqrt{(\sum_{i}\|\ket{\xi_i}\|^2)( \sum_{i}\|\ket{\xi'_i}\|^2)}\geq 2\sum_{i}\|\ket{\xi_i}\| \|\ket{\xi'_i}\|$. Since $2\|\ket{\xi_i}\| \|\ket{\xi'_i}\| \geq \braket{\xi_i}{\xi'_i} + \braket{\xi'_i}{\xi_i}$, we then obtain $\sqrt{(\sum_{i}\|\ket{\xi_i}\|^2) (\sum_{i}\|\ket{\xi'_i}\|^2)} \geq \sum_{i}(\braket{\xi_i}{\xi'_i} + \braket{\xi'_i}{\xi_i})$. This inequality is used to prove that $\sum_{i}\|\ket{\xi_i}-\ket{\xi'_i}\|^2 \geq (\sqrt{\sum_{i}\|\ket{\xi_i}\|^2} - \sqrt{\sum_{i}\|\ket{\xi'_i}\|^2})^2$.
From this inequality, we immediately obtain
\begin{eqnarray*}
\lefteqn{\sum_{i=1}^{n}\|  P_{acc}U_{x_i}(\qubit{\phi_i}-\qubit{\phi'_i})\| ^2
\;\;=\;\; \sum_{i=1}^{n} \|P_{acc}U_{x_i}\ket{\phi_i}  - P_{acc}U_{x_i}\ket{\phi'_i}\|^2} \hs{10} \\
&\geq&
 \left( \sqrt{ \sum_{i} \|  P_{acc}U_{x_i}\qubit{\phi_i}\|^2 }   -
 \sqrt{ \sum_{i} \|  P_{acc}U_{x_i}\qubit{\phi'_i}\|^2 } \right)^2
\;\;=\;\;  \left( \sqrt{ \sum_{i} \alpha_i }   - \sqrt{ \sum_{i} \alpha'_i  } \right)^2.
\end{eqnarray*}
Combining this inequality with Eq.(\ref{alpha-gamma-ballance}) (by ``temporarily'' setting $\gamma_1=\gamma'_1=0$), we derive the desired claim.
In a similar manner, it follows that $(\sqrt{\gamma_2^2+\sum_{i}\beta_i} - \sqrt{(\gamma'_2)^2+\sum_{i}\beta'_i})^2 \leq (|\gamma_2|-|\gamma'_2|)^2 +  \sum_{i}\|P_{rej}U_{x_i}(\ket{\phi_i}-\ket{\phi'_i})\|^2$.

(\ref{alpha-beta})
From Claim \ref{useful-formulas}(\ref{phi-phi-prime}) combined with Claim \ref{useful-formulas}(\ref{P-acc-upper}), it instantly follows that
\begin{eqnarray*}
\|\qubit{\phi} - \qubit{\phi'} \|^2 &=& \|T_x(\qubit{\phi}-\qubit{\phi'})\|^2 + \sum_{i=1}^{n} \|P_{acc}U_{x_i}(\qubit{\phi}-\qubit{\phi'})\|^2 + \sum_{i=1}^{n} \|P_{rej}U_{x_i}(\qubit{\phi}-\qubit{\phi'})\|^2  \label{phi-eqn} \\
&\geq&  \|T_x(\qubit{\phi}-\qubit{\phi'})\|^2 + \left(\sqrt{\sum_{i}\alpha_i} - \sqrt{\sum_{i}\alpha'_i}\right)^2  + \left(\sqrt{\sum_{i}\beta_i} - \sqrt{\sum_{i}\beta'_i}\right)^2.
\end{eqnarray*}

(\ref{P-acc-alpha-lower})
Since $\|\ket{\xi}-\ket{\xi'}\|^2 \leq 2 (\|\ket{\xi}\|^2+\|\ket{\xi'}\|^2)$ holds for any vectors $\ket{\xi},\ket{\xi'}$, we obtain
\begin{equation*}
\sum_{i=1}^{n} \|P_{acc}U_{x_i}(\ket{\phi_i}-\ket{\phi'_i})\|^2 \leq
2 \sum_{i=1}^{n} \left( \|P_{acc}U_{x_i}\ket{\phi_i}\|^2 + \|P_{acc}U_{x_i}\ket{\phi'_i}\|^2  \right)
= 2 \sum_{i=1}^{n} (\alpha_i+\alpha'_i).
\end{equation*}
A similar argument leads to the inequality $\sum_{i}\|P_{rej}U_{x_i}(\ket{\phi_i}-\ket{\phi'_i})\|^2 \leq 2\sum_{i}(\beta_i+\beta'_i)$.

(\ref{P-acc-lower})
Let $a,b,c,d\in[0,1]$. Note that $(\sqrt{a^2+b}-\sqrt{c^2+d})^2 = (a-c)^2 + (b+d) +2(ac-\sqrt{(a^2+b)(c^2+d)})$. For the last term  $\sqrt{(a^2+b)(c^2+d)}$, it holds that $\sqrt{(a^2+b)(c^2+d)} \leq \sqrt{a^2c^2+(a^2+c^2)(b+d)+bd} \leq \sqrt{a^2c^2} + \sqrt{(a^2+c^2)(b+d) + bd} $. Since $b,d\leq 1$, $bd\leq b+d$ follows.
If $a^2+c^2\leq 1$ further holds, then
we obtain $\sqrt{(a^2+c^2)(b+d) + bd} \leq \sqrt{(a^2+c^2+1)(b+d)} \leq \sqrt{2(b+d)}$, which finally yields $\sqrt{(a^2+b)(c^2+d)} \leq ac + \sqrt{2(b+d)}$. Overall, we conclude that $(\sqrt{a^2+b}-\sqrt{c^2+d})^2 \geq (a-c)^2 +(b+d) - (2\sqrt{2})\sqrt{b+d}$.
Therefore, in our case, it follows that
\begin{eqnarray*}
\lefteqn{\left( \sqrt{\gamma_1^2+\sum_{i}\alpha_i} -  \sqrt{ (\gamma'_1)^2+\sum_{i}\alpha'_i} \right)^2} \hs{10} \\
&\geq& (|\gamma_1| - |\gamma'_1|)^2 + \left( \sum_{i=1}^{n}\alpha_i + \sum_{i=1}^{n}\alpha'_i\right) - 2\sqrt{2} \sqrt{ \sum_{i}\alpha_i + \sum_{i}\alpha'_i } \\
&\geq& (|\gamma_1| - |\gamma'_1|)^2 + \sum_{i=1}^{n}\|P_{acc}U_{x_i}(\ket{\phi_i}-\ket{\phi'_i})\|^2 - \sum_{i=1}^{n}(\alpha_i + \alpha'_i) - 2\sqrt{2} \sqrt{ \sum_{i}(\alpha_i + \alpha'_i) },
\end{eqnarray*}
where the last inequality  comes from Claim \ref{useful-formulas}(\ref{P-acc-alpha-lower}).

(\ref{diff-phi})
Since $U^{\dagger}_{x_i}U_{x_i} = I$ and  $U_{x_i} = T_{x_i}+P_{acc}U_{x_i}+P_{rej}U_{x_i}$, for each index $i\in[n]$, we conclude
\begin{equation*}\label{eqn:U-T-Pacc-Prej}
\braket{\phi_i}{\phi'_i} = \bra{\phi_i}U^{\dagger}_{x_i}U_{x_i}\ket{\phi'_i}
= \bra{\phi_i}T^{\dagger}_{x_i}T_{x_i}\ket{\phi'_i}
+ \bra{\phi_i}U^{\dagger}_{x_i}P_{acc}U_{x_i}\ket{\phi'_i}
+ \bra{\phi_i}U^{\dagger}_{x_i}P_{rej}U_{x_i}\ket{\phi'_i}.
\end{equation*}
Using this equality together with  Eq.(\ref{eqn:Pacc-Prej-T}) as well as the inequality $|\braket{\xi}{\xi'}|\leq \| \qubit{\xi} \|\| \qubit{\xi'} \|$, we obtain
\begin{eqnarray}
\nonumber \lefteqn{|\braket{\phi_i}{\phi'_i} - \bra{\phi_i}T^{\dagger}_{x_i}T_{x_i}\ket{\phi'_i}|
\;\;\leq\;\; |\bra{\phi_i}U^{\dagger}_{x_i}P_{acc}U_{x_i}\ket{\phi'_i} | + |\bra{\phi_i}U^{\dagger}_{x_i}P_{rej}U_{x_i}\ket{\phi'_i} |}\hs{15} \\
&\leq& \nonumber \|  P_{acc}U_{x_i}\qubit{\phi_i}\|   \|  P_{acc}U_{x_i}\qubit{\phi'_i}\|
 + \|  P_{rej}U_{x_i}\qubit{\phi_i}\|    \|  P_{rej}U_{x_i}\qubit{\phi'_i}\|  \\
&\leq& \nonumber \frac{1}{2}\left[ \left( \|  P_{acc}U_{x_i}\qubit{\phi_i}\| ^2 + \|  P_{rej}U_{x_i}\qubit{\phi_i}\| ^2 \right) + \left( \|  P_{acc}U_{x_i}\qubit{\phi'_i}\| ^2 + \|  P_{rej}U_{x_i}\qubit{\phi'_i}\| ^2 \right)  \right] \\
&=& \nonumber \frac{1}{2} \left[ (\| \qubit{\phi_i}\| ^2 - \|  T_{x_i}\qubit{\phi_i}\| ^2)  + (\| \qubit{\phi'_i}\| ^2 - \|  T_{x_i}\qubit{\phi'_i}\| ^2) \right] \\
&=& \label{phi-reduction} \frac{1}{2} \left[ (\| \qubit{\phi_i}\| ^2 - \| \qubit{\phi_{i+1}}\| ^2)  + (\| \qubit{\phi'_i}\| ^2 - \| \qubit{\phi'_{i+1}}\| ^2) \right].
\end{eqnarray}
Recalling $x=x_1x_2\cdots x_n$, we note that $\ket{\phi} = \ket{\phi_1}$, $\ket{\phi'} = \ket{\phi'_1}$, $T_x\ket{\phi} = T_{x_n}\ket{\phi_n}$, and $T_x\ket{\phi'} = T_{x_n}\ket{\phi'_n}$. Since the equality
\begin{equation*}
|\braket{\phi}{\phi'} - \bra{\phi}T^{\dagger}_{x}T_{x}\ket{\phi'}|
= \left|\sum_{i=1}^{n}(\braket{\phi_i}{\phi'_i} - \bra{\phi_i}T^{\dagger}_{x_i}T_{x_i}\ket{\phi'_i})\right|
\end{equation*}
holds, Eq.(\ref{phi-reduction}) helps us deduce
\begin{eqnarray*}
|\braket{\phi}{\phi'} - \bra{\phi}T^{\dagger}_{x}T_{x}\ket{\phi'}|
&\leq&
 \sum_{i=1}^{n}|\braket{\phi_i}{\phi'_i} - \bra{\phi_i}T^{\dagger}_{x_i}T_{x_i}\ket{\phi'_i}| \\
&=& \frac{1}{2}\sum_{i=1}^{n} \left[ (\| \qubit{\phi_i}\| ^2 - \| \qubit{\phi_{i+1}}\| ^2)  + (\| \qubit{\phi'_i}\| ^2 - \| \qubit{\phi'_{i+1}}\| ^2) \right] \\
&=& \frac{1}{2} \left[ (\| \qubit{\phi_1}\| ^2 - \| \qubit{\phi_{n+1}}\| ^2)  + (\| \qubit{\phi'_1}\| ^2 - \| \qubit{\phi'_{n+1}}\| ^2)\right].
\end{eqnarray*}
The desired claim immediately follows from the fact that $\qubit{\phi_{n+1}} = T_{x}\qubit{\phi}$ and  $\qubit{\phi'_{n+1}} = T_{x}\qubit{\phi'}$.
\end{proof}


Now, we are ready to prove Lemma \ref{norm-property}. In the following proof of this lemma, let $\psi = (\qubit{\phi},\gamma_1,\gamma_2)$ and $\psi' = (\qubit{\phi'},\gamma'_1,\gamma'_2)$ be any vectors in $\YY_{E_Q}$ with $\gamma_1,\gamma_2,\gamma'_1,\gamma'_2\in[0,1]$. The operator $\hat{T}_x$ is the functional composition $\hat{T}_{x_n}\hat{T}_{x_{n-1}}\cdots \hat{T}_{x_2}\hat{T}_{x_1}$, where $x=x_1x_2\cdots x_n$.

\ms
\n{\bf Proof of Lemma \ref{norm-property}(\ref{diff-estimate}):\hs{3}}
A simple calculation shows
\begin{eqnarray*}
\lefteqn{\| \qubit{\phi}-\qubit{\phi'}\| ^2 - \|  T_{x}(\qubit{\phi}-\qubit{\phi'})\| ^2
\;\;=\;\; (\| \qubit{\phi}-\qubit{\phi'}\| ^2 - \|  T_{x}\qubit{\phi}- T_{x}\qubit{\phi'}\| ^2}  \hs{5} \\
&=& (\| \qubit{\phi}\| ^2 - \|  T_{x}\qubit{\phi}\| ^2) + (\| \qubit{\phi'}\| ^2 - \|  T_{x}\qubit{\phi'}\| ^2)
+ (\bra{\phi}T^{\dagger}_{x}T_{x}\ket{\phi'} + \bra{\phi'}T^{\dagger}_{x}T_{x}\ket{\phi} - \braket{\phi}{\phi'} - \braket{\phi'}{\phi}) \\
&\leq& (\| \qubit{\phi}\| ^2 - \|  T_{x}\qubit{\phi}\| ^2) + (\| \qubit{\phi'}\| ^2 - \|  T_{x}\qubit{\phi'}\| ^2)
+ |\bra{\phi}T^{\dagger}_{x}T_{x}\ket{\phi'} - \braket{\phi}{\phi'} | + \left|\bra{\phi'}T^{\dagger}_{x}T_{x}\ket{\phi} - \braket{\phi'}{\phi}\right|.
\end{eqnarray*}
Combining the above inequality with Claim \ref{useful-formulas}(\ref{diff-phi}), we then  obtain the desired consequence.
\qed

\ms
\n{\bf Proof of Lemma \ref{norm-property}(\ref{upper-bound}):\hs{3}}
By applying $\hat{T}_{x_1}$ to $\psi$, we can obtain  $\hat{T}_{x_1}\psi = (T_{x_1}\qubit{\phi}, ( \gamma_1^2+ \alpha_1 )^{1/2},  ( \gamma_2^2+ \beta_1 )^{1/2} )$. Similarly, the application of $\hat{T}_{x_1x_2}$ to $\psi$ leads to $\hat{T}_{x_1x_2}\psi = (T_{x_1x_2}\qubit{\phi}, (  \gamma_1^2 +\sum_{i=1}^{2}\alpha_i )^{1/2}, ( \gamma_2^2+\sum_{i=1}^{2}\beta_i )^{1/2} )$. By continuing this process up to $n$, we then  obtain
\begin{equation*}
\hat{T}_{x}\psi = \left( T_{x}\qubit{\phi}, \sqrt{ \gamma_1^2+\sum_{i=1}^{n}\alpha_i }, \sqrt{ \gamma_2^2 +\sum_{i=1}^{n}\beta_i } \;\; \right).
\end{equation*}
Similar reasoning shows that $\hat{T}_{x}\psi' =(T_{x}\qubit{\phi'}, ((\gamma'_1)^2+\sum_{i=1}^{n}\alpha'_i)^{1/2}, ((\gamma'_2)^2 +\sum_{i=1}^{n}\beta'_i)^{1/2} )$.

Now, let us consider the norm of $\hat{T}_{x}\psi - \hat{T}_{x}\psi'$.
Note that $\hat{T}_{x}\psi - \hat{T}_{x}\psi'$ equals $(T_x(\ket{\phi}-\ket{\phi'}), (\gamma_1^2+\sum_{i}\alpha_i)^{1/2} -  ((\gamma'_1)^2+\sum_{i}\alpha'_i)^{1/2}, (\gamma_2+\sum_{i}\beta_i)^{1/2} -  ((\gamma'_2)^2+\sum_{i}\beta'_i)^{1/2})$.
The value $\| \hat{T}_{x}\psi - \hat{T}_{x}\psi'\| ^2$ therefore equals
\begin{equation}\label{T-minus-T}
\|  T_{x}(\qubit{\phi}-\qubit{\phi'})\| ^2
+ \left( \sqrt{ \gamma_1^2+\sum_{i}\alpha_i } -
 \sqrt{ (\gamma'_1)^2+\sum_{i}\alpha'_i } \right)^2
+ \left( \sqrt{ \gamma_2^2+\sum_{i}\beta_i } -
 \sqrt{ (\gamma'_2)^2+\sum_{i}\beta'_i }  \right)^2.
\end{equation}
Recall that $\gamma_1,\gamma_2,\gamma'_1,\gamma'_2\geq0$.
The equality Eq.(\ref{T-minus-T}) together with Claim \ref{useful-formulas}(\ref{P-acc-upper}) then implies
\begin{eqnarray*}
\| \hat{T}_{x}\psi - \hat{T}_{x}\psi'\| ^2
&\leq& \nonumber \|  T_{x}(\qubit{\phi}-\qubit{\phi'})\| ^2
+ (\gamma_1-\gamma'_1)^2 + (\gamma_2-\gamma'_2)^2 \\
&& \label{T-upper-estimate} \hs{10} + \sum_{i}\| P_{acc}U_{x_i}(\ket{\phi_i}-\ket{\phi'_i})\|^2 + \sum_{i}\| P_{rej}U_{x_i}(\ket{\phi_i}-\ket{\phi'_i})\|^2.
\end{eqnarray*}
From the above inequality with the help of Claim \ref{useful-formulas}(\ref{phi-phi-prime}), we conclude
\begin{eqnarray}
\|\psi -\psi'\|^2
&=&  \nonumber  \|\qubit{\phi}-\qubit{\phi'}\|^2 + \left( \gamma_1 -\gamma'_1\right)^2 + \left( \gamma_2 -\gamma'_2\right)^2 \\
&=&  \nonumber \|T_x(\qubit{\phi}-\qubit{\phi'})\|^2 + \left( \gamma_1 -\gamma'_1\right)^2 + \left( \gamma_2 -\gamma'_2 \right)^2    \\
&& \label{psi-minus-phi-prime} \hs{10} + \sum_{i}\| P_{acc}U_{x_i}(\ket{\phi_i}-\ket{\phi'_i})\|^2 + \sum_{i}\| P_{rej}U_{x_i}(\ket{\phi_i}-\ket{\phi'_i})\|^2 \\
&\geq& \nonumber \left\|\hat{T}_x\psi - \hat{T}_x\psi' \right\|^2.
\end{eqnarray}
\qed

\ms
\n{\bf Proof of Lemma \ref{norm-property}(\ref{lower-bound}):\hs{3}}
A simple application of  Claim \ref{useful-formulas}(\ref{P-acc-lower}) to Eq.(\ref{T-minus-T}) leads to the inequality:
\begin{eqnarray*}
\left\| \hat{T}_{x}\psi - \hat{T}_{x}\psi'\right\| ^2
&\geq& \|  T_{x}(\qubit{\phi}-\qubit{\phi'})\| ^2  + \left( \gamma_1 -\gamma'_1\right)^2 + \left( \gamma_2 -\gamma'_2\right)^2 \\
&& \hs{5} +\sum_{i=1}^{n} \| P_{acc}U_{x_i}(\ket{\phi_i}-\ket{\phi'_i})\|^2
 + \sum_{i=1}^{n} \| P_{rej}U_{x_i}(\ket{\phi_i}-\ket{\phi'_i})\|^2  \\
&& \hs{5} -  \left( \sum_{i=1}^{n}(\alpha_i+\alpha'_i) + \sum_{i=1}^{n}(\beta_i+\beta'_i) \right)
-2\sqrt{2} \left( \sqrt{\sum_{i}(\alpha_i+\alpha'_i)} + \sqrt{\sum_{i}(\beta_i+\beta'_i)} \right).
\end{eqnarray*}
Note that $\sqrt{a+b}+\sqrt{c+d}\leq \sqrt{2(a+b+c+d)}$ holds for any $a,b,c,d\geq0$.
Using this inequality together with Eq.(\ref{psi-minus-phi-prime}), the value $\| \hat{T}_{x}\psi - \hat{T}_{x}\psi'\| ^2$ is further lower-bounded as
\begin{equation*}
\left\| \hat{T}_{x}\psi - \hat{T}_{x}\psi'\right\| ^2
\geq \| \psi - \psi' \|^2
-   \sum_{i=1}^{n}(\alpha_i+\beta_i) - \sum_{i=1}^{n}(\alpha'_i+\beta'_i)
- 4 \sqrt{ \sum_{i}(\alpha_i+\alpha'_i) +  \sum_{i}(\beta_i+\beta'_i) }.
\end{equation*}
Finally, we apply Claim \ref{useful-formulas}(\ref{phi}) to the above inequality and conclude
\begin{eqnarray*}
\left\| \hat{T}_{x}\psi - \hat{T}_{x}\psi'\right\| ^2
&\geq& \| \psi - \psi' \|^2
- \left( \|\ket{\phi}\|^2 - \| T_{x}\ket{\phi} \|^2 \right)  - \left(  \|\ket{\phi'}\|^2 - \| T_{x}\ket{\phi'} \|^2 \right) \\
&& \hs{15} - 4  \sqrt{ \left( \|\ket{\phi}\|^2 - \| T_{x}\ket{\phi} \|^2 \right)   +  \left(  \|\ket{\phi'}\|^2 - \| T_{x}\ket{\phi'} \|^2 \right) }.
\end{eqnarray*}
\qed

\bs
\paragraph{\bf Acknowledgements}
The author is grateful to anonymous reviewers for useful comments and, moreover, providing him with additional references noted in Section \ref{sec:introduction}. He also thanks Marcos Villagra for a pleasant discussion on reversible finite automata.

\let\oldbibliography\thebibliography
\renewcommand{\thebibliography}[1]{%
  \oldbibliography{#1}%
  \setlength{\itemsep}{1pt}%
}
\bibliographystyle{plain}

\end{document}